\newcommand{\bigparen}[1]{\left( #1 \right)}
\newcommand{\bigbraces}[1]{\left\{ #1 \right\}}
\newcommand{\realnum}{\mathbb{R}}
\newcommand{\argmax}{\operatornamewithlimits{arg\,max}}
\setlist[itemize]{noitemsep, topsep=0pt}
\newtheorem{theorem}{Theorem}
\newtheorem{lemma}{Lemma}
\newtheorem{corollary}{Corollary}[theorem]
\newtheorem{proposition}{Proposition}
\theoremstyle{definition}
\newtheorem{definition}{Definition}
\theoremstyle{definition}
\newtheorem{example}{Example}
\title{Price Competition in Linear Fisher Markets: \\Stability, Equilibrium and Personalization}
\date{}
\author{Juncheng Li}
\author{Pingzhong Tang}
\affil{Tsinghua University}
\begin{document}
	
\maketitle

\begin{abstract}
	Linear Fisher market is one of the most fundamental economic models.
	The market is traditionally examined on the basis of individual's price-taking behavior.
	However, this assumption breaks in markets such as online advertising and e-commerce, where several oligopolists dominate the market and are  able to compete with each other via strategic actions.
	Motivated by this, we study the price competition among sellers in linear Fisher markets.
	From an algorithmic game-theoretic perspective,
	we establish a model to analyze  behaviors of buyers and sellers that are driven by utility-maximizing purposes and also constrained by computational tractability.
	The main economic observation is the role played by personalization: the classic benchmark market outcome, namely competitive equilibrium, remains to be a steady-state if every buyer must be treated ``equally''; however, sellers have the incentive to personalize, and as a result the market would become more unpredictable and less efficient.
	In addition, we build a series of algorithmic and complexity results along the road to justify our modeling choices and reveal market structures.
	We find interesting connections between our model and other computational problems such as stable matching, network flow, etc.
	We believe these results and techniques are of independent interest.
\end{abstract}	

\section{Introduction}

Fisher market (and the more general Arrow-Debreu market \citep{arrow1954existence,mckenzie1954equilibrium}) is one of the most foundational economic models.
It serves as an excellent first approximation to those real-world markets that consist of a \textit{large} number of buyers and sellers such that no individual has the power to effectively influence prices.
In addition, due to the many nice properties of the market solution, named \textit{competitive equilibrium}, it is also widely implemented in markets where there is only a \textit{single} benevolent seller/owner who wants to allocate items to buyers/recipients in a fair and envy-free way \citep{foley1967resource,varian1974equity}.
The model and its idea keeps receiving attention in the literature as it is applied and adapted to various situations, such as fair division of indivisible items \citep{budish2011combinatorial,cole2015approximating,cole2017convex,babaioff2019fair} or a continuum of items \citep{gao2021infinite}, recommender systems \citep{kroer2019scalable,kroer2019computing,kroer2023fair} and online advertising auction markets \citep{conitzer2022pacing,conitzer2022multiplicative}.

In spite of the broad applicability of the model and the appeal of competitive equilibrium, the price-taking assumption fails to hold in markets like oligopolies, where the items are controlled by a group of sellers who are powerful enough to set prices arbitrarily for items of its own.
In these industries, the market outcome is not only driven by the omnipotent power of overall supply and demand, but also the revenue-seeking behaviors of individual sellers.
Seller competition is an ongoing topic in the literature of game theory, from the classic models of competition on the basis of prices and production quantities (see, e.g., Chapter 6 of the book by  \citet{dixon2017surfing}), to the more complex ones like the model by \citet{selcuk2017auctions} and \citet{paes2020competitive} where sellers compete for buyers' money through selecting suitable trading mechanisms.
We cannot review the vast literature exhaustively here, but to the best of our knowledge, no existing work puts seller competition in the context of (or close to) the fundamental Fisher market model.
In this paper, we initiate the study of this problem and approach it from both \textit{game-theoretic} and \textit{computational} perspectives.


Starting with the work by \citet{megiddo1991total}, theoretical computer scientists began to notice an important yet previously overlooked aspect of economic models -- the computation of solution concepts.
If an equilibrium cannot be computed efficiently, it would be hard, if not impossible, to implement for the market owner and less likely to emerge as the outcome of social learning.
In a linear Fisher market, the competitive equilibrium is not only poly-time solvable  \citep{eisenberg1959consensus,orlin2010improved,shmyrev2009algorithm}, but could also be reached through well-motivated fast-converging dynamics \citep{cole2008fast,zhang2011proportional}.
It is therefore of great computational soundness.
On the other hand, if some equilibrium concept is shown to be hard to achieve, those economically weaker but computationally tractable solutions would become more favorable.\footnote{One such example is \textit{coarse correlated equilibrium}, which can be reached through decentralized no-regret learning in a large variety of games where Nash equilibrium is hard to compute.}
In this paper, we propose a new price competition model over linear Fisher markets, and characterize the set of market outcomes that are both economically stable and computationally feasible.
Our model incorporates several features absent in traditional formulations, and in particular the role of buyer \textit{personalization} is examined.
The market is more well-behaved without personalization, but sellers have the incentive to do so, and the resulting market might become more unpredictable and less efficient.
Besides revealing market structures, our computational results also demonstrate some interesting relationships between our model and problems such as stable matching and network flow.
These new connections and techniques may be of independent interest as well.

\subsection{Markets Driven by Price Competition.}
A linear Fisher market consists of a set of divisible items and a set of budget-constrained buyers with linear utilities.
There is a long line of research on Fisher markets that sits at the intersection of theoretical computer science and economics.
Our work is of great importance to the understanding of the price competition among sellers in linear Fisher markets.
One of the real-world examples is online advertising: for instance, YouTube Shorts, Instagram Reels and TikTok all feature short videos and compete for very similar audiences.
In theory, researchers also show that online advertising markets can be well captured by the linear Fisher market model \citep{conitzer2022pacing,conitzer2022multiplicative} (but they study a single advertising platform rather multiple competing ones).
Below we will introduce the key features brought by the price-setting power of sellers, and the rationale behind them.

In Fisher markets,
buyers are traditionally assumed to be \textit{supply-unaware} in the sense that they demand and only demand items with the maximum \textit{bang-per-bucks} (the value acquired for each unit of money spent on the item) regardless of supply.
In real-world markets, however, people seldom let their money sit idle.
So in our model we let trade happen whenever possible, even if the bang-per-buck is not optimal.

In a price-taking market, the role played by sellers is negligible, since a change of item ownership never affects the equilibrium outcome.
In our model, besides the power to set prices, sellers could make decisions based on the aggregated behaviors of buyers on all the items in their possession.
Consequently, item-ownership becomes influential.

Traditionally there is no buyer personalization in Fisher markets, which is a natural result of information symmetry and perfect competition.
With the advent of big data, however, sellers are now able to literally personalize for every individual.
E.g., in online advertising, sellers (such as Google and Meta) make the definitive decisions on pricing and allocation. Trade happens in real time and no arbitrage is possible. The items sold (opportunities to place an ad) are themselves personalized.
This motivates us to observe the implication of personalization and it turns out to be of great consequence.

As the main focus of this paper is the competition among sellers,
the complex strategic behaviors of buyers are not considered, except that each buyer must be served a bundle of items optimal in hindsight.\footnote{Alternative models might include, e.g., the allowance of arbitrage or buyers forming coalitions, which are interesting directions for future works.}
Previously researchers \citep{adsul2010nash,branzei2014fisher} studied a game model on the opposite side, where buyers can \textit{misreport} their valuations and the market owner will allocate items via the competitive equilibrium given the reported valuations. This is hard to achieve in markets where it is the seller who has the information advantage.
For instance, in online advertising, valuations are predicted based on the data and machine learning model in the control of the seller, who would typically hide most of the detailed information from buyers for both privacy and profit considerations.

\subsection{Our Results}

We propose a model over linear Fisher market with additionally a set of sellers, each owning a subset of items of the underlying market.
Each seller's strategy consists of two parts: \textit{prices} and \textit{priorities}.
Prices are allowed to be personalized, and the priority is a total preorder dictating who the seller prefers to sell an item to (independent of prices).\footnote{
	As an illustration, consider e-commerce platforms. Though each item comes with a posted price,  it can be personalized through coupons, and priorities can be implemented (implicitly) by the recommendation algorithm.}
A myopic seller may use a \textit{natural} priority that fully agrees with the pricing, which represents a greedy revenue-seeking behavior.
Possibly surprisingly, we will show that a natural priority may be suboptimal, so arbitrary priorities constitute a necessary strategic component in general.
We do not incorporate complex buyer behaviors in our model; the only requirement is that they should be served an optimal bundle \textit{in hindsight}.


Given sellers' prices and priorities, we impose that the trade outcome (also referred to as the allocation) should satisfy \textit{stability}, which dictates that no pair of a buyer and a seller has the incentive to deviate and form a new trade such that both are made strictly happier (w.r.t. valuation for buyer and priorities for seller).
Besides its intuitive economic implication, we show that a stable allocation always exists and is strongly poly-time computable.
It is therefore justified as a requisite of the trade outcome.
A stable allocation is generally not unique, which can be interpreted as an inherent uncertainty of the market and is interesting in its own right.

We proceed to analyze different categories of markets characterized by the degree to which the market is personalized.

\textbf{Non-personalized pricing with natural priorities.}
We first consider the case where no personalization is allowed, and find an interesting ``one-sided'' tractability result: it is strongly poly-time solvable to find the (market-wise) revenue-maximizing stable allocation, while it is APX-hard to minimize it.
Moreover, the set of revenue-optimal\footnote{In this paper, ``revenue-optimal'' is used interchangeably with ``revenue-maximizing'' as it is typically used.} stable allocations coincides perfectly with the set of seller-wise Pareto-optimal ones.
Therefore we further impose that the trade outcome should satisfy revenue-optimality.
Nonetheless, the issue of multiplicity remains.
As a result, we adopt a maximin approach and study the \textit{maximin equilibrium} where no seller could \textit{guarantee} itself a higher revenue after deviation regardless of the allocation selection.
We show that the competitive equilibrium always forms a maximin equilibrium, but is generally not the unique one.
We also explore properties of those non-competitive equilibria via case study, which include market non-clearance, ownership awareness, and what kind of sellers may gain or lose in the competition.

\textbf{Non-personalized pricing with arbitrary priorities.}
Things become complicated even when just the priorities are personalized.\footnote{In reality, sometimes the pricing is expected to be non-personalized, while a personalization of priority is possibly easier to emerge.
	For instance, in ride-hailing, the pricing depends only on buyer-oblivious features like the driver/vehicle and the route, but the platform (such as Uber and Didi) controls the assignment of drivers to passengers, which creates the kind of personalized priorities we study here.}
It is now APX-hard to find either the revenue-maximizing or the revenue-minimizing stable allocation, which indicates that the market outcome is relatively unpredictable.
We also give an example where multiplicity is not an issue but there is a profitable deviation for a seller at the competitive equilibrium.
This shows the incentive of a seller to personalize even just the priorities, and the inefficiency it might bring.

\textbf{Personalized pricing.}
Interestingly, with personalized pricing, we identify a natural subset of prices and priorities that could lead to a \textit{unique} stable allocation, as long as there is no tie involved (and priorities should be natural).
However, the APX-hardness remains in general, which implies that different tie-breaking choices may lead to significant market changes.
Furthermore, we give an example where, given any prices and priorities, there always exists a profitable deviation for some seller.
These evidence shows the market is highly unpredictable.

\textbf{Competitive pricing for duopoly.}
Delving deeper into the aforementioned example, we find that it is partly due to the static nature of our  model that no equilibrium exists.
Hence we deviate a little from the main model to incorporate some dynamic feature, i.e., sellers are now allowed to retaliate the opponent by matching the deviated prices (more precisely the bang-per-bucks).
The model is defined only for duopoly as the price-matching behavior is more meaningful with two sellers.
In general, the market outcome in this new model is still unpredictable, so we focus on a specific class of markets that share the same underlying Fisher market with the no-equilibrium example while having different item ownership.
We show that the competitive equilibrium will form a stationary structure when the market is relatively \textit{balanced}, while less efficient outcomes will appear as sellers become more different from each other.

\subsection{Computational Techniques and Structures}

In the main text we will focus on the analysis of the market model.
Here we give an overview of the techniques used in and the insights gained from our computational results,
and refer readers to the appendix for the details.

The backbone of the algorithm computing a stable allocation (Theorem \ref{thm:gale_shapley} and Appendix \ref{app:gale_shapley}) is adapted from the classic Gale-Shapley algorithm \citep{gale1962college}: here buyers propose to items in decreasing order of preferences, and items reject buyers with lower priorities.
However, a naive adaption to the fractional case will result in excessively long or even infinite loops.
Much efforts are thus dedicated to find these loops and resolve them efficiently.
The definition of stable allocation actually bears some resemblance to stable matching, but neither of these two concepts contains the other as a special case: in our model the preference is connected with money, while in stable matching there is no concept of budgets.

An interesting new connection is established between our non-personalized  market and network flow.
A revenue-optimal stable allocation (with arbitrary pricing) can be encoded in a flow network, and computed with a particular implementation of the Edmonds-Karp algorithm \citep{edmonds1972theoretical}
such that certain economic property is maintained for  every intermediate flow throughout the execution (Theorem \ref{thm:one_sided_tractability} and Appendix \ref{app:network_flow_and_maximal_matching}).
Some proofs (including those for personalized markets) can be viewed intuitively as examining how the flow/allocation is dragged towards a certain direction by the gravity of stability.
Prior to our work, network flow has been used in Fisher markets, albeit in ways quite different from ours.
\citet{vegh2012strongly} and \citet{vegh2014concave} encode the computation of competitive equilibrium in linear Fisher markets as a single-shot generalized minimum-cost flow and generalized maximum flow problem, respectively.
\citet{devanur2008market} compute the competitive equilibrium by solving a series of maximum flows corresponding to different price vectors.
In their model, buyers only demand items they prefer the most and items will never become under-demanded during the execution of the algorithm.

The ``one-sided'' tractability of revenue optimization in non-personalized pricing markets resembles the case for maximal/minimal matching, and we  show that their computational structures are indeed closely related (Lemma \ref{lemma:minimum_maximal_matching} in Appendix \ref{app:network_flow_and_maximal_matching}).

Most of our APX-hardness results on revenue maximization/minimization also imply APX-completeness, since the revenues of two stable allocations differ by at most a factor of 2 (which is tight; see Appendix \ref{app:2approximation}) and a stable allocation is poly-time computable.

\section{Preliminaries}

A \textbf{Fisher market} consists of: (1) a set $J$ of $m$ divisible items, each with a supply (or quantity) normalized to 1; (2) a set $I$ of $n$ buyers; (3) a budget $B_i$ for each buyer $i$; (4) a utility function $u_i: [0, 1]^{n \times m} \rightarrow \realnum$ for each buyer $i$, which takes the allocation of the whole market $x \in [0, 1]^{n \times m}$ as the input (budgets have no value for buyers). In this paper, there is no externality and $u_i$ depends only on the allocation received by buyer $i$ itself.
Note that traditionally there is no explicit mention of seller or item-ownership.

A Fisher market is \textbf{linear}, if $u_i$ has the form
$
u_i(x) =
\sum_j v_{i, j} x_{i, j},
$
where $x_{i, j}$ is the fraction of item $j$ allocated to buyer $i$, and $v_{i, j} \geq 0$ is the value of item $j$ to buyer $i$.
For each item, there is at least one buyer $i$ with $v_{i, j} > 0$.
Each item will come with a price $p_j$, and buyers will try to acquire as much utility as possible within their budget constraints.

A \textbf{competitive equilibrium} $(p^*, x^*)$ of a linear Fisher market consists of a pricing vector (or simply pricing or prices) $p^* \in [0, +\infty)^m$ and an allocation $x^* \in [0, 1]^{n \times m}$ satisfying:
\begin{itemize}
	\item Buyer optimization or envy-freeness:
	$
		x_i^* \in \argmax_{x_i \in [0, +\infty)^{m}: \sum_j p^*_j x_{i, j} \leq B_i} \sum_j v_{i, j} x_{i, j};
	$
	\item Market clearing:
	$
		\sum_i x_{i, j}^* = 1, \forall j.
	$
\end{itemize}

Later we will study equilibrium concepts other than the competitive equilibrium. For convenience, we will refer to the allocation/pricing at the corresponding equilibrium as the \textit{equilibrium allocation/pricing} if its meaning is unambiguous from the context. Otherwise we will explicitly associate the quantity with the equilibrium name, e.g., competitive equilibrium allocation/pricing.

Below we state some well-known results that will be frequently used:
\begin{itemize}
	%
	\item An equilibrium always exists.
	Equilibrium pricing $p^*$ is unique.
	There may be multiple equilibrium allocations, but buyers' utilities are identical across them.
	\item Buyers always deplete their budgets at equilibrium.
	\item Given prices $p$, the \textbf{bang-per-buck} of item $j$ for buyer $i$ is $\frac{v_{i, j}}{p_j}$, the value acquired for each unit of money spent on buying item $j$.
	At equilibrium, every buyer only gets items with the \textit{highest} bang-per-buck, i.e., if
	$
		x_{i, j} > 0$,
		then
		$
		\frac{v_{i, j}}{p_j}
		\geq
		\frac{v_{i, j'}}{p_{j'}}, \forall j'.
	$
	Note that this requirement will be relaxed in our price competition model.
\end{itemize}

\section{Fixed Pricing Markets and Stability}
\label{sec:stability}

\begin{definition} \label{def:fixed_pricing_game}
	A \textbf{fixed pricing market} is defined over a linear Fisher market with a set of sellers, of which each seller $k$ owns a disjoint subset $J_k$ of items.
	Trade proceeds with two stages:
	
	(1, pricing and prioritizing) Each seller $k$ sets a \textbf{personalized price} $p_{i, j}$ for each buyer $i$ and each item $j \in J_k$.
	Besides, each seller also chooses a total preorder $\geq_j$, called \textbf{priority}, over buyers for each item $j \in J_k$.
	A priority is \textbf{natural w.r.t. prices $p$}, if $i \geq_j i'$ whenever $p_{i, j} \geq p_{i', j}$.
	When we focus on \textbf{non-personalized} pricing, we mean the price (not priority) of an item should be equal across buyers.
	Given the pricing, call the total preorder $\geq_i$ that satisfies $j \geq_i j'$ whenever $\frac{v_{i, j}}{p_{i, j}} \geq \frac{v_{i, j'}}{p_{i, j'}}$ the \textbf{preference} of buyer $i$ over items.
	We will also say an item \emph{prioritizes} some buyer or a buyer \emph{prefers} some item (over the other).
	
	(2, trading) Trade happens according to some allocation $x$, where $x_{i, j} \in [0, 1]$ is the fraction of item $j$ allocated to buyer $i$, and $\sum_i x_{i, j} \leq 1, \forall j$.
	The utility of each buyer is the total valuation it acquires, and the revenue of each seller is the total payment it receives.
\end{definition}


Fixed pricing markets set the stage for the analysis of seller competition in this paper.
The definition is mostly straightforward: sellers post prices and trade happens.\footnote{Our model does not only capture markets with posted prices. E.g., in online advertising markets with second price auction and auto-bidders, prices are determined in real time by both buyer competition (via second price) and the platform (via instruments such as boosts and reserve prices; see, e.g., the work by \citet{balseiro2021robust}). In addition, auto-bidders using the multiplicative pacing strategy behave in precisely the same way as in our model, i.e., they buy items in decreasing order of bang-per-bucks until depleting budgets \citep{conitzer2022multiplicative}.}
A natural priority represents the seller's \textit{greedy} revenue-optimizing behavior to sell an item with a higher price.
Preference represents a similar rationale on the buyer side.
Later we will show that a natural priority may be myopic and sub-optimal in the long run.
Therefore it is economically significant to allow arbitrary priority.

In Definition \ref{def:fixed_pricing_game}, we do not impose any condition on the trade outcome $x$.
Below we will introduce a minimal requirement on $x$ (given pricing $p$), and show that it is both economically and computationally reasonable.

\begin{definition}
	Given an allocation $x$, let $r_i = \min_{j: x_{i, j} > 0} \frac{v_{i, j}}{p_{i, j}}$ be the minimum bang-per-buck among items bought by buyer $i$.
	$x$ is \textbf{stable} for buyer $i$, if it satisfies:  (1) budget-feasible: $\sum_{j} x_{i, j} p_{i, j} \leq B_i$; (2) if $x_{i, j} > 0$, then $v_{i, j} > 0$; (3) if $x_{i, j} < 1$, then either (3.a) the bang-per-buck of $j$ is no larger than $r_i$ and $i$'s budget is already depleted, or (3.b) item $j$ is fully allocated and $i' \geq_j i$ for every other buyer $i'$ with $x_{i', j} > 0$.
	$x$ is stable if it is stable for all buyers.
	
	Given prices, priorities and an allocation, a fraction of item $j$ is said to be \textbf{available} to buyer $i$, if it is either unsold, bought by buyer $i$ itself, or bought by some buyer $i'$ with $i >_j i'$ (i.e., $i \geq_j i'$ and $i' \not \geq_j i$).
	An item is available to buyer $i$, if a positive fraction of it is available to buyer $i$.
\end{definition}

From a single buyer's perspective, given its available items, the optimal bundle is the solution of a fractional knapsack problem, i.e., buying available items in decreasing order of bang-per-bucks until depleting its budget or buying all available items.
Stability ensures that each buyer get its optimal bundle in hindsight.
As for sellers, if an item is available to buyer $i$, its owner will happily sell (a fraction of) it to buyer $i$ (by withholding the item from buyers with strictly lower priorities, if necessary).
If an allocation is not stable, there would be a pair of buyer and seller who would like to deviate from it and make themselves strictly happier with respect to preference and priority, respectively.
Therefore stability is a necessary economic requirement.
Furthermore, a stable allocation always exists and is poly-time computable.


\begin{theorem}[Existence and computation of stable allocation] \label{thm:gale_shapley}
	A stable allocation always exists and can be computed in strongly poly-time.
	If all inputs are rational numbers, there exists a poly-time computable stable allocation where all entries are also rational numbers.
\end{theorem}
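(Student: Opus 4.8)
The plan is to give a constructive proof via a fractional adaptation of the Gale--Shapley deferred-acceptance procedure, from which both existence and computability follow at once. I maintain a tentative allocation $x$ together with, for each buyer $i$, a record of how its budget is currently spent. Buyers ``propose'' money to items in decreasing order of bang-per-buck $\frac{v_{i,j}}{p_{i,j}}$ (skipping items with $v_{i,j}=0$ and never committing more than $B_i$ in total), and each item, whenever the fractions proposed to it exceed its unit supply, tentatively retains the mass coming from the highest-priority buyers under $\geq_j$ and rejects the rest, splitting ties within a priority class proportionally. A buyer whose money is rejected from an item reroutes it to the next item on its preference list. A fixed point of this process is precisely a stable allocation: a buyer with leftover budget, or with an unsaturated edge of bang-per-buck above $r_i$, would still be proposing, which is exactly the negation of (3.a), and an item that is full yet retains a strictly lower-priority buyer over an active proposer would still be rejecting, which is the negation of (3.b).

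For existence and termination I would track, for each buyer $i$ and each threshold $t$, the potential $\Phi_i(t)$ equal to the total money $i$ spends on edges of bang-per-buck at least $t$. Initializing every buyer to demand its full budget $B_i$ on its single top-bang-per-buck item, the only subsequent moves are rejections, each of which reroutes money from an edge of bang-per-buck $b_1$ to an edge of bang-per-buck $b'\le b_1$; a direct case check on whether $t\le b'$, $b'<t\le b_1$, or $t>b_1$ shows $\Phi_i(t)$ is non-increasing for every $t$ and every $i$. Bounded below by $0$, the vector of these monotone step functions converges, and its limit is a fixed point, hence a stable allocation by the equivalence above. Since every quantity manipulated is a rational combination of the inputs $B_i, v_{i,j}, p_{i,j}$, the limiting allocation is rational whenever the inputs are, which gives the final claim once termination is made effective.

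The main obstacle is turning this scheme into a strongly polynomial algorithm: a literal simulation in infinitesimal steps can run forever, because rejected money may circulate around a cycle of buyers and items -- buyer $A$ rejected from $j_1$ reroutes to $j_2$, displacing $B$, who reroutes to $j_3$, displacing $A$ again -- so that each individual step makes only infinitesimal progress. To resolve this I would run an event-driven simulation. Between structural events the \emph{configuration} is fixed, namely which edge is each buyer's frontier (its lowest retained bang-per-buck), which items are saturated, and how the priority classes partition each item's proposers; and within a fixed configuration the money flow is governed by a linear system. So rather than simulate the cyclic rerouting I solve that system directly for the configuration's equilibrium money distribution and advance $x$ in one shot to the next event. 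Events come in a few discrete types: a buyer depleting its budget, an item becoming (un)saturated, a buyer's frontier crossing one of its at most $m$ distinct bang-per-buck values, or an item's proposer set changing priority class.

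Finally I would bound the event count by a polynomial in $n$ and $m$. The monotonicity of $\Phi_i(t)$ forces each buyer's frontier to move only downward through its $\le m$ bang-per-buck levels, and within one level the configuration can change only polynomially often before a genuine saturation or depletion event is forced; charging each event to such a structural change yields a $\mathrm{poly}(n,m)$ bound independent of the numeric magnitudes, exactly what strong polynomiality demands. Each event's linear solve is itself $\mathrm{poly}(n,m)$ and rational-preserving, so the whole procedure runs in strongly polynomial time and outputs a rational stable allocation, establishing both parts of the statement. I expect the delicate point to be the per-level bound on configuration changes, i.e.\ showing the cyclic flows within a fixed level cannot force a super-polynomial number of priority-class reshufflings before an item saturates or a budget is exhausted.
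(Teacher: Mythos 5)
Your approach is the same as the paper's: a fractional deferred-acceptance procedure in which buyers propose money in decreasing order of bang-per-buck, items retain the highest-priority mass, and the cyclic rerouting of rejected money (the obstacle you correctly identify) is fast-forwarded in one shot rather than simulated infinitesimally. The paper's Algorithm 1 does exactly this, detecting a cycle when a proposal pair recurs and jumping directly to the point where either some budget $B_0$ is exhausted or some allocation entry $y_{i_k,j_{k-1}}$ hits zero.

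The genuine gap is the step you yourself flag as ``the delicate point'': you assert, without proof, that within one bang-per-buck level the configuration can change only polynomially often before a saturation or depletion event is forced. This is precisely where the paper's proof does its real work, and it is not a routine verification. The paper first establishes structural facts about the cycles that arise: when an item is already full and no buyer leaves its temporary set during an update, there is a \emph{unique} buyer whose allocation is displaced, the proposing buyer's budget must be exhausted, and consequently the detected cycle $(i_0,j_0),\dots,(i_\ell,j_\ell)$ contains no repeated buyers or items and all intermediate buyers have zero remaining budget. These facts are what make the one-shot resolution well defined (a single scaling parameter $r$ suffices) and what guarantee that every resolution triggers one of three progress events: a buyer permanently exits the set of buyers with positive budget and proposable items, an item becomes fully allocated for the first time, or a buyer permanently leaves some item's temporary buyer set. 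Each event type is then bounded by $O(n)$, $O(m)$, or $O(nm)$ occurrences, using the monotonicity that proposability is never regained and a budget can only increase when event (c) occurs. Without an argument of this kind, your event-driven scheme could in principle reshuffle priority classes super-polynomially within a level, and strong polynomiality --- a substantive part of the theorem --- is not established. A secondary soft spot: your existence argument via convergence of the potentials $\Phi_i(t)$ does not by itself imply the allocation converges (money can circulate among items at a tied bang-per-buck level while every $\Phi_i(t)$ stays constant); the paper avoids this entirely by proving finite termination directly.
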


The algorithm and proof is given in Appendix \ref{app:gale_shapley}.
Stability does not always lead to a unique allocation.
Nonetheless, the multiplicity captures an uncertain aspect of the market, into which we will probe further in later sections.

\begin{definition}
	Given prices $p$, an allocation $x$ is called \textbf{compatible} for buyer $i$, if $x$ is budget-feasible and each item $j$ with $x_{i, j} > 0$ has a bang-per-buck that is weakly larger than the bang-per-buck of any item that is currently available to it.
	In other words, buyer $i$ gets its optimal bundle over available items with its currently spent money.
	Allocation $x$ is compatible, if it is compatible for all buyers.
\end{definition}

\begin{lemma} \label{lemma:compatible_stable}
	A compatible allocation is stable, if and only if, for each buyer $i$, either $B_i = \sum_j x_{i, j} p_{i, j}$, or it gets all items available to it.
\end{lemma}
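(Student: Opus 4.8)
The plan is to prove the biconditional separately for each buyer $i$ and then take the conjunction, since both stability and compatibility are defined buyer by buyer, with the rest of the allocation treated as fixed when determining which fractions are available to $i$. Budget-feasibility is demanded by both notions and may be taken for granted, while condition (2) of stability (no valueless purchases) is a minor side condition subsumed by the optimal-bundle reading of compatibility, as no valueless item enters an optimal bundle whenever a positive-value item is available. The one structural fact I would extract from compatibility at the outset is a \emph{bang-per-buck domination}: since every item $j$ with $x_{i,j}>0$ satisfies $\frac{v_{i,j}}{p_{i,j}}\geq\frac{v_{i,j'}}{p_{i,j'}}$ for every item $j'$ available to $i$, taking the minimum over the purchased items gives $r_i\geq\frac{v_{i,j'}}{p_{i,j'}}$ for every $j'$ available to $i$. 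This is the lever that makes the first clause of stability condition (3.a) come for free.

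First I would classify each item $j$ with $x_{i,j}<1$ according to whether buyer $i$ already holds the \emph{entire} fraction of $j$ available to it. The available fraction equals the unsold part plus $x_{i,j}$ plus the part sold to buyers $i'$ with $i>_j i'$. If $i$ holds all of it, then the unsold part and the strictly-lower-priority part both vanish, so $j$ is fully allocated and every other holder $i'$ satisfies $i'\geq_j i$; this is exactly stability condition (3.b), which therefore holds automatically. If instead $i$ does \emph{not} hold the entire available fraction, then either some of $j$ is unsold or some is held by a strictly-lower-priority buyer, and in both cases (3.b) fails, so stability must be witnessed by (3.a). Here the bang-per-buck clause $\frac{v_{i,j}}{p_{i,j}}\leq r_i$ holds by the domination above (as $j$ is available), and (3.a) thus reduces to the single demand that $i$'s budget be depleted.

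Combining these observations yields the per-buyer biconditional. If buyer $i$ gets all items available to it, then every $j$ with $x_{i,j}<1$ falls in the first case and is stable via (3.b), so $x$ is stable for $i$ regardless of the budget. Otherwise there is an item in the second case, for which (3.b) is impossible and (3.a) holds if and only if the budget is depleted; hence in this regime $x$ is stable for $i$ exactly when $B_i=\sum_j x_{i,j}p_{i,j}$. Either way, stability for $i$ is equivalent to ``the budget is depleted or $i$ gets all items available to it,'' and conjoining over all buyers gives the lemma. I expect the only real care to lie in the availability bookkeeping of the second step --- precisely identifying the negation of (3.b) with ``$i$ fails to hold some available fraction of $j$'' while keeping the unsold versus strictly-lower-priority fractions straight --- together with dispatching the degenerate cases (a buyer purchasing nothing, or zero values or prices making the ratios ill-defined), which are handled by the usual $r_i=+\infty$ convention for an empty minimum.
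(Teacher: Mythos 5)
Your proof is correct: the paper states Lemma \ref{lemma:compatible_stable} without any proof, treating it as immediate from the definitions, and your per-buyer case analysis --- items whose entire available fraction is already held by $i$ satisfy (3.b) automatically, while for any remaining item (3.b) fails and compatibility's bang-per-buck domination ($r_i \geq v_{i,j'}/p_{i,j'}$ for every available $j'$) reduces (3.a) to the single requirement of budget depletion --- is exactly the intended unpacking. The one loose end is your dismissal of stability condition (2): under the literal definition of compatibility it is not implied for a buyer to whom no positively-valued item is available, but this degenerate corner is inherited from the paper's own definitions (its algorithm only ever allocates items with $v_{i,j}>0$) and does not affect the substance of your argument.
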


Compatibility characterizes a transient ``stable'' state by focusing only on the budgets already spent.
It will be useful in the market analysis.



\section{Non-personalized Pricing Markets with Natural Priorities}
\label{sec:non_personalized_pricing_game}

In this section, we study markets that are \textit{fully non-personalized}: every buyer is treated equally with respect to both pricing and priority.
A fraction of item $j$ is available to buyer $i$ if and only if it is either not allocated, or owned by buyer $i$ itself.
Instability only occurs when there exist a buyer $i_0$ and an item $j_0$ such that $v_{i_0, j_0} > 0, \sum_i x_{i, j_0} < 1$ and it holds that either (1) buyer $i_0$'s remaining budget is not zero or (2) buyer $i_0$ strictly prefers $j_0$ to some other item $j_1$ with $x_{i_0, j_1} > 0$.
Missing proofs in this section are given in Appendix \ref{app:non_personalized_pricing_game}.

\subsection{Seller-side Optimality and Equilibrium Concept}

\begin{center}
	\begin{tabular}{ccc}
		\toprule
		valuation & item 1 (seller 1) &  item 2 (seller 2) \\
		\midrule
		buyer 1 (budget 1) & 1 & 1  \\
		buyer 2 (budget 1) & 0 & 1  \\
		\bottomrule
	\end{tabular}
\end{center}

We begin with the example depicted in the above table.
If $p_1 = p_2 = 1$, it is easy to check that $x_{1, 2} = 1, x_{1, 1} = x_{2, 1} = x_{2, 2} = 0$ forms a stable allocation.
However, it is seller-wise (weakly) Pareto-dominated in revenue by another stable allocation: $y_{1, 1} = y_{2, 2} = 1, y_{1, 2} = y_{2, 1} = 0$.
This motivates us to wonder whether allocations such as $x$ are less likely to occur in general.

To formalize and justify the above intuition, we have the following complexity result and Pareto-efficiency guarantee.\footnote{Corollary \ref{corollary:compatible_extension} is categorized as a corollary since it is a direct consequence of the \textit{proof} of Theorem \ref{thm:one_sided_tractability}.}

\begin{theorem}[One-sided tractability of revenue optimization] \label{thm:one_sided_tractability}
	A stable allocation with maximum revenue can be computed in strongly poly-time.
	On the other side, it is APX-hard to compute the minimum revenue over all stable allocations.
\end{theorem}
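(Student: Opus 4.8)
The plan is to treat the two halves of the statement separately, mirroring the asymmetry it asserts: the maximization side reduces to a single max-flow computation, while the minimization side inherits hardness from minimum maximal matching.

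For the tractable direction, I would encode budget-feasible, value-respecting (partial) allocations as flows in a network $N$ with a source $s$, a node for every buyer $i$, a node for every item $j$, and a sink $t$; put capacity $B_i$ on each arc $s\to i$, capacity $p_j$ on each arc $j\to t$, and an uncapacitated arc $i\to j$ exactly when $v_{i,j}>0$. Under this encoding the value of a flow equals the revenue $\sum_j p_j \sum_i x_{i,j}$ of the corresponding allocation. Two observations then drive the argument. First, every stable allocation is in particular such a flow, so its revenue is at most the value $F$ of a maximum flow; this is the easy upper bound. Second, and this is the crux, I would construct a maximum flow that also corresponds to a stable allocation: run an augmenting-path algorithm in which, at each buyer, flow is always pushed toward the highest-bang-per-buck item still admitting it, maintaining as an invariant that the current flow corresponds to a \emph{compatible} allocation (every buyer's current purchases form a top-bang-per-buck bundle for the money already spent). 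When no augmenting path remains, the flow is maximum and, for each buyer, either the arc $s\to i$ is saturated (budget depleted) or every item it values and can still reach is fully sold; by Lemma \ref{lemma:compatible_stable} the allocation is then stable. Using the Edmonds-Karp shortest-augmenting-path rule \citep{edmonds1972theoretical} keeps the number of augmentations strongly polynomial, and rational capacities yield a rational flow, giving both the strongly-poly and the rationality claims; the revenue of this stable allocation equals $F$ and matches the upper bound, so $F$ is exactly the maximum revenue.

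For the hardness direction, I would give an approximation-preserving (L-)reduction from the minimum maximal matching problem, which is known to be APX-hard. The guiding intuition is that stability is a maximality condition: an item left unsold forces each of its potential buyers to be ``saturated'' (budget spent, or strictly preferring an item that is itself fully sold), exactly as an unmatched edge in a maximal matching must be adjacent to a matched one. Concretely, from a graph $G=(V,E)$ I would build a non-personalized instance with uniform prices in which stable allocations correspond to maximal matchings (equivalently edge dominating sets) of $G$, and the revenue is an affine function of the matching size, so minimizing revenue minimizes the matching. Since a stable allocation is poly-time computable (Theorem \ref{thm:gale_shapley}) and the correspondence preserves gaps, APX-hardness of minimum maximal matching transfers to revenue minimization.

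The main obstacle on the maximization side is the constructive step: proving that augmenting paths can always be chosen so that compatibility survives every augmentation — residual (backward) arcs can drag a buyer's money onto a lower-bang-per-buck item — while the algorithm still terminates at a genuine maximum flow. This is precisely the ``economic property maintained for every intermediate flow'' alluded to earlier, and it is the technical heart. On the minimization side, the delicate point is that allocations are fractional whereas matchings are discrete: I must argue that no fractional stable allocation undercuts the best integral (matching-like) one in revenue, so that the correspondence is tight enough to carry the full APX gap rather than mere NP-hardness.
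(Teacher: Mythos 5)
Your approach is essentially the paper's: the same flow network (source--buyer arcs of capacity $B_i$, item--sink arcs of capacity $p_j$, arcs $i\to j$ iff $v_{i,j}>0$), the same plan of running Edmonds--Karp while maintaining compatibility and invoking Lemma \ref{lemma:compatible_stable} at termination, and the same L-reduction from minimum maximal matching on bipartite graphs for the lower bound. However, the two points you explicitly defer are exactly where all the content of the paper's proof lives, so as written the argument is a correct outline rather than a proof. For the maximization side, the paper closes your ``residual arcs can drag money onto a worse item'' worry as follows (Algorithm \ref{alg:stable_edmonds_karp}, Lemmas \ref{lemma:edmonds_karp_monotonicity} and \ref{lemma:edmonds_karp}): take a \emph{shortest} augmenting path ending in $(i_0,j_0),(j_0,t)$ and reroute its tail to $(i_0,j_1),(j_1,t)$, where $j_1$ is $i_0$'s most-preferred under-allocated item; item-wise monotonicity of $f_{j,t}$ plus a shortest-path contradiction (if compatibility broke, deviating at $i_0$ toward the preferred under-allocated item would yield a strictly shorter $s$--$t$ path) shows compatibility survives every augmentation. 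For the minimization side, the fractional-versus-integral issue you flag is resolved via the integral flow theorem: restricting a fractional stable allocation to the depleted buyers $I'$ and fully-sold items $J'$ gives a fractional matching, which is replaced by an integral maximum matching of at least the same size on $(I',J')$ and then extended to a maximal matching; a budget-accounting argument shows the resulting integral stable allocation has revenue no larger, so the reduction is strict and carries the full APX gap. You would need to supply both of these arguments (or equivalents) for the proof to stand.
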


\begin{corollary}[Seller-side Pareto-optimality] \label{corollary:compatible_extension}
	For any compatible allocation $x$, there is a revenue-optimal stable allocation $y$ item-wise (thus seller-wise) covering $x$, i.e., $\sum_i x_{i, j} \leq \sum_i y_{i, j}, \forall j \in J$.
	Since stability implies compatibility, any revenue-optimal stable allocation is seller-wise Pareto-optimal (in allocation and revenue).
\end{corollary}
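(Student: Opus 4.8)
The plan is to piggyback on the constructive proof of Theorem~\ref{thm:one_sided_tractability}, exactly as the footnote suggests. Write $f_j(x) = \sum_i x_{i,j}$ for the total fraction of item $j$ sold under an allocation $x$; since we are in the non-personalized regime, the market revenue is simply $\sum_j p_j f_j(x)$, so ``revenue-optimal'' means the fill vector $(f_j)_j$ maximizes $\sum_j p_j f_j$ over all stable allocations. The corollary then amounts to the claim that the algorithm computing a revenue-optimal stable allocation can be \emph{initialized} at the given compatible $x$ rather than at the empty allocation, and that it never unsells any item along the way. First I would isolate from that proof the two invariants its augmentation step maintains: (i) the current allocation stays compatible, and (ii) every item's fill $f_j$ is weakly monotone increasing across steps. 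Invariant (i) is what lets us start at $x$ at all (it is compatible by hypothesis); invariant (ii) is what delivers the coverage conclusion $f_j(x) \le f_j(y)$ for all $j$.

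Concretely, the augmentation primitive I would use is a money-carrying augmenting path in the flow encoding. Starting from an item $j_0$ with an unsold fraction that some buyer $i_1$ values, one pushes an amount of money $\mu$ along a path $j_0 \to i_1 \to j_1 \to i_2 \to \cdots \to i_t$: each interior buyer $i_s$ that has already depleted its budget funds the purchase of a more-preferred item by \emph{releasing} a fraction of a less-preferred held item $j_s$ worth the same money $\mu$ (so its spending is unchanged), and the path terminates at a buyer $i_t$ with spare budget (or at a free item). Because prices are non-personalized, the $\mu$-worth of each interior item released by one buyer is exactly the $\mu$-worth re-purchased by the next, so the fills of $j_1,\dots,j_{t-1}$ net to zero while $f_{j_0}$ strictly increases; this is invariant (ii). Revenue rises by exactly $\mu$, since every interior swap is budget-neutral and only the terminal buyer commits fresh budget, and compatibility is preserved because each buyer only ever trades up its preference order. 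I would then argue termination (revenue is bounded by $\sum_i B_i$, with increments bounded below in the rational case) and invoke Lemma~\ref{lemma:compatible_stable}: a compatible allocation admitting no such augmentation has every buyer either budget-depleted or holding all items available to it, hence is stable, and ``no augmentation available'' is exactly the no-augmenting-path optimality condition, so it is also revenue-optimal. Calling this terminal allocation $y$, invariant (ii) gives $f_j(x) \le f_j(y)$ for every $j$, and summing over $j \in J_k$ yields the seller-wise covering.

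For the second sentence I would first record the easy fact that stability implies compatibility (a stable buyer never holds an item strictly worse in bang-per-buck than one still available to it, which is precisely what compatibility forbids). Given any revenue-optimal stable $z$, applying the first part with $x = z$ produces a revenue-optimal stable $y$ with $f_j(z) \le f_j(y)$ for all $j$; combined with the observation that revenue-optimality already rules out any stable allocation of strictly higher total revenue, this pins down seller-wise Pareto-optimality. In particular, a hypothetical stable allocation dominating $z$ seller-wise in both revenue and fills would have weakly larger $\sum_j p_j f_j$ with at least one strict improvement on a positively-priced item, contradicting revenue-optimality; the only slack left is on zero-priced items, which I would note is economically immaterial.

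The main obstacle, and the reason this is stated as a corollary of the \emph{proof} rather than the statement of Theorem~\ref{thm:one_sided_tractability}, is reconciling coverage with global optimality. Ordinary max-flow augmentation may route through backward (residual) edges that \emph{unsell} items, which would break invariant (ii); so the real work is showing that restricting to the coverage-preserving, preference-monotone augmenting paths above still suffices to reach a \emph{global} revenue maximum --- i.e.\ that no coverage-preserving local optimum exists short of the true optimum. This is exactly the content supplied by the particular Edmonds-Karp implementation in the proof of Theorem~\ref{thm:one_sided_tractability}, and I would lean on its optimality analysis rather than re-deriving a min-cut certificate from scratch.
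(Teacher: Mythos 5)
Your proposal is correct and follows essentially the same route as the paper: encode revenue as a max flow, initialize the augmentation procedure at the given compatible allocation $x$, maintain compatibility and item-wise monotonicity of fills throughout, and read off the covering revenue-optimal stable $y$ at termination, which is exactly what Algorithm~\ref{alg:stable_edmonds_karp} and Lemmas~\ref{lemma:edmonds_karp} and~\ref{lemma:edmonds_karp_monotonicity} do. One small misattribution: the delicate invariant is not (ii) --- item-wise monotonicity holds automatically for \emph{any} augmenting path because the bipartite structure forces interior items' fills to net to zero --- but rather (i), compatibility, which is precisely what the shortest-path choice together with redirecting the terminal edge to the buyer's most-preferred under-sold item secures; since you defer to that machinery anyway, this does not affect correctness.
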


Corollary \ref{corollary:compatible_extension} shows that the set of market-wise revenue-optimal stable allocations is exactly the set of seller-side Pareto-optimal ones.
The contrast between the hardness of minimizing revenue and the tractability of maximizing revenue further supports imposing revenue-optimality as a market requisite.
Therefore it should be reasonable to restrict our attention to revenue-optimal stable allocation in fully non-personalized markets.
Nonetheless, revenue-optimality is still unable to produce a unique allocation for any pricing, which hinders us from formulating a normal form game of price competition.
To circumvent it, we adopt a maximin approach and define the following equilibrium concept to capture those states that we believe are the most likely to emerge as the market outcome.

\begin{definition}
	Given pricing $p$,
	let $R_k(p)$ be the minimum revenue received by seller $k$ over all market-wise revenue-optimal stable allocations with pricing $p$.
	In a non-personalized pricing market with natural priorities,
	a \textbf{maximin equilibrium} $(p, x)$ consists of a pricing $p$ and an allocation $x$ satisfying: (1) $x$ is a revenue-optimal stable allocation with $p$; (2) for any seller $k$ and any pricing $p'$ that differs from $p$ only on items owned by $k$, $R_k(p') \leq \sum_{i \in I, j \in J_k} p_{j} x_{i, j}$.
\end{definition}

At a non-maximin-equilibrium state, there will be a deviation for some seller that \textit{guarantees} itself a strictly higher revenue.
Our equilibrium concept eliminates these clearly unsteady outcomes.
Though in reality a seller might attempt a possibly profitable (but non-guaranteed) deviation, the maximin equilibrium is economically meaningful and it includes many interesting market states as will be detailed below.

\subsection{Equilibrium Analysis}

By Corollary \ref{corollary:compatible_extension}, any compatible allocation has a revenue-optimal stable allocation that item-wise covers it. To show a deviation cannot improve the deviator's worst-case revenue, one sufficient condition is to find a compatible allocation where buyers already spend too much on other sellers, ensuring the deviator cannot be better off even if all the unspent money goes to it.
The above idea will be repeated used in this section to establish maximin equilibrium.

\begin{theorem} \label{thm:non_personalized_nash}
	Any competitive equilibrium $(p^*, x^*)$ is also a maximin equilibrium.
\end{theorem}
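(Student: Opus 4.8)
The plan is to show that at a competitive equilibrium $(p^*, x^*)$, no seller $k$ can profitably deviate in the maximin sense. I would first verify that condition (1) holds, i.e. that the competitive equilibrium allocation $x^*$ is itself a revenue-optimal stable allocation under $p^*$. Stability should follow from the competitive-equilibrium properties stated in the preliminaries (every buyer only gets highest-bang-per-buck items and the market clears, so there is no under-demanded item that any buyer would prefer), and by Corollary \ref{corollary:compatible_extension} stability already implies seller-side Pareto-optimality, which gives revenue-optimality; alternatively, since all items are fully sold at $p^*$, the total revenue $\sum_j p^*_j$ is the maximum conceivable, so $x^*$ trivially maximizes revenue.

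The heart of the argument is condition (2): for every seller $k$ and every deviation $p'$ that changes prices only on $J_k$, I must show $R_k(p') \le \sum_{i, j \in J_k} p^*_j x^*_{i, j}$. The strategy is exactly the one flagged in the paragraph preceding the theorem: rather than tracking the revenue-optimal stable allocation under $p'$ directly, I would construct a \emph{compatible} allocation $x'$ under $p'$ in which buyers spend so much money on sellers other than $k$ that, even if every remaining dollar flowed to seller $k$, its revenue could not exceed its equilibrium revenue. By Corollary \ref{corollary:compatible_extension}, any compatible allocation is item-wise covered by some revenue-optimal stable allocation, so exhibiting one such $x'$ bounds $R_k(p')$ from above (the worst-case revenue-optimal stable allocation for $k$ cannot give $k$ more than all the money buyers have left after their guaranteed spending elsewhere).

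Concretely, I would build $x'$ by keeping buyers' purchases on the other sellers' items essentially as in $x^*$, which is legitimate because $p'$ leaves those prices and the corresponding bang-per-bucks untouched; the key quantity to track is the total budget $\sum_i B_i$ versus the money buyers are forced (by compatibility and stability elsewhere) to spend outside $J_k$. Since buyers deplete their budgets at the competitive equilibrium and the items of $J_k$ collectively absorbed exactly $\sum_{i, j \in J_k} p^*_j x^*_{i, j}$ of that budget, raising prices on $J_k$ can only drive demand away (buyers substitute toward the now-relatively-cheaper outside items), while lowering prices cannot raise revenue above the full-sale value and merely cannibalizes per-unit payment. The delicate point — and what I expect to be the main obstacle — is handling the case where $k$ \emph{lowers} prices to attract buyers who were previously buying elsewhere: I must show the compatible allocation can be chosen so that those buyers still spend enough on outside items (by pushing items to full allocation where stability permits) that the arithmetic $\sum_i B_i - (\text{outside spending}) \le \sum_{i, j \in J_k} p^*_j x^*_{i, j}$ goes through. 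Establishing that such a compatible allocation always exists, and pinning down the right lower bound on outside spending uniformly over all deviations $p'$, is the technical crux; the monotonicity and budget-depletion facts from the preliminaries together with Lemma \ref{lemma:compatible_stable} should supply the needed structure.
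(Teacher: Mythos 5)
Your architecture matches the paper's proof exactly: verify that $x^*$ is a revenue-optimal stable allocation under $p^*$ (your argument for this is fine), then for each unilateral deviation $p'$ by seller $k$ exhibit a compatible allocation in which all other sellers' items are fully sold at unchanged prices, invoke Corollary \ref{corollary:compatible_extension} to cover it by a revenue-optimal stable allocation $y$, and use the accounting $\sum_i B_i = \sum_j p^*_j$ to conclude that seller $k$'s revenue in $y$ --- and hence $R_k(p')$ --- is at most $\sum_{j \in J_k} p^*_j$.

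However, the step you defer as ``the technical crux'' --- the case where seller $k$ lowers prices to poach demand --- is the substantive content of the proof, and your proposal does not supply the argument that closes it. The resolution is much simpler than the uniform lower bound on outside spending you are reaching for, and it hinges on the definition of availability rather than on any substitution analysis. Let $H$ be the set of items whose prices strictly increased under $p'$ (necessarily a subset of $J_k$), and set $x_{i,j} = x^*_{i,j}$ for $j \notin H$ and $x_{i,j} = 0$ for $j \in H$. Every item outside $H$ --- including seller $k$'s items whose prices were \emph{lowered} --- is then fully allocated, hence available to a buyer only to the extent that the buyer already holds it; a price cut on a retained, fully allocated item therefore cannot create a compatibility violation for anyone else, no matter how attractive that item becomes, because compatibility is tested only against available items. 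The only items with unallocated supply are those in $H$, and for any buyer $i$ and held item $j$ we have $\frac{v_{i,j'}}{p'_{j'}} < \frac{v_{i,j'}}{p^*_{j'}} \leq \frac{v_{i,j}}{p^*_j} \leq \frac{v_{i,j}}{p'_j}$ for all $j' \in H$, using the competitive-equilibrium bang-per-buck optimality. Hence $x$ is compatible with no case analysis on the direction of the deviation, the covering allocation $y$ fully sells every item outside $J_k$, and the arithmetic goes through. In short: you do not need buyers to ``still spend enough'' on outside items under stability pressure; you only need compatibility, which is blind to price cuts on fully allocated items.
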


\begin{proof}
	Consider any unilateral deviated price vector $p$ that differs from $p^*$ for a single seller $k$.
	Let $x^*$ be a competitive equilibrium allocation.
	Let $H$ be the set of items with $p_j > p^*_j$.
	Construct $x$ as follows:
	\begin{displaymath}
		x_{i, j} = \left\{
		\begin{array}{ll}
			x^*_{i, j}, & \text{if } j \notin H; \\
			0, & \text{if } j \in H.
		\end{array}
		\right.
	\end{displaymath}
	By the property of competitive equilibrium, items outside $H$ are fully allocated in $x$, and if $x^*_{i, j} > 0$, then  $\frac{p^*_{j'}}{v_{i, j'}} \geq \frac{p^*_j}{v_{i, j}}, \forall j'$.
	As a result, if $x_{i, j} > 0$, then $j \notin H$ and $\frac{p_{j'}}{v_{i, j'}} >  \frac{p^*_{j'}}{v_{i, j'}} \geq \frac{p^*_j}{v_{i, j}} \geq \frac{p'_j}{v_{i, j}}, \forall j' \in H$.
	Therefore $x$ is compatible, and by corollary \ref{corollary:compatible_extension}, there exists a revenue-optimal stable allocation $y$ such that $\sum_i x_{i, j} \leq \sum_i y_{i, j}, \forall j \in J$.
	This implies that items are fully sold in $y$ for all sellers other than $k$, so $k$'s worst-case revenue at $p$ cannot exceed its competitive equilibrium revenue.
\end{proof}

Theorem \ref{thm:non_personalized_nash} shows that, even if sellers have the price-setting power, as long as no personalization is allowed, the competitive equilibrium remains to be a steady-state.
This strengthens the foundation of this classic equilibrium concept from a game-theoretic perspective of seller competition.
Also note that the competitive equilibrium is ownership-oblivious, while in general the set of maximin equilibria is not a singleton and it changes as the ownership changes.


To explore further properties of the equilibria, consider the following representative example.

\begin{example} \label{example:non_personalized}
	The market consists of 2 buyers and 2 sellers.
	Each seller has a budget of 2, and the valuations and item-ownership are given in the following table.
	\begin{center}
		\begin{tabular}{cccc}
			\toprule
			& seller 1 & \multicolumn{2}{c}{seller 2} \\
			& item 1 & item 2 & item 3 \\
			\midrule
			buyer 1 & 2 & 1 & 0 \\
			buyer 2 & 1/3 & 1 & 1 \\
			\bottomrule
		\end{tabular}
	\end{center}
	At the unique competitive equilibrium, $p^*_1 = 2, p^*_2 = p^*_3 = 1$, buyer 1 gets item 1, and buyer 2 gets item 2 and 3.
\end{example}

\textbf{Multiplicity of maximin equilibria.}
There is another maximin equilibrium with pricing: $p_1 = \frac{4}{3}, p_2 = \frac{2}{3}, p_3 = 2$.
There is a unique revenue-optimal stable allocation, where buyer 1 gets item 1 and 2, while buyer 2 gets item 3.
To see it is a maximin equilibrium, consider all the possible deviations as follows:
\begin{itemize}\itemsep0em
	\item Seller 2 strictly increases $p_2$.
	Then allocation $x$ with $x_{1, 1} = 1$ and $x_{i, j} = 0, \forall (i, j) \neq (1, 1)$ is compatible, so seller 2 cannot increase worst-case revenue.
	\item $p_2$ does not increase.
	Allocation $x$ with $x_{1, 1} = x_{1, 2} = 1$ and 0 otherwise is compatible.
	Note that we do not need to consider deviation of $p_3$.
	\item Seller 1 increases $p_1$.
	Allocation $x$ with $x_{1,2}=x_{2,3}=1$ and 0 otherwise is compatible.
	\item Seller 1 at most earns the price of item 1, so decreasing $p_1$ cannot increase revenue.
\end{itemize}

\textbf{Market non-clearance.}
From the above analysis we can see the price of item 3 will not bring extra incentive issues for both sellers, as long as it remains more attractive than item 1 for buyer 2.
Therefore raising $p_3$ to 3 (with buyer 2 still depleting its budget on item 3) also leads to an equilibrium.
In this case, item 3 is not fully sold.

\textbf{Importance of overall competitiveness.}
Intuitively, the above market non-clearing maximin equilibrium exists due to the fact that seller 1 is too weak to be competitive against seller 2 w.r.t. buyer 2 (even if $v_{2, 1}$ is not very small compared to $v_{2, 2} + v_{2, 3}$).
Actually $4/3$ is the minimum revenue seller 1 would get from any maximin equilibrium.
It can only be improved if $v_{2, 1}$ is raised to at least $2/3$ (with other valuations unchanged).
In contrast, seller 2 is competitive enough w.r.t. buyer 1, so it can earn from the price competition.
$v_{1, 2}$ is actually \textit{tight} in terms of the best revenue seller 2 could get at maximin equilibrium.
In comparison, the competitive equilibrium remains the same if we raised $v_{2, 1}$ to 1 or decreased $v_{1, 2}$ to 0.

\textbf{Ownership awareness.}
Suppose item 3 is now owned by seller 1.
Then  the pricing $p_1 = \frac{4}{3}, p_2 = \frac{2}{3}, p_3 = 2$ ceases to be a maximin equilibrium.
In this case, seller 2 could raise $p_2$ to 1 such that item 2 could remain as the most preferred item of buyer 2 and would always be fully sold at maximin equilibrium.


In summary, within a fully non-personalized market, there is generally some inherent uncertainty that cannot be resolved by just pricing and competition (e.g., due to user habits, brand awareness, etc.).
The competitive equilibrium still forms a stable structure from a risk-averse point of view, though other (possibly less efficient) outcomes are also likely to emerge as a steady-state.

\section{Non-personalized Pricing Markets with Arbitrary Priorities}
\label{sec:non_personalized_arbitrary}

In this section, we consider the personalization of just priorities.
Missing proofs in this section can be found in Appendix \ref{app:non_personalized_arbitrary}.
We will first show that, though it seems to be a minor modification, the market outcome immediately becomes more unpredictable as arbitrary priorities are introduced.

\begin{theorem} \label{thm:apx_non_personalized_arbitrary}
	In non-personalized pricing markets with arbitrary priorities,
	for fixed pricing and priorities, the problem of finding the maximum revenue over stable allocations is APX-hard.
	Furthermore, for fixed pricing, it is APX-hard to find the maximum revenue over all stable allocations and all priorities. 
\end{theorem}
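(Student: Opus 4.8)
The plan is to prove both statements by an approximation-preserving (L-)reduction from a canonical APX-hard problem, exploiting the structural resemblance between stable allocations and stable matchings flagged in the introduction. Concretely, I would reduce from \emph{maximum-cardinality stable matching with ties and incomplete lists} (MAX-SMTI), which is APX-hard even when ties occur on one side only. Given an SMTI instance with agents on one side and posts on the other, I build a non-personalized market in which each agent becomes a buyer with unit budget, each post becomes a single-seller item with unit (non-personalized) price $1$, and agent $i$'s preference list is encoded by the valuations: choose $v_{i,j}>0$ for acceptable posts so that the induced bang-per-buck order $v_{i,j}/1$ reproduces the list, and $v_{i,j}=0$ otherwise. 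Crucially, I keep buyer preferences strict and place all ties on the \emph{item} side, realizing each post's preference list (with its ties) directly as the priority $\geq_j$, which is a total preorder and hence faithful. Since every price equals $1$, the total revenue of an allocation equals the total fraction of items sold, so it tracks the size of the corresponding matching.

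The key steps, in order, are: (i) verify that an \emph{integral} stable allocation in this market is exactly a stable matching of the SMTI instance with revenue equal to the matching cardinality, by translating the blocking conditions (3.a)/(3.b) of stability into the blocking-pair condition for SMTI; (ii) prove an \emph{integrality} lemma asserting that some revenue-maximizing stable allocation can be taken integral, so that the maximum revenue equals the maximum stable-matching size; and (iii) check that the map is an L-reduction, i.e. that the additive gaps are preserved up to a constant, upgrading NP-hardness to APX-hardness. Step (iii) is smoothed by the fact (from the paper's setup) that any two stable allocations differ in revenue by at most a factor of $2$, so approximation ratios transfer cleanly; the routine verifications here rely on Lemma~\ref{lemma:compatible_stable} and Corollary~\ref{corollary:compatible_extension}.

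For the second statement, where the optimizer may additionally choose the priorities, I would argue the added freedom does not trivialize the problem. The buyer-side preferences are pinned down by the non-personalized prices and valuations and cannot be altered; the only design choice is which holder of a fully-sold item is permitted to block a would-be buyer. I would recast ``maximize revenue over all priority profiles and all stable allocations'' as finding a maximum \emph{achievable} assignment subject only to the buyers' fixed preference orders and budgets, where an assignment is achievable precisely when the conflicts it induces can be oriented by some priority profile. By constructing the SMTI-derived instance so that not all budgets can be spent simultaneously, the achievable total strictly encodes the optimum and the best priority profile is forced to correspond to a maximum stable matching, so the same L-reduction goes through. Here one must be careful to confirm that the all-tied (natural) priority profile, under which revenue maximization is tractable by Theorem~\ref{thm:one_sided_tractability}, is genuinely suboptimal on these instances, so that the hard priority choices are the relevant ones.

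The main obstacle is the integrality lemma of step (ii): because Theorem~\ref{thm:gale_shapley} guarantees only rational, not integral, stable allocations, I must rule out that \emph{fractional} splitting of budgets across items yields strictly more revenue than any integral stable matching. I expect to handle this by a local uncrossing/rounding argument along alternating paths and cycles of fractional pairs: repeatedly shift mass to eliminate fractions while preserving stability and not decreasing revenue, using the strict buyer preferences (so each buyer wants a single most-preferred available item) to ensure rerouting never creates a blocking pair. Designing the gadgets so that this rounding is always available and preserves the L-reduction gap is the crux; once integrality is secured, the correspondence between revenue and matching size and the transfer of the inapproximability factor are essentially bookkeeping.
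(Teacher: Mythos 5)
Your reduction source and overall architecture differ from the paper's: the paper establishes both halves of the theorem by reusing the MAX-3SAT-3 gadget from the proof of Theorem~\ref{thm:apx_complete} (choice/literal/clause buyers and choice/clause-literal/buffer items), making the pricing non-personalized by setting all choice-item prices to $12$ and pushing the hardness entirely into the priority structure; the truth assignment is encoded by which literal buyer each pair of choice items prioritizes. Your proposed route via MAX-SMTI is not unreasonable in spirit, but as written it has two genuine gaps. First, the integrality lemma in your step~(ii) is exactly the point where the analogy to stable matching breaks, and you leave it unproved. A buyer with budget $1$ facing unit prices is \emph{not} restricted to one item: stability forces it to keep buying lower-ranked available items until its budget is depleted, so stable allocations in your constructed market can split a buyer's budget across several posts in ways that have no counterpart in SMTI, and the revenue of the best fractional stable allocation need not equal the maximum weakly stable matching size. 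The paper explicitly cautions that neither stable allocations nor stable matchings contain the other as a special case, so ``shift mass along alternating paths without creating a blocking pair'' is a claim that needs a real proof (the fractional stable matching polytope is integral only under strict preferences, and your instances necessarily carry ties on the item side); without it the L-reduction does not go through.

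Second, and more seriously, your argument for the second half does not survive its own premise. In that statement the optimizer chooses the priorities, so the posts' preference lists of the SMTI instance --- which are the entire source of hardness in MAX-SMTI --- are no longer part of the input being respected; the optimizer may simply discard them. Your reformulation in terms of ``achievable assignments whose conflicts can be oriented by some priority profile'' is where the work lies, and asserting that the best priority profile ``is forced to correspond to a maximum stable matching'' of the original instance is unsupported: with free choice of priorities the set of supportable stable allocations is generally much larger than the set of stable matchings for the given lists, and on plain bipartite instances it is not at all clear the problem remains hard. The paper avoids this by designing gadgets in which the priority choice on the choice items is in bijection with truth assignments and every choice forces an essentially unique stable allocation, so optimizing over priorities \emph{is} optimizing over assignments. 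You would need an analogous forcing structure; as it stands the second half of your proposal is a restatement of the problem rather than a reduction.
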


Theorem \ref{thm:apx_non_personalized_arbitrary} demonstrates two layers of intractability.
First, combing with Theorem \ref{thm:one_sided_tractability}, we have that both maximizing and minimizing the market-wise revenue is intractable for fixed strategies of sellers.
This makes it evident that the market outcome is relatively unpredictable.
In addition, in markets like ride-hailing where prices change less frequently while priorities are more flexible, the second half of Theorem \ref{thm:apx_non_personalized_arbitrary} shows that it is also hard to find the optimal priorities for both the  whole market and individual sellers.

Moreover, even if we take the conservative maximin approach, the competitive equilibrium may cease to be a stationary state.
To see this, consider Example \ref{example:non_personalized} again.
Recall that, at competitive equilibrium, $p_1^* = 2$, $p_2^* = p_3^* = 1$, buyer 1 gets item 1 and buyer 2 gets item 2 and 3.
Priorities do not affect allocation here.
Suppose seller 2 deviates to $p_2 = 2/3$, $p_3 = 2$ and strictly prioritizes buyer 1 over buyer 2 for item 2.
Since buyer 1 now strictly prefers item 2 over item 1, it will buy item 2 in whole and deplete the rest of budget on item 1.
Buyer 2 also prefers item 2 the most, but it is not prioritized to buy.
So it could only buy item 3 at a higher price (which is still more attractive then the now unsold item 1).
As a result, seller 2 can guarantee itself a higher revenue by deviating from the competitive equilibrium pricing.
Note that in this example no allocation selection is involved.



\section{Personalized Pricing Markets}
\label{sec:personalized_pricing_game}

The situation becomes even more complex with price personalization.
We start with an interesting result that identifies a natural situation where the allocation is unique and thus exempted from the selection problem.
Missing proofs in this section are given in Appendix \ref{app:personalized_pricing_game}.

\begin{theorem} \label{thm:unique}
	If items' priorities $\geq_j$ and buyers' preferences over items $\geq_i$ are all strict, and $\geq_j$ is natural, then there is a unique stable allocation.
\end{theorem}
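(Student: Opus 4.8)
The plan is to argue by contradiction: assume two distinct stable allocations $x\neq y$, set $D=\{(i,j):x_{i,j}\neq y_{i,j}\}$, and exploit that every tie is broken — within each buyer the bang-per-bucks $\frac{v_{i,j}}{p_{i,j}}$ are pairwise distinct, and naturalness with strictness of $\geq_j$ makes $i>_j i'$ \emph{equivalent} to $p_{i,j}>p_{i',j}$. Writing $\beta_{i,j}=\frac{v_{i,j}}{p_{i,j}}$, I would first pick a pair $(i_0,j_0)\in D$ maximizing $\beta_{i_0,j_0}$ over $D$ and, after possibly swapping the names of $x$ and $y$, assume $x_{i_0,j_0}>y_{i_0,j_0}$.

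The core is a propagation argument that turns one disagreement into an alternating walk, built from two local steps. (Up-step) At a pair $(i,j)$ with $x_{i,j}>y_{i,j}$ for which $i$ buys in $y$ some item of strictly smaller bang-per-buck than $j$ (for $(i_0,j_0)$ this is instead forced by maximality), stability case (3.a) for $y$ is impossible, so (3.b) holds: $j$ is fully sold in $y$, and since $\sum_i x_{i,j}\le 1=\sum_i y_{i,j}$, some buyer $i^{+}$ holds strictly less of $j$ in $x$ than in $y$ with $i^{+}>_j i$, i.e. $p_{i^{+},j}>p_{i,j}$. (Down-step) At a pair $(i,j)$ with $x_{i,j}<y_{i,j}$ for which a strictly lower-priority buyer still holds a positive amount of $j$ in $x$, case (3.b) for $x$ is impossible, so (3.a) holds: $i$ has exhausted its budget in $x$ and buys only items of bang-per-buck at least $\beta_{i,j}$, whence budget-exhaustion forces $i$ to hold strictly more in $x$ of some $j^{+}$ with $\beta_{i,j^{+}}>\beta_{i,j}$. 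Chaining an up-step (producing the higher-priced $i^{+}$) into a down-step at $(i^{+},j)$ (whose lower-priority holder is exactly the previous up-buyer) and back into an up-step is self-sustaining, because the item just abandoned supplies precisely the ``strictly smaller bang-per-buck in $y$'' hypothesis needed to rerun the up-step. As the number of pairs is finite, the walk repeats a pair, yielding a rotation cycle with up-pairs $(i_t,j_t)$ and down-pairs $(i_{t+1},j_t)$ (indices mod the cycle length) such that for every $t$ we have $p_{i_{t+1},j_t}>p_{i_t,j_t}$ and $\frac{v_{i_{t+1},j_{t+1}}}{p_{i_{t+1},j_{t+1}}}>\frac{v_{i_{t+1},j_t}}{p_{i_{t+1},j_t}}$; moreover each cycle item is fully sold in $y$ and each cycle buyer has exhausted its budget in $x$.

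The final step is to rule out this rotation, and here naturalness does the decisive work. Multiplying the price inequalities around the cycle gives $\prod_t p_{i_{t+1},j_t}>\prod_t p_{i_t,j_t}$, so the cycle is ``flow-generating'' once prices are read as gain factors in the sense of generalized flows. Reallocating supply conformally around the cycle — shifting a small amount of each $j_t$ from its high-priced holder $i_{t+1}$ toward the low-priced holder $i_t$ — preserves every item's total while, by the strict product inequality, strictly lowering the aggregate money the cycle buyers must spend to sustain their holdings; combined with budget-exhaustion on the $x$-side and full-sale on the $y$-side, this frees a positive amount of budget for some cycle buyer sitting next to an item it strictly prefers and may now access, contradicting the optimality-in-hindsight that stability encodes.

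The main obstacle is exactly this elimination. The two local steps and the resulting rotation are routine bookkeeping with the stability conditions, but the rotation only contradicts stability \emph{globally}, through the multiplicative price gap, and in general the cycle is embedded in a larger difference set $D$ whose other coordinates also move. Making the ``freed budget'' argument rigorous therefore requires isolating the cycle — most cleanly by a conformal cycle/path decomposition of $x-y$ in the spirit of the network-flow viewpoint used elsewhere in the paper — so that a reallocation can be pushed along a single cycle without disturbing feasibility, at which point the generalized-flow gain exceeding $1$ delivers the contradiction. I expect this decomposition-plus-generalized-flow accounting, rather than the matching-style walk, to be the technically delicate part.
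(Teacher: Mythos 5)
Your propagation machinery is essentially the same case analysis the paper runs: the up-step and down-step are exactly the two ways condition (3) of stability can resolve at a disagreement pair, and the chaining (the item just abandoned in the down-step supplying the ``strictly smaller bang-per-buck in $y$'' hypothesis for the next up-step) is correct, as is the observation that strictness plus naturalness turns $i>_j i'$ into $p_{i,j}>p_{i',j}$. The genuine gap is the cycle-elimination step, and it is not just ``technically delicate'' --- as sketched it does not close. The cycle's local data (the inequalities $p_{i_{t+1},j_t}>p_{i_t,j_t}$ and $\beta_{i_{t+1},j_{t+1}}>\beta_{i_{t+1},j_t}$, full sale of each $j_t$ in $y$, budget depletion of each $i_{t+1}$ in $x$) is not self-contradictory on its own: the conservation facts you would need to complete the accounting, namely $\sum_i x_{i,j_t}\le\sum_i y_{i,j_t}$ and $\sum_j p_{i,j}x_{i,j}\ge\sum_j p_{i,j}y_{i,j}$, involve all the other coordinates of $x-y$, not just the two cycle pairs touching each $j_t$ or each $i_{t+1}$. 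Moreover the proposed contradiction --- a conformal reallocation that ``frees budget'' --- is aimed at a hypothetical third allocation; stability is a property of $x$ and of $y$, and the existence of a cheaper way to sustain the cycle buyers' holdings contradicts neither. The decomposition you hope will isolate the cycle is also problematic: buyers conserve money while items conserve quantity, so $x-y$ is a generalized-flow-type object with sources and sinks, and in any case the up/down-step dichotomy is decided pair by pair by which branch of condition (3) fires, which a flow decomposition cannot be made to respect.

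The paper sidesteps the cycle entirely by running your propagation on \emph{all} of the disagreeing mass at once rather than along a single walk. After deleting the common part of $x$ and $y$ (so $x_{i,j}y_{i,j}=0$), it builds an alternating sequence of suballocations $z^1,w^1,z^2,w^2,\dots$ of the two allocations: each up-phase replaces a bundle by at least the same quantity held at strictly higher personalized prices, and each down-phase uses budget depletion to show the displaced money reappears elsewhere at no loss. The total payment of $z^k$ therefore grows by a factor of at least $\min\{p_{i,j}/p_{i',j}:p_{i,j}>p_{i',j}\}>1$ per round, while every $z^k$ is a suballocation of a fixed allocation and hence has bounded payment --- contradiction. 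If you want to salvage your version, the fix is the same idea: do not look for a repeated pair, but track the total money carried by the (branching) propagation; the multiplicative gap you extracted from naturalness plus strictness is the right engine, but it must drive an unbounded increase of a globally bounded quantity rather than a local cycle inequality.
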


However, even with natural priorities, the market becomes unpredictable when ties exist.

\begin{theorem} \label{thm:apx_complete}
	The problem of finding the stable allocation with the maximum revenue is APX-hard, even if the priorities are natural.
	The APX-hardness of minimizing revenue is implied by Theorem \ref{thm:one_sided_tractability}.
\end{theorem}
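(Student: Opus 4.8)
The plan is to treat the two directions separately. For minimization no new work is needed: a non-personalized market with natural priorities is precisely the special case of a personalized market in which every seller posts the same price to all buyers, and the natural-priority hypothesis is inherited unchanged. Hence the APX-hardness of minimizing revenue over stable allocations from Theorem \ref{thm:one_sided_tractability} transfers verbatim, which is exactly what the statement records. All the effort thus concerns maximization, where Theorem \ref{thm:one_sided_tractability} shows that without personalization the maximum is attainable in poly-time -- so price personalization itself must be the source of the hardness.

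For the maximization direction I would reduce from the instances behind the first half of Theorem \ref{thm:apx_non_personalized_arbitrary}, where approximating the maximum revenue over stable allocations is APX-hard for a non-personalized market with \emph{fixed, arbitrary} priorities. The guiding idea is that personalized pricing with \emph{natural} priorities can simulate arbitrary priorities. Given uniform prices $p_j$ and arbitrary priority orders $\geq_j$, I would pick tiny multipliers $\epsilon_{i,j} \geq 0$ with $\epsilon_{i,j} > \epsilon_{i',j}$ exactly when $i >_j i'$ and $\epsilon_{i,j} = \epsilon_{i',j}$ exactly when $i =_j i'$, and set both $p_{i,j} = p_j(1+\epsilon_{i,j})$ and $v'_{i,j} = v_{i,j}(1+\epsilon_{i,j})$. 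Then the natural priority induced by the new prices coincides with $\geq_j$ by construction, so the reduction stays inside the natural-priority regime the theorem insists on; and the bang-per-buck $v'_{i,j}/p_{i,j} = v_{i,j}/p_j$ is preserved \emph{exactly}, so every buyer's preference over items -- ties included, which are what create the multiplicity of stable allocations -- is identical to the original.

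With both the bang-per-buck orders and the priority orders matched, I would argue that the combinatorial type of the stable allocations is preserved while revenues are distorted only multiplicatively by $O(\epsilon)$: a buyer that depletes its budget contributes exactly $B_i$ in either market, and the remaining buyers' contributions move by at most the factor $1+\max_{i,j}\epsilon_{i,j}$ up to lower-order shifts in the fractions they buy. Taking the multipliers polynomially small keeps the optimum within $(1\pm O(\epsilon))$ of the original, so the constant inapproximability gap survives and the construction is an approximation-preserving (L-)reduction.

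The main obstacle is the second-order effect of the perturbation on \emph{quantities} rather than on orders. Since a buyer now buys $B_i/p_{i,j}$ units of item $j$ instead of $B_i/p_j$, the fractions sold -- and hence which items are fully consumed and which buyers run out of budget -- can shift; at a knife-edge configuration (a buyer depleting its budget exactly as an item is exhausted) the qualitative structure of the stable allocations could change discontinuously. The crux is to show that for sufficiently small, consistently chosen multipliers no such degeneracy appears, so that the correspondence between the two markets' stable allocations holds at the combinatorial level and the revenue accounting above is valid. Establishing this robustness, and thereby pinning down the L-reduction constants, is the technical heart of the proof; the remainder is the routine observation that a stable allocation is poly-time computable (Theorem \ref{thm:gale_shapley}), which lets the result be phrased as APX-hardness.
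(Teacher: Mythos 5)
Your treatment of the minimization direction is fine and matches the paper: it is just the observation that non-personalized pricing with natural priorities is a special case, so Theorem \ref{thm:one_sided_tractability} applies verbatim. The maximization direction, however, has two genuine problems. First, a circularity: you propose to reduce from the first half of Theorem \ref{thm:apx_non_personalized_arbitrary}, but in this paper that theorem is not an independent source of hardness --- its proof (Appendix \ref{app:non_personalized_arbitrary}) is itself obtained by adapting the construction used to prove the very statement you are trying to establish. The paper's actual proof goes the other way around: it gives a direct L-reduction from MAX-3SAT-3, building a personalized-pricing market with natural priorities (choice/literal/clause buyers and choice/clause-literal/buffer items) in which the multiplicity of stable allocations under ties encodes the truth assignment, and then the remarks explain how to de-personalize the prices and push the hardness into the priorities to get Theorem \ref{thm:apx_non_personalized_arbitrary}. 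So your reduction has no valid starting point within the paper's logical structure.

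Second, even granting an independent source instance, the step you yourself flag as ``the technical heart'' is left entirely unproven, and it is not obviously true. Perturbing prices to $p_{i,j} = p_j(1+\epsilon_{i,j})$ preserves bang-per-bucks and realizes the desired priorities as natural ones, but stability is not determined by orders alone: conditions (3.a) and (3.b) depend on exact budget depletion ($\sum_j x_{i,j}p_{i,j} = B_i$) and exact full allocation ($\sum_i x_{i,j} = 1$), both of which shift under the perturbation. Availability is a discrete predicate (a \emph{positive} fraction available), so an $O(\epsilon)$ shift in quantities can flip an item between fully and partially allocated, discretely changing which bundles other buyers are forced to take and cascading through the instance; the revenue of the optimal stable allocation need not move by only $O(\epsilon)$. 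Since the hard instances may sit exactly at such knife edges (indeed, hardness constructions typically exploit tight budget/supply coincidences), you cannot escape the degeneracy merely by taking $\epsilon$ small, and no argument is offered for why a consistent choice of multipliers avoids it. Without that lemma the L-reduction constants cannot be pinned down, so the proof as proposed does not go through.
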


Note that ties can be easily broken by small changes, e.g., subsidies/coupons in e-commerce, boosts in auto-bidding, or an update of the machine learning model that matches buyers and items.
Hence the hardness result indicates that these perturbations might bring a large overhaul of the trading outcome.
The proof also shows that a natural priority may not be optimal, even if the personalized prices are all distinct from one another.

To make things worse, a market may admit no stationary state, regardless of the selection criterion, as shown by the following result.

\begin{theorem} \label{proposition:no_pne}
	Consider a market consisting of 2 buyers and 2 sellers.
	The valuations and item-ownership are given in Table \ref{tab:no_pne}.
	\begin{table}[h]
		\centering
		\begin{tabular}{ccc}
			\toprule
			valuation & item 1 (seller 1) & item 2 (seller 2)
			\\ \midrule
			buyer 1 (budget 2) & $1$ & $2$ 
			\\
			buyer 2 (budget 2) & $2$ & $1$ 
			\\ \bottomrule
		\end{tabular}
			\caption{A fixed pricing market without stationary state.}
		\label{tab:no_pne}
	\end{table}
	With any pricing of seller 2, seller 1 always has a response guaranteeing a revenue strictly larger than 2, regardless of allocation selection.
\end{theorem}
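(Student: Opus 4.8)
The plan is to fix an arbitrary strategy of seller $2$ --- personalized prices $q_1 := p_{1,2}$ and $q_2 := p_{2,2}$ together with its priority on item $2$ --- and to exhibit a response of seller $1$ (personalized prices $s_1 := p_{1,1}$, $s_2 := p_{2,1}$ and a priority on item $1$) under which \emph{every} stable allocation pays seller $1$ strictly more than $2$. Throughout I will use the accounting identity that, because both budgets equal $2$, seller $1$'s revenue $R_1$ equals the money spent on item $1$, while $R_1 + R_2 \le 4$, where $R_2$ is the money reaching item $2$. Hence it suffices to drive the spending on item $1$ above $2$, equivalently to make both buyers (essentially) exhaust their budgets while keeping the money that reaches item $2$ below $2$.

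Two levers drive the construction. First, \emph{diversion}: since buyer $1$ prefers item $1$ to item $2$ exactly when $1/s_1 \ge 2/q_1$, setting $s_1$ just below $q_1/2$ makes buyer $1$ strictly prefer item $1$; this is the only way to stop seller $2$ from draining buyer $1$'s entire budget through item $2$ when $q_1$ is large. Second, \emph{residual capture}: price item $1$ high for buyer $2$ ($s_2$ large) and let item $1$ prioritize buyer $2$; then whatever budget buyer $2$ has left after paying for item $2$ is funneled into item $1$, and since buyer $2$ outranks buyer $1$ there, it can always displace buyer $1$ to claim that fraction. I would split on the magnitudes of $(q_1,q_2)$: when $q_1<2$ and $q_2<2$ a single large uniform price $s_1=s_2=S$ already works, because both buyers can afford item $2$ outright, so the money reaching item $2$ is at most $\max(q_1,q_2)<2$ while all remaining budget is absorbed by the cheaply-demanded, high-priced item $1$; when $q_1\ge 2$ I would combine both levers ($s_1$ slightly below $q_1/2$, $s_2$ large, buyer $2$ prioritized on item $1$); the remaining region $q_1<2\le q_2$ is easy, as buyer $1$ spends little on the cheap item $2$ and buyer $2$ can be made to prefer the high-value item $1$ directly.

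In each case I would pin down the stable allocation by solving the resulting fixed-point equations: writing $a,b$ for the fractions of item $1$ held by buyers $1,2$ and $c,d$ for item $2$, stability forces each buyer to fill its preferred item up to the availability dictated by the two priority orders and to pour any leftover budget into the next item, which yields a small linear system whose solution I would plug into $R_1 = s_1 a + s_2 b$. For the delicate region --- for instance $q_1=3$, $q_2=1/2$, where seller $2$ both taxes buyer $1$ heavily and shields buyer $2$ with a low price --- this computation shows the diversion keeps buyer $1$'s contribution to item $2$ small while the high $s_2$ routes buyer $2$'s large residual onto item $1$, so that $R_1>2$ survives despite seller $2$'s defensive pricing.

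The hard part will be establishing the bound for \emph{every} stable allocation and for \emph{every} priority seller $2$ might use, since stability here is a fixed point that need not be unique and whose computation can cascade (the looping phenomenon noted for Theorem~\ref{thm:gale_shapley}); a non-natural priority of seller $2$ changes which fractions are available and hence the whole system. I would handle this by making seller $1$'s prices and priority strict and \emph{natural}, so that Theorem~\ref{thm:unique} removes the multiplicity whenever seller $2$'s priority is also strict, and by arguing directly in the remaining (tie or adversarial-priority) subcases that any residual freedom can only move budget \emph{towards} item $1$, so the worst stable allocation still exceeds $2$. The boundary values $q_i\in\{0,2\}$ and the exact ties $s_1=q_1/2$ and $s_2=2q_2$ would be absorbed by taking the strict choices $s_1<q_1/2$ and $s_2$ large and then checked separately. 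The claimed non-existence of a stationary state finally follows from the symmetry of the market between the two sellers, since both cannot simultaneously secure more than half of the total budget $4$.
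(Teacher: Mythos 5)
Your overall strategy mirrors the paper's: case-split on seller 2's prices $(q_1,q_2)=(p_{1,2},p_{2,2})$, exhibit an explicit response for seller 1, and use Theorem~\ref{thm:unique} plus direct stability arguments to control every stable allocation. Your regions (i) $q_1<2,\ q_2<2$ and (iii) $q_1<2\le q_2$ are handled correctly and match the paper's cases. But your prescription for the entire region $q_1\ge 2$ --- divert buyer 1 via $s_1<q_1/2$ and residual-capture buyer 2 via a large $s_2$ --- has a genuine hole. Residual capture delivers only $(2-q_2)^+$ from buyer 2, and diversion delivers at most $\min(2,s_1)$ with $s_1<q_1/2$ from buyer 1, so the recipe caps seller 1's revenue at roughly $\min(2,q_1/2)+(2-q_2)^+$. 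When $q_2\ge 2$ this is at most $2$: buyer 2 sinks its whole budget into item 2, has no residual, and seller 1 collects only buyer 1's $\le 2$ dollars. Even for $q_2<2$ it can fall short: take $q_1=2.5$, $q_2=1.9$, $s_1$ just under $1.25$, $s_2$ large; solving your own fixed-point system ($a+b=1$, $c+d=1$, $s_1a+q_1c=2$, $s_2b+q_2d=2$) gives $a\approx 1$, $c=0.3$, $d=0.7$, and $R_1=s_1a+s_2b\approx 1.25+0.67=1.92<2$.

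The missing idea is the paper's first case: whenever $q_2>1$, seller 1 should divert \emph{buyer 2} to item 1 as well, by setting $p_{2,1}=1+q_2$. This price satisfies $2/(1+q_2)>1/q_2$, so buyer 2 strictly prefers item 1, and it exceeds $2$, so it captures buyer 2's \emph{entire} budget rather than its residual; combined with $p_{1,1}=q_1/4$ for buyer 1, item 1 is either fully sold at prices that force revenue above $2$ or absorbs both budgets outright. You do invoke ``make buyer 2 prefer item 1 directly'' in your easy region (iii), but you need it precisely in the part of region (ii) where you do not use it. The remaining pieces of your plan --- the uniform high price for small $q_1,q_2$, the delicate strip $q_1\in(2,4]$, $q_2\le 1$, the boundary $q_1=2$, and the care about uniqueness and adversarial priorities --- are sound and line up with the paper's cases 2 through 5.
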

\begin{proof}
	We analyze seller 1's best responses for different combinations of $p_{1, 2}$ and $p_{2, 2}$.
	\begin{itemize}
		\item $p_{1, 2} > 0, p_{2, 2} > 1$.
		Consider $p_{1, 1} = \frac{1}{4} p_{1, 2}$, and $p_{2, 1} = 1 + p_{2, 2}$.
		Since $p_{1, 1} < \frac{1}{2} p_{1, 2}$ and $p_{2, 1} < 2p_{1, 2}$, both buyers strictly prefer item 1, and thus for any stable allocation, either both buyers spend all budgets on item 1, or item 1 is fully sold.
		For the former case, seller 1's revenue is 4.
		Below assume item 1 is fully sold.
		\begin{itemize}
			\item If $p_{1, 1} \geq p_{2, 1} > 2$, seller 1's revenue is at least $\min(p_{1, 1}, p_{2, 1}) > 2$.
			\item If $p_{1, 1} < p_{2, 1}$, then seller 1 prefers buyer 2 and buyer 2 will spend all its budget on item 1.
			But buyer 2 cannot buy item 1 in whole, so a fraction of item 1 will be sold to buyer 1, and seller 1's total revenue is also strictly larger than 2.
		\end{itemize}
		\item $p_{1, 2} < 2,  p_{2, 2} \leq 1$.
		Then seller 2's revenue is strictly less than 2.
		Seller 1 can accommodate all the rest budgets simply by choosing $p_{1, 1} = p_{2, 1} = 4$.
		\item $p_{1, 2} > 4,  p_{2, 2} \leq 1$.
		Choose $p_{1,1} = 2 + \epsilon$ for some small $\epsilon > 0$ such that $p_{1,1} < \frac{p_{1,2}}{2}$ and buyer 1 strictly prefers item 1 to item 2. Then buyer 1 will deplete its budget on item 1 but cannot buy it entirely, and buyer 2 will spend a positive amount on the remainder of item 1 since item 2 costs at most 1 dollar.
		\item $p_{1, 2} \in (2, 4], p_{2, 2} \leq 1$.
		Consider $p_{1, 1} = \frac{1 - \epsilon}{2} \cdot p_{1, 2}$ and $p_{2, 1} = \frac{2}{\epsilon}$.
		By theorem \ref{thm:unique}, the stable allocation is unique.
		Buyer 2 spends at most 1 dollar on item 2, so it spends at least 1 dollar on item 1 (it has higher priority) but buys at most a fraction of $\epsilon$.
		Buyer 1 strictly prefers item 1, so the rest part of item 1 (excluding the fraction sold to buyer 2) must be sold to buyer 1, which generates a revenue of at least $ \frac{(1 - \epsilon)^2}{2} \cdot p_{1, 2}$.
		With sufficiently small $\epsilon$, we have $ \frac{(1 - \epsilon)^2}{2} \cdot p_{1, 2} + 1 > 2$.
		\item $p_{1, 2} = 2, p_{2, 2} \leq 1$. Choose $p_{1, 1} = \epsilon$ and $p_{2, 1} = \frac{1}{\epsilon}$ for sufficiently small $\epsilon$.
		By theorem \ref{thm:unique}, the stable allocation is unique.
		Buyer 1 strictly prefers item 1, buyer 2 strictly prefers item 2, but $p_{1, 1}$ and $p_{2, 2}$ are both smaller than 2, so both items will be fully sold in stable allocation.
		On the other hand, buyer 1 has top priority on item 2, buyer 2 has top priority on item 1, and $p_{1, 2}$ and $p_{2, 1}$ are both no less than 2, so both buyers will deplete their budgets in stable allocation.
		Furthermore, buyer 1 must spend some money on item 1 since item 1 cannot be fully sold to buyer 2.
		Therefore buyer 1 spends strictly less than 2 on item 2, and the revenue of seller 2 is also strictly less than 2.
		Since both buyers will deplete budgets, seller 1's revenue will be strictly larger than 2.
	\end{itemize}
\end{proof}

By symmetry, seller 2 always has a response guaranteeing a revenue strictly larger than 2, as well.
Clearly the revenues of both sellers cannot exceed 2 simultaneously.
Therefore there exists no equilibrium no matter how it is defined.
From the proof we can also see how the prices can be personalized to benefit the seller.

\subsection{Competitive Duopoly Pricing Markets}
\label{subsec:competitive_duopoly_pricing_markets}

One contributor to the success of the deviation in Theorem \ref{proposition:no_pne} is that the opponent's pricing is always fixed.
Consider the seemingly promising strategy for buyer 2: $p_{1, 2} = 2, p_{2, 2} \leq 1$ (which is similar to the competitive equilibrium pricing).
After seller 1 pushes the bang-per-buck ratio for buyer 1 high enough, some money of buyer 2 will be attracted from item 2 to item 1, and thus seller 2's revenue will decrease.
However, seller 1 can always deplete buyer 2's remaining budget (after buying item 2) by pushing the bang-per-buck ratio for buyer 2 low enough.

Suppose, however, that seller 1 does carry out the deviation.
Seller 2 might not sit still, and there is a straightforward (not necessarily optimal) countermeasure: lowering the bang-per-buck ratio for buyer 2 to a level that is slightly higher than seller 1.
This will not change buyer preference, and seller 2 can at least earn some more money from buyer 1.

Motivated by the above observation, we attempt to get some glimpse into the possible steady states of a personalized market by incorporating a ``competitive'' feature into the model, such that buyer's preference is still based on each seller's reported prices, but trading is settled with the \textit{lowest} bang-per-buck ratio among all sellers.
Note that this can be approximated arbitrarily closely as described above.
To make it amenable, we also introduce \textit{bang-per-buck pricing}\footnote{It relates to auto-bidding in first price auctions markets \citep{conitzer2022pacing}, and it provides a weak form of optimality for the seller. See Appendix \ref{app:bang_per_buck_pricing} for detailed discussion.} and only define this modified market for two sellers.


\begin{definition}
	A \textbf{competitive duopoly pricing market} is defined over a linear Fisher market  with two sellers.
	The trade proceeds with three stages:
	
	(1, reporting bang-per-buck)
	Seller 1 and 2 each sets a bang-per-buck $a_i \geq 0$ and $b_i \geq 0$ for each buyer $i$.
	
	(2, pricing and prioritizing)
	Buyer $i$'s preference depends on which one of $a_i$ and $b_i$ is smaller.
	Buyers have no preference over items owned by the same seller since the bang-per-bucks are equalized.
	Item $j$'s price for buyer $i$ is instead given by $p_{i, j} = v_{i, j} \cdot \max(a_i, b_i)$.
	Priorities should be natural w.r.t. $p$.
	
	(3, trading) Trade happens according to an allocation $x$ that is stable w.r.t. the priorities and preferences defined above (instead of the clearing prices, which give the same bang-per-buck for all items).
\end{definition}

Note that the market is still unpredictable in general, since the construction used in the proof of Theorem \ref{thm:apx_complete} implies that the problem is APX-hard even for the restricted case where (1) item's priorities are natural and strict among buyers with non-zero valuations; (2) items can be grouped into two sets, and every buyer is indifferent among items in the same group. 

Below we do not bother with the allocation selection problem.
Instead we only focus on a generalized version of the market in Table \ref{tab:no_pne} where the selection rule will not be an issue.
Then we could directly analyze the Nash equilibrium of the normal form game, i.e., the bang-per-buck profile $(a, b)$ such that each seller is best responding to its opponent.

\begin{table}[h]
	\centering
	\begin{tabular}{ccccc}
		\toprule
		& \multicolumn{2}{c}{seller 1} & \multicolumn{2}{c}{seller 2}
		\\
		& item 1  & item 2 & item 3 & item 4
		\\ \midrule
		buyer 1 & $s$ & $2t$ & $2(1 - t)$ & $1 - s$
		\\
		buyer 2 & $2s$ & $t$ & $1 - t$ & $2(1 - s)$
		\\ \bottomrule
	\end{tabular}
	\caption{The underlying market is identical to the one in Table \ref{tab:no_pne}, but the item ownership is different.}
	\label{tab:starting_example}
\end{table}

\begin{example} \label{example:duopoly}
	Consider  the market depicted in Table \ref{tab:starting_example} where each buyer has a budget of 2.
	The underlying Fisher market is essentially identical to the one in Table \ref{tab:no_pne}: one item, denoted as $j_1$, that buyer 1 values 2 dollars and buyer 2 values 1 dollar, and the other item $j_2$ that buyer 1 values 1 dollar and buyer 2 values 2 dollars.
	The difference lies in the ownership: seller 1 now owns a fraction $t$ of $j_1$ and a fraction $s$ of item $j_2$, while seller 2 owns the rest.
\end{example}

\begin{proposition} \label{proposition:duopoly_example_ce}
	The competitive equilibrium bang-per-buck profile $a_1 = a_2 = b_1 = b_2 = 1$ forms a Nash equilibrium if and only if $s, t \in [1/3, 2/3]$.
\end{proposition}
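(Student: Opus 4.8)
The plan is to read $(a,b)$ as a two-player normal-form game whose payoffs are the sellers' revenues at the induced stable allocation, and to verify the equilibrium condition by computing each seller's best response against the opponent's fixed profile $(1,1)$. First I would pin down the payoffs at the candidate profile itself. At $a_1=a_2=b_1=b_2=1$ every price equals the valuation, so every item has unit bang-per-buck for both buyers; moreover each buyer has top priority on a set of items worth exactly $2$ (items $2,3$ for buyer $1$ and items $1,4$ for buyer $2$), so both buyers deplete their budgets and the total money spent is $4$. Although the stable allocation is highly non-unique here, the \emph{revenue split} is pinned: seller $2$'s revenue is a weighted assignment of items $3,4$, hence at most the max-weight value $2(1-t)+2(1-s)$, and symmetrically seller $1$'s is at most $2s+2t$; since the two add up to the spent budget $4$, both bounds are tight. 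Thus seller $1$ collects exactly $2s+2t$ and seller $2$ exactly $4-2s-2t$, regardless of the selected allocation, which is the sense in which ``selection is not an issue.''

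Next I would use the symmetries to cut the work. Relabeling the two sellers sends $(s,t)\mapsto(1-s,1-t)$ and interchanges their roles, while relabeling the two buyers together with $j_1\leftrightarrow j_2$ sends $(s,t)\mapsto(t,s)$; both maps preserve the box $[1/3,2/3]^2$. It therefore suffices to analyze seller $1$'s incentive to deviate and to establish a single boundary inequality (say $s\ge 1/3$), the other three following by symmetry.

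The core step is the best-response computation for seller $1$, who picks a rate $a_i$ for each buyer against the opponent's rate $1$. For buyer $i$ there are three qualitatively distinct regimes: undercutting ($a_i<1$), which wins buyer $i$'s preference but, since the price settles at $\max(a_i,1)=1$, only sells at the competitive rate; matching ($a_i=1$); and overpricing ($a_i>1$), which cedes buyer $i$ to seller $2$ but lets seller $1$ harvest, at the inflated rate $a_i$, whatever budget spills over once seller $2$'s supply is exhausted. Within each regime the induced revenue is monotone in $a_i$, so the optimum lies at a regime boundary and only finitely many extreme rate-combinations must be checked. For each combination I would read off the induced stable allocation from the buyers' preferences, the priorities (natural with respect to the \emph{deviated} prices $p_{i,j}=v_{i,j}\max(a_i,b_i)$, so the default ``item $1$ prioritizes buyer $2$, item $2$ prioritizes buyer $1$'' can flip when a rate is large), and the budget/overflow bookkeeping, invoking Theorem~\ref{thm:unique} whenever the induced preferences and priorities are strict so that the allocation and hence the revenue are determined. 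This yields seller $1$'s optimal-deviation revenue $\rho_1(s,t)$ as an explicit piecewise-linear function; the profile is an equilibrium exactly when $\rho_1(s,t)\le 2s+2t$ together with the symmetric condition for seller $2$, and solving these inequalities should produce $s,t\in[1/3,2/3]$.

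The hard part will be the overflow accounting in the best response. When seller $1$ overprices one buyer, that buyer floods seller $2$ and then spills back onto seller $1$'s items at the inflated rate; because natural priority depends on the deviated prices, the diverted buyer may still outrank the captured buyer on seller $1$'s high-value item, creating a priority-driven contest for seller $1$'s supply that must be resolved carefully. I expect this, together with taking the worst case over the several stable allocations that coexist at the boundary and tie profiles, to be the main obstacle; the thresholds $1/3$ and $2/3$ should emerge precisely as the values of $(s,t)$ at which $\rho_1(s,t)$ meets the competitive revenue $2s+2t$, so the crux is to compute $\rho_1$ exactly and confirm that no mixed-regime deviation does better.
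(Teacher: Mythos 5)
Your overall skeleton matches the paper's proof: fix the opponent at rate $1$, classify seller $1$'s per-buyer deviations into undercut/match/overprice regimes, use budget-depletion and market-clearing to pin down the induced allocation, and invoke the two symmetries $(s,t)\mapsto(1-s,1-t)$ and $(s,t)\mapsto(t,s)$ to reduce to one inequality. Your opening observation that the payoffs at the candidate profile are determined despite allocation multiplicity (each seller's revenue is capped at $2s+2t$ and $4-2s-2t$ respectively, and the caps sum to the total spent budget $4$) is a useful addition that the paper leaves implicit.

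However, there is a genuine gap in the core step. You claim that within each regime the revenue is monotone in the deviation rate, so that the optimum sits at a regime boundary and $\rho_1(s,t)$ is piecewise linear. This is false, and the error is fatal to the ``only if'' direction. Carrying out the bookkeeping for the relevant deviation (lower $a_2$, raise $a_1$ to $p>1$), the budget equations give seller $1$'s revenue as
\begin{displaymath}
R_1(p) \;=\; 2s + 4 - \bigparen{\tfrac{8}{3} - 2t}\,p \;-\; \tfrac{4}{3}\cdot\tfrac{1}{p},
\end{displaymath}
which is strictly concave in $p$, not monotone, with $R_1(1)=2s+2t$ equal to the equilibrium payoff and $R_1'(1)=2t-\tfrac{4}{3}$. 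When $t>2/3$ the maximum over $p>1$ is attained at the \emph{interior} point $p^*=\sqrt{2/(4-3t)}>1$ and strictly exceeds $2s+2t$; at the regime boundary $p\to 1^+$ the deviation revenue only \emph{equals} the equilibrium revenue, and as $p\to\infty$ it tends to $-\infty$. So a boundary check would never detect the profitable deviation, and your procedure would certify the profile as an equilibrium for all $(s,t)$, contradicting the statement. The fix is exactly what the paper does: derive the explicit (concave, non-linear) revenue function in each regime and read the equilibrium condition off the sign of its derivative at $p=1$, which is where the thresholds $1/3$ and $2/3$ actually come from.
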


Intuitively, with $s, t \in [1/3, 2/3]$, the sellers are kind of \textit{similar} to each other in both the volume and the quality of the items they provide.
This makes the market competitive enough to reach an efficient state.

\begin{proposition} \label{proposition:non_pareto_ne}
	With $s = 1$ and $t = 0$ (the ownership in Table \ref{tab:no_pne}),
	$a_2 = b_1 = \frac{4}{3}, a_1 = b_2 = 1$ forms a Nash equilibrium.
	Both buyers will get half of both items at the Nash equilibrium.
\end{proposition}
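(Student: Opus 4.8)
The plan is to read $(a,b)=(1,\tfrac43,\tfrac43,1)$ as a strategy profile of the normal-form game, translate it into the induced personalized prices, preferences and natural priorities, and then show it is a mutual best response. Write $j_2$ for seller $1$'s good ($v_{1,j_2}=1,\ v_{2,j_2}=2$) and $j_1$ for seller $2$'s good ($v_{1,j_1}=2,\ v_{2,j_1}=1$), each of supply $1$; both buyers have budget $2$. For any profile the multiplier faced by buyer $i$ is $m_i=\max(a_i,b_i)$, giving buyer $i$ the price $v_{i,j}m_i$ and a common bang-per-buck $1/m_i$ across the two goods, so that only the strict seller-preference (buyer $i$ prefers seller $1$'s good iff $a_i<b_i$) and the natural priority (the higher-priced buyer on each good) distinguish the goods. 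First I would verify the equilibrium point itself: at $(1,\tfrac43,\tfrac43,1)$ we get $m_1=m_2=\tfrac43$, buyer $1$ strictly prefers $j_2$ while buyer $2$ strictly prefers $j_1$, and $j_2$ (resp. $j_1$) prioritizes buyer $2$ (resp. buyer $1$). With strict preferences and strict natural priorities, Theorem~\ref{thm:unique} gives a unique stable allocation; solving the ``both goods sold, both budgets spent'' linear system forces $x_{i,j}=\tfrac12$ for all $i,j$, yielding revenue exactly $2$ for each seller.

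By the symmetry of the market under simultaneously swapping the two buyers, the two goods and the two sellers (the valuation matrix is invariant, and the profile maps to itself), it suffices to fix seller $2$ at $(b_1,b_2)=(\tfrac43,1)$ and show that no choice of $(a_1,a_2)\in[0,\infty)^2$ lets seller $1$ earn more than $2$ in \emph{any} stable allocation; since the equilibrium already attains $2$, this certifies a best response. I would organize the argument by the induced preferences, i.e. by the signs of $a_1-\tfrac43$ (buyer $1$) and $a_2-1$ (buyer $2$). When $a_2\le 1$ buyer $2$ weakly prefers $j_2$, faces $m_2=1$ and price $2$ for $j_2$, and holds priority on it, so it buys all of $j_2$ and depletes its budget; buyer $1$ gets none of $j_2$ (if $a_1>2$ it prefers and can fully fund $j_1$, otherwise it is outpriced on $j_2$), so seller $1$'s revenue is exactly $2$. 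When $a_2>1$ and $a_1>\tfrac43$ both buyers strictly prefer $j_1$; whichever buyer has higher natural priority on $j_1$ funds its whole budget there and buys none of $j_2$, so $j_2$'s revenue comes from a single buyer and is at most its budget, hence $<2$.

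The crux is the remaining region $a_1\le\tfrac43,\ a_2>1$, where buyer $1$ prefers $j_2$ and buyer $2$ prefers $j_1$ -- exactly the ``crossed'' configuration of the equilibrium. Here $m_1=\tfrac43$ independently of $a_1$, and for $1<a_2<\tfrac83$ buyer $1$ has priority on $j_1$ while buyer $2$ has priority on $j_2$; the unique stable allocation is pinned down by the two budget equations together with both goods being fully sold, and I would compute seller $1$'s revenue in closed form as
\[
R_1(a_2)=\frac{14}{3}-a_2-\frac{16}{9a_2},
\]
whose unique stationary point on $(1,\tfrac83)$ is the interior \emph{maximum} $a_2=\tfrac43$ with $R_1(\tfrac43)=2$; for $a_2\ge\tfrac83$ buyer $2$ exhausts its budget on $j_1$ and $R_1=\tfrac43$. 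Thus seller $1$ can never exceed $2$, with equality precisely at $a_2=\tfrac43$.

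The main obstacle is this crossed region: unlike the other cases, where one buyer monopolizes seller $1$'s good or abandons it entirely, here both buyers split both goods and seller $1$'s payoff is a genuinely non-monotone function of $a_2$, so establishing that the equilibrium value $2$ is a true (interior) maximum rather than merely a stationary point requires explicitly solving the fractional-knapsack stability conditions and checking the second-order behavior. Minor care is also needed at the priority-tie loci ($a_2=\tfrac83$ and $m_1=2$), where Theorem~\ref{thm:unique} does not apply, but there the bound $R_1\le 2$ follows by continuity or direct inspection. Combining all regions shows $(1,\tfrac43)$ is a best response for seller $1$, and by symmetry $(\tfrac43,1)$ for seller $2$, so the profile is a Nash equilibrium with the claimed half-half allocation.
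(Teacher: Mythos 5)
Your proposal is correct and follows essentially the same route as the paper, which only sketches this proof as "similar to the previous one by solving the system of equations regarding market clearance and budget depletion" — exactly what you do in the decisive crossed region, where your closed form $R_1(a_2)=\frac{14}{3}-a_2-\frac{16}{9a_2}$ checks out and is maximized at $a_2=\frac{4}{3}$ with value $2$. Your treatment is in fact more complete than the paper's, since you also dispose of the non-crossed preference regions and flag the tie loci explicitly.
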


In this case, though with similar market sizes, each seller provides a distinctive product that attracts a different audience.
Though the market clears, the allocation is not Pareto-optimal for buyers.
In this particular case, seller competition hurts buyers.
Sellers seem to get nothing from personalization, but if one of them did not do so, the opponent will gain the advantage.

\section{Conclusion and Future Works}

In this paper, we examine the price competition among sellers within the foundational model of linear Fisher market.
We show that, given sellers’ pricing (and priorities), the set of allocations satisfying the “stability” property constitutes the steady-states of the market. Stable allocations are both economically reasonable and computationally practical, but not unique, which represents an uncertain aspect of the market.
We move on to studying markets with different degrees of personalization, which turn out to have quite distinct properties.
If no buyer personalization is allowed, the competitive equilibrium remains to form a steady-state (the maximin equilibrium), albeit not the unique one.
On the other hand, personalization could bring sellers more revenue, but the resulting market becomes more unpredictable, as evidenced by the two-sided hardness of maximizing/minimizing revenue and the non-existence of (static) equilibrium.

There are several interesting directions for future research.
Most of our theorems (except for Theorem \ref{proposition:no_pne}) hold regardless of the number of sellers, but the properties of various solution concepts are mainly examined via small instances containing two sellers.
Though they provide some meaningful insights, it is important to explore these properties in a more general sense.
In Section \ref{sec:personalized_pricing_game} we give much evidence to show that personalized markets are unpredictable.
It is of great importance to explore what kinds of (possibly dynamic) states these markets might end up with (Section \ref{subsec:competitive_duopoly_pricing_markets} serves as a preliminary attempt in this direction).

\bibliographystyle{abbrvnat}
\bibliography{bibliography}

\clearpage
\newpage

\appendix

\section{Proof of Theorem \ref{thm:gale_shapley}}
\label{app:gale_shapley}

Theorem \ref{thm:gale_shapley} is a direct corollary of the following lemma.
\begin{lemma}
	Algorithm \ref{alg:gale_shapley} returns a stable allocation for any (rational) input parameters in strongly poly-time.
\end{lemma}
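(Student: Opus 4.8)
The plan is to prove three things about the Gale--Shapley-style Algorithm \ref{alg:gale_shapley}: that it maintains a \emph{compatible} partial allocation throughout, that it halts at an allocation satisfying the terminal condition of Lemma \ref{lemma:compatible_stable}, and that it performs only polynomially many combinatorial updates so the whole computation is strongly polynomial. The first two yield correctness via Lemma \ref{lemma:compatible_stable}; the third is where the real work lies.

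For the invariant, I would show that at every point of the execution each buyer $i$ spends money only on items whose bang-per-buck weakly exceeds that of any item currently available to $i$ --- i.e. the running allocation is compatible. This holds for the empty initial allocation and is preserved by each update, since a buyer only proposes to a most-preferred currently-available item, and an item only ever reassigns a held fraction \emph{upward} in priority (rejecting a lower-priority holder in favour of a strictly higher-priority proposer it is willing to serve). At termination no buyer can make further progress, meaning that for every $i$ either its budget is exhausted or every item available to $i$ is already fully held by $i$. That is exactly the hypothesis of Lemma \ref{lemma:compatible_stable}, so the terminal compatible allocation is stable. Rationality of the output under rational input is immediate, as every fraction produced solves a linear ratio/bottleneck computation over rational data.

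The main obstacle is termination in strongly polynomial time, which is precisely the source of the ``excessively long or even infinite loops'' flagged in the introduction. The natural attempt is to sweep a global bang-per-buck threshold $\tau$ downward, finalizing at each critical ratio the fractions that buyers at that ratio can claim by priority via a single strongly-polynomial maximum-flow / bottleneck computation. A pure downward sweep fails, however, because a buyer with \emph{high} priority but a \emph{low} bang-per-buck for some item $j$ may be entitled (by stability condition (3.b)) to displace a lower-priority holder of $j$ even after $\tau$ has dropped below the ratio at which $j$ was first allocated; such cross-level displacements can free capacity on higher-ratio items and tempt earlier buyers back, recreating the loop. The plan is therefore to attach two monotone potentials --- a non-increasing acceptance ratio per buyer and a non-decreasing admitted-priority threshold per item --- and to argue that once a (buyer, item) pair is ``retired'' (the buyer permanently pushed below the item's threshold) it never reactivates, because a buyer can only revisit an abandoned item after its acceptance ratio has fallen, by which time the item's threshold has already risen past it. Granting this, each of the $O(nm)$ pairs retires at most once, bounding the number of phases, while each phase is resolved by one strongly-polynomial flow computation. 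The delicate part --- and what the appendix must carry out in full --- is batching the simultaneous priority-driven displacements within a phase so that the two potentials advance consistently and no retired pair is ever revived, which is exactly the loop-detection-and-resolution the introduction alludes to.
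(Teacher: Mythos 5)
Your high-level skeleton matches the paper's: maintain compatibility as an invariant, observe that the termination condition is exactly the hypothesis of Lemma \ref{lemma:compatible_stable}, and concentrate the difficulty in bounding the number of updates. The compatibility-invariant argument and the reduction of correctness to Lemma \ref{lemma:compatible_stable} are essentially what the paper does.

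However, there is a genuine gap in the termination argument, and it sits precisely where you admit the work remains. The source of non-termination in the fractional Gale--Shapley dynamic is not a (buyer, item) pair being repeatedly retired and revived; it is a \emph{money-circulation cycle} among a fixed set of active pairs: buyer $i_0$ proposes to $j_0$ and displaces part of $i_1$'s holding, $i_1$ regains budget and proposes to $j_1$ displacing $i_2$, and so on until $i_\ell$ displaces part of $i_0$'s holding of $j_\ell$, returning budget to $i_0$ --- and the cycle repeats with geometrically shrinking amounts, with no pair ever leaving the active set. Your two monotone potentials (a non-increasing acceptance ratio per buyer, a non-decreasing priority threshold per item) do not advance at all during such a cycle, so the ``each pair retires at most once'' count does not bound the number of iterations within a phase. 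You gesture at resolving each phase ``by one strongly-polynomial flow computation,'' but this is asserted, not constructed: you would need to exhibit the limit allocation of the circulation and verify it is non-negative, budget-feasible, and still compatible. That verification is the bulk of the paper's proof: the algorithm detects the cycle syntactically (a proposal pair recurring three times), proves structural assertions about it (the displaced successor is unique, buyers and items in the cycle are distinct, all intermediate buyers have zero remaining budget), rolls back to the allocation before the cycle, and applies a single simultaneous update scaled by a closed-form factor $r$ equal to the minimum of budget and allocation ratios --- the limit of the geometric process. Termination is then argued by a case analysis showing every round of the main loop triggers one of three events (a buyer's active status is extinguished, an item becomes fully allocated for the first time, or a buyer leaves an item's temporary buyer set), each of which occurs only polynomially often. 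Without an explicit construction of the within-phase limit and a proof that it preserves the invariant, your argument does not yet establish the lemma; and since the lemma is about the specific Algorithm \ref{alg:gale_shapley}, replacing its mechanism by a threshold sweep would in any case prove a different statement.
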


\begin{algorithm*}
		\small
	\caption{Fractional Gale-Shapley algorithm}
	\label{alg:gale_shapley}
	\LinesNumbered
	\DontPrintSemicolon
	\SetAlgoNoEnd
	\SetKwProg{Fn}{Function}{:}{}
	\SetKwFunction{FNormalUpdate}{normal\_update}
	
	\KwIn{valuations $v_{i, j}$, prices $p_{i, j}$, priorities $\geq_j$, budgets $B_i$}
	\nllabel{alg:gale_shapley:start}
	for each item $j$, maintain temporary allocations $x_{i, j}$ (initially $ 0$) and the set of temporary buyers $\{i: x_{i, j} > 0\}$ \;
	for each buyer $i$, maintain its remaining budget (initially $ B_i$) and the set of currently proposable items (initially all items with $v_{i, j} > 0$) \;
	for each buyer $i$, sort items in descending order of bang-per-bucks; buyers could only propose to buy the item with the maximum bang-per-buck among currently proposable items (ties are broken in an arbitrary but fixed way) \;
	\While{there exists buyer $i$ with non-zero remaining budget and an item $j$ that is proposable to $i$}{
		\nllabel{alg:gale_shapley:main_loop}
		\FNormalUpdate{$i$, $j$} \;
		\nllabel{alg:gale_shapley:naive_end}
		
		\nllabel{alg:gale_shapley:no_loop_1}
		\If{(1) item $j$ is fully allocated before the last \FNormalUpdate, and (2) no buyer leaves the temporary buyer set of item $j$ during the last \FNormalUpdate}{
			\nllabel{alg:gale_shapley:find_circle}
			\Repeat{either condition (1) or (2) is violated for the last \FNormalUpdate, or $i'$ has no proposable item, or a proposal pair $(i_0, j_0)$ is encountered three times within this repeat-until  loop}{
				assertion: when condition (1) and (2) are both satisfied after one call of \FNormalUpdate{$i$, $j$}, except for buyer $i$, there should be only one buyer, denoted $i'$, whose allocation is changed (transferred to buyer $i$) during the update \nllabel{alg:gale_shapley:unique_successor} \;
				\If{there exists item $j'$ that is proposable to buyer $i'$ (whose remaining budget must be non-zero due to the last \FNormalUpdate)}{
					\FNormalUpdate{$i'$, $j'$}
				}
			}
			
			\If{the above repeat-until loop exits due to a proposal pair is encountered three times}{
				\nllabel{alg:gale_shapley:no_loop_2}
				denote the proposal sequence between the last two \texttt{normal\_update($i_0, j_0$)} as $(i_1, j_1), \dots, (i_\ell, j_\ell)$  \;
				assertion: the proposal sequence between the first two \texttt{normal\_update($i_0, j_0$)} is also $(i_1, j_1), \dots, (i_\ell, j_\ell)$ \nllabel{alg:gale_shapley:assertion_start} \;
				assertion: $i_0, i_1, \dots, i_\ell$ and $j_0, j_1, \dots, j_\ell$ contain no duplicate respectively
				\;
				assertion: $i_1, \dots, i_\ell$ has no remaining budget before the second \texttt{normal\_update($i_0, j_0$)} \nllabel{alg:gale_shapley:assertion_end} \;
				
				Denote $i_0$'s budget before the second and third \texttt{normal\_update($i_0, j_0$)}  as $B_0$ and $B_0 - \Delta B_0$ respectively; 
				let the allocation before the second \texttt{normal\_update($i_0, j_0$)} be $y$;
				during the most recent proposal sequence $(i_1, j_1), \dots, (i_\ell, j_\ell)$, for each $k$, $i_k$ gains $\Delta y_{i_{k + 1}, j_k}$ more of item $j_k$ from $i_{k + 1}  $($i_{\ell + 1} := i_0$), and loses $\Delta y_{i_k, j_{k - 1}}$ of item $j_{k - 1}$ ($j_{0 - 1} := j_\ell$) to $i_{k - 1}$ \;
				if $\Delta B_0 > 0$, let $r = \min \bigbraces{\frac{B_0}{\Delta B_0}, \frac{y_{i_0, j_{\ell}}}{\Delta y_{i_0, j_\ell}}, \dots, \frac{y_{i_\ell, j_{\ell - 1}}}{\Delta y_{i_\ell, j_{\ell - 1}}}}$; otherwise  let $r = \min \bigbraces{\frac{y_{i_0, j_{\ell}}}{\Delta y_{i_0, j_\ell}}, \dots, \frac{y_{i_\ell, j_{\ell - 1}}}{\Delta y_{i_\ell, j_{\ell - 1}}}}$ \;
				update $x_{i_k, j_k} = y_{i_k, j_k} + r \Delta y_{i_{k + 1}, j_k}$, $x_{i_k, j_{k - 1}} = y_{i_k, j_{k - 1}} - r \Delta y_{i_k, j_{k - 1}}$ for each $k \in \{0, \dots, \ell\}$, update budgets and proposability accordingly \;
				\nllabel{alg:gale_shapley:budget_replacement}
			}
		}
		\nllabel{alg:gale_shapley:end}
	}
	
	\nllabel{alg:gale_shapley:normal_update}
	\Fn{\FNormalUpdate{$i$, $j$}}{
		$i$'s demand on $j$ is $d_{i, j} = \min\{1, b_i / p_{i, j} + x_{i, j}\}$; every other temporary buyer $i'$ has a demand $d_{i', j} = x_{i', j}$ \;
		reallocate $j$ by filling demands one-by-one in decreasing order of $\geq_j$ (ties are broken in favor of buyers who come to item $j$ earlier) until no supply is available \;
		if $j$ is currently fully allocated, update proposability: let $i^*$ be the buyer with lowest priority among current buyers of $j$; mark item $j$ as not proposable to each buyer $i'$ with $i^* \geq_j i'$ (including $i^*$ itself) \;
		budgets are updated accordingly \;
	}
\end{algorithm*}

\begin{proof}
	\textbf{Correctness.}
	The algorithm terminates when the current allocation satisfies the condition in lemma \ref{lemma:compatible_stable} (line \ref{alg:gale_shapley:main_loop} of algorithm \ref{alg:gale_shapley}).
	Thus we only need to show that, after each round of the main while loop (from line \ref{alg:gale_shapley:main_loop} to \ref{alg:gale_shapley:end}), the allocation is always compatible.
	
	A single call of \texttt{normal\_update} consists of one buyer proposing to  one item in analogy to the original Gale-Shapley algorithm.
	Using the standard argument for Gale-Shapley algorithm, it is easy to show that \texttt{normal\_update} will maintain the compatibility of the current allocation.
	
	A single round of the main while loop can only finish at either line \ref{alg:gale_shapley:no_loop_1}, \ref{alg:gale_shapley:no_loop_2} or \ref{alg:gale_shapley:budget_replacement}.
	For the first two cases, the algorithm simply conducts a finite number of \texttt{normal\_update}, so the allocation must be compatible.
	
	For the third case, after each \texttt{normal\_update($i, j$)} before line \ref{alg:gale_shapley:no_loop_2}, condition (1) and (2) defined in line \ref{alg:gale_shapley:find_circle} are both satisfied, which means that there is some buyer $i'$ who transfers part of its allocation of item $j$ to buyer $i$ (recall that demands are fulfilled with a fixed order; this also justifies the assertion at line \ref{alg:gale_shapley:unique_successor}).
	Buyer $i$ can buy item $j$ from buyer $i'$, so it has a strictly higher priority than $i'$.
	But buyer $i'$ does not leave the temporary buyer set (condition 2), which can only happen when buyer $i$ runs out of budget.
	This also indicates that, if buyer $i$ gets some money later (due to currently bought items transferred to buyers with higher priorities), the next item it will propose to is still $j$.
	With the above property, we can justify the three assertions from line \ref{alg:gale_shapley:assertion_start} to \ref{alg:gale_shapley:assertion_end}: the first holds since the successor is unique and unchanged; the second holds for if a buyer is duplicated, we should find a shorter circle, and if an item $j_k$ is duplicated, the buyer $i_{k + 1}$ that loses item $j_k$ will lead to a duplicated buyer; the third holds because they already run out of budgets during the proposal sequence between the first and second \texttt{normal\_update($i_0, j_0$)}.
	
	What line \ref{alg:gale_shapley:budget_replacement} does is to roll back to the allocation $y$ before the second \texttt{normal\_update($i_0, j_0$)} (which is compatible as argued above) , and then conduct a simultaneous update for all buyers $i_0, i_1, \dots, i_\ell$ involved in the circle.
	Let the final allocation be $x$.
	Now we only need to prove that $x$ is non-negative, budget-feasible and compatible.
	\begin{enumerate}
		\item (non-negativity) $r \leq \frac{y_{i_k, j_{k - 1}}}{\Delta y_{i_k, j_{k - 1}}}, \forall k$, so $x_{i_k, j_{k - 1}} = y_{i_k, j_{k - 1}} - r \Delta y_{i_k, j_{k - 1}} \geq 0$.
		\item (budget-feasibility)
		For $k \in \{1, \dots, \ell\}$, before the second \texttt{normal\_update($i_0, j_0$)}, $i_k$ has no remaining budget.
		During the last proposal sequence $(i_1, j_1), \dots, (i_\ell, j_\ell)$, $i_k$ gets $p_{i_k, j_{k - 1}} \Delta y_{i_k, j_{k - 1}}$ dollars returned from \texttt{normal\_update($i_{k - 1}, j_{k - 1}$)}, and then spends all of them to buy $\Delta y_{i_{k + 1}, j_{k}}$ of item $j_k$ from $i_{k + 1}$.
		Therefore we have
		\begin{displaymath}
			p_{i_k, j_{k - 1}} \Delta y_{i_k, j_{k - 1}}
			=
			p_{i_k, j_{k}} \Delta y_{i_{k + 1}, j_{k}}.
		\end{displaymath}
		During the simultaneous update, compared to $y$, $i_k$ gets $r \Delta y_{i_{k + 1}, j_k}$ more of item $j_k$, and loses $r \Delta y_{i_k, j_{k - 1}}$ of item $j_{k - 1}$.
		From the above equation, we can see that $i_k$'s budget remains zero.
		
		The situation for $i_0$ is a little different.
		During the last proposal sequence $(i_0, j_0), (i_1, j_1), \dots, (i_\ell, j_\ell)$ (excluding the third \texttt{normal\_update($i_0, j_0$)}), it loses $\Delta y_{i_0, j_\ell}$ of item $j_\ell$, gets $\Delta y_{i_1, j_0}$ more of item $j_0$, and loses $\Delta B_0$ dollars.
		If $\Delta B_0 \leq 0$, after the simultaneous update, its budget will increase and thus feasible.
		Otherwise, since $r \leq \frac{B_0}{\Delta B_0}$, after the simultaneous update, its budget becomes $B_0 - r \Delta B_0 \geq 0$.
		\item (compatibility)
		First note that, after the simultaneous update, each item transfers part of itself from a buyer with strictly lower priority to a buyer with strictly higher priority.
		So if item $j$ is not proposable to buyer $i$ with allocation $y$, it remains not proposable  to $i$ with allocation $x$.
		This indicates that $x$ remains compatible for buyers not involved in the simultaneous update.
		For each $k$, buyer $i_k$ loses part of item $j_{k - 1}$, which is already not proposable to it (since a buyer could only lose an item if it is the buyer with lowest priority among temporary buyers), and gains part of item $j_k$.
		Since $y$ is compatible, and item $j_k$ has the maximum bang-per-buck among available items with allocation $y$, it holds that $x$ is compatible for each buyer $i_k$.
	\end{enumerate}
	
	\textbf{Poly-time termination.}
	A circle $(i_0, j_0), (i_1, j_1), \dots, (i_\ell, j_\ell)$ can be identified within $O(n)$ calls of \texttt{normal\_update}, and each line of algorithm \ref{alg:gale_shapley} can be finished in polynomial time, so we only need to show that the main while loop can only be repeated a polynomial number of rounds.
	Recall that one round of the main while loop finishes with any one of the following conditions being satisfied: (1) item $j$ is not fully allocated before the last \texttt{normal\_update($i, j$)}; (2) some buyer leaves the temporary buyer set of some item during the last \texttt{normal\_update}; (3) the buyer $i'$ who loses item $j$ to buyer $i$ during the last \texttt{normal\_update($i, j$)} has no proposable item; (4) a circle is identified and a simultaneous update is conducted. Also note that the first three exit conditions can only be satisfied before a circle is identified.
	
	We will prove that, any one of the above  4 exit conditions will lead to one of the following events to happen: (a) the number of buyers with positive remaining budgets and non-empty set of proposable items strictly decreases (compared to the beginning of the current round of the main while loop); (b) some item becomes fully allocated for the first time during the current round; (c) some buyer leaves the temporary buyer set of some item during the current round.	
	\begin{enumerate}
		\item
		As discussed in the proof of correctness, if the current round of the main while loop does not finish after the last \texttt{normal\_update($i, j$)}, it must be the case where buyer $i$ runs out of its budget and only one buyer $i'$ loses item $j$ to buyer $i$.
		This indicates that the number of buyers with positive remaining budgets and non-empty set of proposable items will not increase after such a \texttt{normal\_update}.
		If item $j$ is not fully allocated before the last \texttt{normal\_update($i, j$)}, but becomes fully allocated afterwards, event (b) happens.
		Otherwise, buyer $i$ runs out of its budget and no one loses item $j$, thus event (a) happens.
		\item
		This is exactly event (c).
		\item
		The number of buyers with positive remaining budgets and non-empty set of proposable items will not increase before the last \texttt{normal\_update($i, j$)}.
		During the last update, buyer $i$ runs out of budget, and buyer $i'$ has no proposable item (even before the update), so event (a) happens.
		\item 
		Before the second \texttt{normal\_update($i_0, j_0$)}, the number of buyers with positive remaining budgets and non-empty set of proposable items will not increase.
		If $\Delta B_0 > 0$ and $r = \frac{B_0}{\Delta B_0}$, then after the simultaneous update, $i_0$ runs out of budget, $i_1, \dots, i_\ell$'s budgets remains zero, so event (a) happens.
		Otherwise, suppose $r = \frac{y_{i_k, j_{k - 1}}}{\Delta y_{i_k, j_{k - 1}}}$ for some $k$.
		Then $x_{i_k, j_{k - 1}} = y_{i_k, j_{k - 1}}  - r \Delta y_{i_k, j_{k - 1}} = 0$, i.e., $i_k$ leaves the temporary buyer set of item $j_{k - 1}$ and event (c) happens.
	\end{enumerate} 
	
	Event (b) can happen at most once for each item, and event (c) can happen at most once for each (buyer, item) pair.
	Regarding event (a), an item that is not proposable to a buyer will never return to be proposable, and a buyer's remaining budget can only increase if it leaves some item's temporary buyer set (event (c)).
	Therefore event (a) can only happen a polynomial number of times, as well.
\end{proof}

\section{Missing Results and Proofs in Section \ref{sec:non_personalized_pricing_game}}
\label{app:non_personalized_pricing_game}

\subsection{Proof of Theorem \ref{thm:one_sided_tractability} and Relation to Network Flow and Maximal Matching}
\label{app:network_flow_and_maximal_matching}

Theorem \ref{thm:one_sided_tractability} can be decomposed into the following two parts, which we will prove separately.

\begin{lemma} \label{lemma:edmonds_karp}
	Algorithm \ref{alg:stable_edmonds_karp} returns a stable allocation with maximum revenue in strongly poly-time.
\end{lemma}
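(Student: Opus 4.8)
The plan is to show that a specially-structured Edmonds-Karp execution on a flow network encoding of the allocation problem yields a maximum-revenue stable allocation in strongly polynomial time. Let me think carefully about what the network should look like and what invariant must be maintained.

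In a fully non-personalized market, an allocation is stable precisely when instability is ruled out: no buyer $i_0$ with positive valuation for an under-sold item $j_0$ has either leftover budget or a strict preference for $j_0$ over something it already buys. The key structural fact I want to exploit is Corollary~\ref{corollary:compatible_extension}: revenue-optimal stable allocations are exactly the seller-wise Pareto-optimal ones, equivalently the ones that sell as much of each item as possible. So maximizing revenue over stable allocations should coincide with pushing as much ``sold mass'' through the market as the stability/compatibility constraints permit.

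Let me sketch the network. The idea is to route each buyer's budget through items it is willing to buy. I would introduce a source $s$, a sink $t$, a node for each buyer $i$ and each item $j$. The arc $s \to i$ has capacity $B_i$ (the budget), and the arc $j \to t$ has capacity $p_j$ (the price, i.e., the revenue cap from fully selling item $j$). For the buyer-item arcs $i \to j$, the capacity should be $p_j$ (the most money $i$ could spend fully buying $j$), but crucially I want the flow to respect \emph{preference}: money should only flow from $i$ to $j$ if $j$ is among the highest-bang-per-buck items still available to $i$. This is where the standard Edmonds-Karp augmentation must be constrained — augmenting paths must be chosen so that each buyer only sends flow to items it (weakly) prefers over every item with positive residual capacity on its $j\to t$ arc. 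A flow value on $i\to j$ of $f_{i,j}$ corresponds to an allocation $x_{i,j} = f_{i,j}/p_j$, so item $j$ is fully sold iff its $j\to t$ arc is saturated, and buyer $i$ depletes its budget iff its $s\to i$ arc is saturated.

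The main work, and the main obstacle, is to design the augmentation rule so that \textbf{every intermediate flow corresponds to a compatible allocation}, and the terminal flow (no more admissible augmenting path) corresponds to a \emph{stable} one. First I would argue that restricting augmenting paths to preference-respecting buyer-item arcs preserves compatibility throughout: each buyer only ever spends money on items of maximal available bang-per-buck, which is precisely compatibility. Then, by Lemma~\ref{lemma:compatible_stable}, the final compatible allocation is stable iff each buyer either depletes its budget or gets all items available to it — and the absence of an admissible augmenting path must be shown to be exactly this condition (if some buyer had residual budget and an available under-sold item it prefers, that would give an admissible augmenting path, contradicting termination). For maximality of revenue, I would invoke Corollary~\ref{corollary:compatible_extension} in the contrapositive: the max-flow/min-cut value bounds total sold revenue, and since Edmonds-Karp terminates at a maximum flow subject to the preference constraints, the resulting stable allocation sells the most possible; any stable allocation is compatible, hence item-wise dominated by a revenue-optimal one, so the flow we compute attains the optimum. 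The subtle point I expect to wrestle with is that the preference constraint is \emph{state-dependent} (which items are ``available''/maximal-bang-per-buck can change as flow is pushed and prices become relevant), so I would likely process buyers' preferred items in rounds or thresholds of decreasing bang-per-buck, in the spirit of \citet{devanur2008market}, and carefully verify that saturating arcs only ever shrinks the admissible arc set monotonically — which is what keeps compatibility stable and gives the strongly polynomial bound via the standard $O(VE)$ Edmonds-Karp augmentation count. Establishing that no ``backward'' cancellation breaks compatibility (i.e. that residual arcs do not let a buyer retreat to a dispreferred item) is the delicate verification I would budget the most care for.
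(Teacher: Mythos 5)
Your network encoding and your target invariant (every intermediate flow corresponds to a compatible allocation; termination plus Lemma~\ref{lemma:compatible_stable} gives stability) match the paper, but the two load-bearing steps are missing or would fail. First, your revenue-optimality argument is circular: Corollary~\ref{corollary:compatible_extension} is explicitly derived \emph{from} the proof of Theorem~\ref{thm:one_sided_tractability}, i.e.\ from this very lemma (it follows because Algorithm~\ref{alg:stable_edmonds_karp} takes an arbitrary compatible allocation as input and extends it item-wise monotonically to a revenue-optimal stable one), so you cannot invoke it here. The paper instead gets optimality for free from the unrestricted network: the max-flow value equals the LP optimum over all budget-feasible allocations, which upper-bounds the revenue of every stable allocation; since the algorithm terminates at a true maximum flow whose allocation is stable, that bound is attained. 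Second, and more importantly, your mechanism of restricting the admissible buyer--item arcs to the currently most-preferred available items is not what makes the argument work, and it puts the ``terminates at a maximum flow of the full network'' claim in jeopardy (absence of an augmenting path in a state-dependent subnetwork does not certify max flow, so optimality would need a separate argument you do not supply). You correctly identify the delicate point --- backward cancellation routing an intermediate buyer onto a dispreferred item --- but you leave it unresolved.

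The paper's resolution is different and worth internalizing: it keeps \emph{all} arcs $(i,j)$ with $v_{i,j}>0$ in the network, runs Edmonds--Karp with shortest augmenting paths, and modifies only the \emph{last hop}: the terminal buyer $i_0$ of the BFS path is redirected to its most-preferred under-allocated item $j_1$ (line~\ref{alg:stable_edmonds_karp:most_preferred} of Algorithm~\ref{alg:stable_edmonds_karp}), yielding an equally short augmenting path. Compatibility of intermediate buyers is then protected by two facts: (i) item-wise monotonicity (Lemma~\ref{lemma:edmonds_karp_monotonicity}) --- by the alternating bipartite structure, every item on the path except the last has its total allocation unchanged, so the set of under-allocated items only shrinks; and (ii) a shortcut contradiction --- if an intermediate buyer $i_0$ newly acquired an item $j_0$ while preferring an under-allocated $j_1$, then $(i_0,j_1),(j_1,t)$ are both residual edges and splice into a strictly shorter $s$--$t$ path, contradicting the shortest-path choice. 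This is the key idea your proposal lacks; without it (or an equivalent substitute), neither the compatibility invariant nor the strongly polynomial bound (which you inherit from the standard Edmonds--Karp analysis only because the arc set is static) is established.
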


\begin{lemma} \label{lemma:minimum_maximal_matching}
	It is APX-hard to compute the minimum revenue over all stable allocations.
\end{lemma}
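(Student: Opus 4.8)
The plan is to reduce from the \emph{minimum maximal matching} problem (equivalently minimum edge dominating set), which is APX-hard and remains so on bipartite graphs. Given a bipartite instance $H = (L \cup R, F)$, I would build a fully non-personalized market whose ``wants'' graph is exactly $H$: let $L$ be the set of buyers and $R$ the set of items, give every item unit supply and unit price, every buyer unit budget, and set $v_{l,r}=1$ if $(l,r)\in F$ and $v_{l,r}=0$ otherwise. Since all prices and budgets equal $1$, any \emph{integral} allocation assigns each buyer at most one whole item and each item at most one whole unit, so integral allocations are precisely matchings of $H$. Using the instability criterion for fully non-personalized markets (an unsold item wanted by a buyer that still has budget), I would then check that an integral allocation is stable if and only if the corresponding matching is \emph{maximal}, and that its revenue equals the number of matched pairs. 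Thus integral stable allocations correspond bijectively to maximal matchings of $H$, with revenue equal to matching size.

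The crux is to show that the minimum revenue over \emph{all} (possibly fractional) stable allocations equals $\gamma(H)$, the size of a minimum maximal matching; the upper bound is immediate from the integral construction applied to a minimum maximal matching, so the work lies in the lower bound. Here I would argue that any fractional stable allocation can be converted into an integral one of no larger revenue. The support of a stable allocation is a bipartite subgraph of $H$; whenever it contains a cycle I would cancel flow around it, which, because the cycle alternates between the two sides, changes neither the total amount sold at any item nor the total amount spent by any buyer. Consequently revenue is preserved \emph{and} the two quantities that govern stability---whether an item is fully sold and whether a buyer has depleted its budget---are left intact, so stability is maintained while one edge is driven out of the support. Iterating yields a forest-supported stable allocation of the same revenue, whose remaining fractional leaves can then be rounded to $0$ or $1$ without increasing revenue or breaking stability. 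The resulting integral stable allocation is a maximal matching of size at most the original revenue, giving $\gamma(H) \le \rho$ for every stable allocation of revenue $\rho$.

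Combining the two bounds, minimizing revenue over stable allocations is exactly the problem of computing $\gamma(H)$, and since a stable allocation is poly-time computable by Theorem~\ref{thm:gale_shapley}, the reduction is gap-preserving: any $(1+\epsilon)$-approximation to the minimum-revenue stable allocation yields a $(1+\epsilon)$-approximation to minimum maximal matching, establishing APX-hardness. I expect the rounding step to be the main obstacle, since stability is a disjunctive (non-convex) condition and one must verify that cycle-canceling and the final leaf-rounding never create a blocking buyer--item pair; the clean observation that makes it work is that per-item sold totals and per-buyer spend totals---the only data stability depends on---are invariant under cycle-canceling. Should exact integrality of the optimum prove delicate in the forest residue, I note as a fallback that the set $A\cup S$ of depleted buyers together with fully-sold items always forms a vertex cover of $H$ of size at most $2\rho$, which preserves APX-hardness via an $L$-reduction up to the inherent factor-$2$ slack between stable allocations.
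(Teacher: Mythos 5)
Your reduction is the same as the paper's --- minimum maximal matching on bipartite graphs, with the identical market (unit prices, unit budgets, valuations given by adjacency) and the same key observation that integral stable allocations are exactly maximal matchings --- so everything hinges on the step you yourself flag: showing that every \emph{fractional} stable allocation of revenue $\rho$ yields a maximal matching of size at most $\rho$. The cycle-canceling half of your argument is sound, precisely because it leaves every buyer's total spend and every item's total sold amount unchanged. But that invariance is exactly what fails in the leaf-rounding half: rounding a fractional edge to $0$ or $1$ necessarily changes these totals (a partially-depleted buyer becomes matched or empty-handed; a partially-sold item becomes fully sold or unsold), and stability is a \emph{global} condition on the graph $H$, not on the support forest. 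Concretely, if you round so that a budget-depleted buyer $i$ ends up unmatched, you must check that every $H$-neighbor of $i$ --- including items in other trees of the forest, or items with no support edge to $i$ at all --- is fully sold after rounding; symmetrically, un-fully-selling an item can create a blocking pair with a non-depleted buyer elsewhere. The trees cannot be rounded independently, and you give no rule that resolves this coupling. Your fallback does not rescue the argument either: a vertex cover of size $2\rho$ only yields $\gamma(H)\le 2\rho$ via $\gamma\le\tau$, and a factor-$2$ loss destroys the $(1+\epsilon)$ gap needed for APX-hardness (besides which, minimum vertex cover is polynomial on bipartite graphs, so no hardness can flow through it).

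The paper closes this gap with a global rather than local argument. Let $I'$ be the depleted buyers and $J'$ the fully-sold items of the fractional stable $x$. Take a \emph{maximum} integral matching $y$ of the bipartite graph induced on $I'\times J'$; by integrality of the bipartite matching polytope, $y$ has at least as many edges as $x$ restricted to $I'\times J'$. Then extend $y$ by a maximum matching between the vertices it leaves uncovered; the result $z$ is maximal in all of $H$, hence stable, with no case analysis on individual leaves. The revenue bound follows from stability of $x$: buyers in $I'$ spend at most as much in $z$ as in $x$ (they were already depleted), and buyers outside $I'$ are only interested in items of $J'$, of which at most $|J'|-\sum_{I'\times J'}y_{i,j}\le|J'|-\sum_{I'\times J'}x_{i,j}=\sum_{i\notin I'}\sum_j x_{i,j}$ remain for them. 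If you want to salvage your forest route, you would need a rounding rule that simultaneously keeps matched every depleted buyer adjacent to a not-fully-sold item and keeps fully sold every item adjacent to a buyer that ends up unmatched --- which is essentially reproving the maximum-matching-plus-extension argument in disguise.
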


\begin{algorithm}
	\caption{Revenue-optimal stable allocation with non-personalized prices}
	\label{alg:optimal_non_personalized}
	\LinesNumbered
	\DontPrintSemicolon
	\SetAlgoNoEnd
	
	\KwIn{valuations $v_{i, j}$, prices $p_{j}$, budgets $B_i$}
	
	First solve the optimal revenue $R$ over all budget-feasible allocations through the linear program:
	\begin{align*}
		\max \quad &
		\sum_{i, j} p_{j} x_{i, j}
		\\
		\text{s.t.} \quad &
		\sum_i x_{i, j} \leq 1, \forall j
		\\
		& \sum_j p_j x_{i, j} \leq B_i, \forall i
		\\
		& x_{i, j} = 0, \text{if $v_{i, j} = 0$, $\forall i, j$}
		\\
		& x_{i, j} \geq 0, \forall i, j.
	\end{align*}
	
	Then solve another linear program to get a stable allocation $x$ by replacing the objective with the market-wise welfare and adding the constraint that the revenue equals $R$.
	\begin{align*}
		\max \quad &
		\sum_{i, j} v_{i, j} x_{i, j}
		\\
		\text{s.t.} \quad &
		\sum_{i, j} p_j x_{i, j} \geq R
		\\
		& \sum_i x_{i, j} \leq 1, \forall j
		\\
		& \sum_j p_j x_{i, j} \leq B_i, \forall i
		\\
		& x_{i, j} = 0, \text{if $v_{i, j} = 0$, $\forall i, j$}
		\\
		& x_{i, j} \geq 0, \forall i, j.
	\end{align*}
	
\end{algorithm}

\begin{algorithm}
	\caption{Stable Edmonds-Karp algorithm}
	\label{alg:stable_edmonds_karp}
	\LinesNumbered
	\DontPrintSemicolon
	\SetAlgoNoEnd
	
	\KwIn{valuations $v_{i, j}$, prices $p_{j}$, budgets $B_i$, a compatible allocation $x$}
	
	construct the corresponding flow network $G = (V, E)$ \;
	
	construct the initial flow $f$ from the input compatible allocation $x$: $f_{s, i} = -f_{i, s} = \sum_j p_j x_{i, j}, \forall i \in I$, $f_{j, t} = - f_{t, j} = \sum_i p_j x_{i, j}, \forall j \in J$, and $f_{i, j} =  - f_{j, i} = p_j x_{i, j}, \forall i \in I, j \in J$
	
	\Repeat{there is no path from $s$ to $t$ in the graph $(V, E^f)$}{
		find a shortest path $p$ from $s$ to $t$ in the graph $(V, E^f)$ \;
		
		suppose that $p$ ends with edge $(i_0, j_0)$ and $(j_0, t)$ where $i_0 \in I$ and $j_0 \in J$ (this must be the case due to the bipartite structure of $G$) \;
		
		$j_1 \leftarrow \arg \min_{j: v_{i_0, j} > 0 \text{ and } \sum_i f_{i, j} < p_j} \frac{p_j}{v_{i, j}}$ (the set $\{j: v_{i_0, j} > 0 \text{ and } \sum_i f_{i, j} < p_j\}$ is not empty since $j_0$ lies in it; $(i_0, j_1), (j_1, t) \in E^f$ since $c_{i_0, j_1} - f_{i_0, j_1} \geq c_{j_1, t} - \sum_i f_{i, j_1} = p_j - \sum_i f_{i, j_1} > 0$)
		\label{alg:stable_edmonds_karp:most_preferred}
		
		augment $f$ along the augmenting path $p' = p \cup \{(i_0, j_1), (j_1, t)\} \setminus \{(i_0, j_0), (j_0, t)\}$ by $c^f(p') = \min\{c^f_{w_1, w_2}: (w_1, w_2) \in p'\}$
	}
\end{algorithm}

For Lemma \ref{lemma:edmonds_karp}, we start with Algorithm \ref{alg:optimal_non_personalized}, which is built on linear programming.

\begin{lemma} \label{lemma:stable_max_revenue_non_personalized_lp}
	Algorithm \ref{alg:optimal_non_personalized} returns a stable allocation with maximum revenue in (weakly) poly-time.
\end{lemma}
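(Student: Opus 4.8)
The plan is to separate the two assertions—revenue-optimality and stability—since the first is immediate and the second is the substance of the lemma. For revenue-optimality, note that $R$ is by construction the maximum revenue over all budget-feasible allocations, and every stable allocation is budget-feasible (condition (1) of stability), so the maximum revenue over stable allocations is at most $R$. The second linear program of Algorithm \ref{alg:optimal_non_personalized} is feasible, because any optimizer of the first program is a feasible point with revenue exactly $R$; its revenue constraint forces the returned allocation $x$ to have revenue $\geq R$, hence exactly $R$. Thus, once $x$ is shown to be stable, it attains the maximum revenue over all stable allocations. (The welfare objective is also bounded, since $x_{i,j}\leq 1$, so an optimum is attained.)

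To establish stability I would argue by contradiction using the instability characterization stated at the start of Section \ref{sec:non_personalized_pricing_game}: if $x$ is not stable, there exist a buyer $i_0$ and item $j_0$ with $v_{i_0,j_0}>0$, $\sum_i x_{i,j_0}<1$, and either (1) $i_0$ has positive remaining budget, or (2) $i_0$ strictly prefers $j_0$ to some item $j_1$ with $x_{i_0,j_1}>0$. In case (1), if $p_{j_0}>0$ then raising $x_{i_0,j_0}$ slightly keeps the allocation budget- and supply-feasible while pushing the revenue above $R$, contradicting the optimality of the first program; if $p_{j_0}=0$, the same perturbation leaves revenue at $R$ but strictly increases the welfare $\sum_{i,j}v_{i,j}x_{i,j}$, contradicting the optimality of the second program. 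In case (2), if $p_{j_0}=0$ I reduce to the free-filling argument just given; otherwise $p_{j_0}>0$, and I would transfer a small amount $\delta>0$ of $i_0$'s budget from $j_1$ to $j_0$, decreasing $x_{i_0,j_1}$ by $\delta/p_{j_1}$ and increasing $x_{i_0,j_0}$ by $\delta/p_{j_0}$. Buyer $i_0$'s total spending, hence the total revenue, is unchanged at $R$, the allocation stays feasible for small $\delta$, yet the welfare changes by $\delta\bigparen{\frac{v_{i_0,j_0}}{p_{j_0}}-\frac{v_{i_0,j_1}}{p_{j_1}}}>0$ by the strict preference, again contradicting welfare-maximality. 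Hence no such pair $(i_0,j_0)$ exists and $x$ is stable.

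For the running time, both programs have $O(nm)$ variables and $O(n+m)$ nontrivial constraints and are solved by any weakly-polynomial LP algorithm (e.g.\ the ellipsoid or an interior-point method), which gives the claimed (weakly) poly-time bound. The main obstacle is not the complexity but making the local-improvement arguments watertight against degenerate prices. The delicate point is the well-definedness of the bang-per-buck comparison in case (2): I expect to check that $p_{j_1}>0$ automatically holds there, because the LP constraint ``$x_{i,j}=0$ if $v_{i,j}=0$'' forces $v_{i_0,j_1}>0$ whenever $x_{i_0,j_1}>0$, so an item bought at price zero would have infinite bang-per-buck and could never be strictly dominated. The remaining care is to confirm that each perturbation simultaneously respects the supply constraints $\sum_i x_{i,j}\leq 1$, the budget constraints, and the revenue constraint, so that the improved allocation is genuinely feasible for the relevant program and the contradiction is valid.
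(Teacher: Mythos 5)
Your proposal is correct and follows essentially the same route as the paper's own proof: bound the revenue of any stable allocation by $R$ via budget-feasibility, then derive a contradiction from instability by either increasing revenue in the first LP (case (1)) or shifting a small amount of $i_0$'s spending from $j_1$ to $j_0$ to strictly increase the welfare objective of the second LP (case (2)). Your additional care about zero-price items and the well-definedness of the bang-per-buck comparison is a welcome tightening of details the paper's shorter argument leaves implicit.
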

\begin{proof}
	Clearly $R$ is an upper bound on revenue for any budget-feasible allocation.
	We only need to show that the linear program at the right returns a stable allocation $x$.
	
	Suppose otherwise, then there exist a buyer $i_0$ and an item $j_0$ such that $v_{i, j} > 0, \sum_i x_{i, j_0} < 1$ and it holds that either (1) buyer $i_0$'s remaining budget is not zero or (2) buyer $i_0$ strictly prefers $j_0$ to some other item $j_1$ with $x_{i_0, j_1} > 0$.
	If case (1) holds, we can get a budget-feasible allocation with a revenue strictly larger than $R$, a contradiction.
	If case (2) holds, for buyer $i_0$, the bang-per-buck of item $j_0$ is strictly larger than item $j_1$. So we can modify $x$ by moving some budget of buyer $i_0$ spent on item $i_1$ to item $i_0$ and get a feasible solution of the linear program at the right with strictly larger objective value, again a contradiction.
\end{proof}

In the linear program given in Algorithm \ref{alg:optimal_non_personalized}, if we treat $p_j x_{i, j} =: f_{i, j}$ as the decision variables, the first linear program actually encodes a maximum flow problem: the network $G$ consists of $V = \{s, t\} \cup I \cup J$ as vertices, with source $s$ and sink  $t$; the capacity of edge $(s, i)$, denoted as $c_{s, i}$, is $B_i$; for every $j \in J$, $c_{i, j} = p_j$ for every buyer $i$ with $v_{i, j} > 0$, and $c_{j, t} = p_j$; other unmentioned edges have a capacity of zero.
Any budget-feasible allocation with the optimal revenue is equivalent to a maximum flow of the network. 
Given a legitimate flow $f$, the residual network $G^f$ is identical to $G$ except for capacities, defined as $c^f_{w_1, w_2} = c_{w_1, w_2} - f_{w_1, w_2}$ for any $w_1, w_2 \in V$, and edges $E^f = \{(w_1, w_2): c^f_{w_1, w_2} > 0\}$.
We also adopt the conventional terminology of augmenting path and augmenting a flow along a path.

With this connection to network flow, we develop algorithm \ref{alg:stable_edmonds_karp} by adapting the classic Edmonds-Karp algorithm such that the compatibility of flows (allocations) could be maintained throughout the execution of the algorithm.

\begin{lemma}[Item-wise monotonicity] \label{lemma:edmonds_karp_monotonicity}
	During the execution of algorithm \ref{alg:stable_edmonds_karp}, $f_{j, t} = \sum_i f_{i, j}$ weakly increases.
\end{lemma}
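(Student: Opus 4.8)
The plan is to exploit the layered structure of the flow network $G$, in which the sink $t$ has no outgoing edges, so that its only incident original edges are the item-to-sink edges $(j,t)$ of capacity $p_j$. Since the algorithm keeps $f$ a legitimate flow at every step, conservation at each item vertex $j$ forces $\sum_i f_{i,j} = f_{j,t}$ throughout (the only edges leaving $j$ go to buyers or to $t$). Hence it suffices to prove that the flow on each single edge $(j,t)$ never decreases, which immediately yields the stated weak increase of both $f_{j,t}$ and $\sum_i f_{i,j}$.

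The heart of the argument is to record which residual edges incident to $t$ an augmenting path can use. In $(V,E^f)$ the only edges touching $t$ are the forward edges $(j,t)$ (present when $f_{j,t} < p_j$) and the reverse edges $(t,j)$ (present when $f_{j,t} > 0$). The shortest $s$-$t$ path $p$ returned by BFS is vertex-simple, so it reaches $t$ exactly once, at its final edge; thus $p$ terminates with a forward edge $(j_0,t)$ and uses no reverse sink edge. I would then verify that the rerouting preserves this property: the augmenting walk $p' = p \cup \{(i_0,j_1),(j_1,t)\} \setminus \{(i_0,j_0),(j_0,t)\}$ alters only the last two edges, still ends with the forward edge $(j_1,t)$, still meets $t$ only there, and still uses no reverse sink edge. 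Consequently, augmenting $f$ along $p'$ by $\delta = c^f(p') \geq 0$ raises $f_{j_1,t}$ by $\delta$ and leaves $f_{j,t}$ unchanged for every $j \neq j_1$, so each iteration weakly increases all $f_{j,t}$, and the lemma follows by induction over iterations.

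The one point needing care — and the main, if minor, obstacle — is that the rerouted target $j_1$ may coincide with an item already visited inside $p$, making $p'$ a vertex-repeating walk rather than a simple path, so one must confirm that augmenting along $p'$ still produces a valid flow. I would settle this by noting that $p'$ is at least edge-distinct: the inserted edges $(i_0,j_1)$ and $(j_1,t)$ cannot already lie in $p$, since $i_0$ occurs only once on the simple path $p$ (where it is followed by $j_0 \neq j_1$), and any intermediate occurrence of $j_1$ on $p$ uses edges of the form $(i_k,j_1),(j_1,i_{k+1})$ rather than $(j_1,t)$. An edge-distinct $s$-$t$ walk decomposes into a simple $s$-$t$ path together with edge-disjoint cycles, each carrying the same value $\delta$; because $\delta$ is the minimum residual capacity over all edges of $p'$, pushing $\delta$ along every edge respects capacities and preserves conservation. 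This keeps $f$ valid while leaving the sink-edge analysis above — which is independent of path simplicity — untouched, completing the argument.
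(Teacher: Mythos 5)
Your proof is correct and follows essentially the same route as the paper's: both arguments rest on the layered structure of the network, observing that the augmenting path meets the sink only once, through the forward edge into the terminal item, so every other item's inflow (and hence $f_{j,t}$) is preserved while the terminal item's strictly increases. Your additional care about the rerouted walk $p'$ possibly revisiting $j_1$ --- decomposing the edge-distinct walk into a simple path plus cycles to confirm flow validity --- is a detail the paper's proof glosses over, and it is handled correctly.
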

\begin{proof}
	Due to the bipartite structure of $G^f$, every path $p = \{(s, i_0), (i_0, j_0), \dots, (i_\ell, j_{\ell}, (j_{\ell}, t))\}$ must alternate between $I$ and $J$ except for the beginning and the ending edge.
	For each $j_k$ $(k \neq \ell)$, $f_{j_k, t} = \sum_i f_{i, j_k}$ is not changed during the update, while $f_{j_\ell, t} = \sum_i f_{i, j_\ell}$ strictly increases.
\end{proof}

\begin{proof}[Proof of Lemma \ref{lemma:edmonds_karp}]
	We only need to show that compatibility of the allocation corresponding to $f$ is always maintained.
	Suppose by contradiction that the allocation corresponding to $f$ is compatible, but after one augmenting step along path $p$, the new allocation $x'$ becomes incompatible.
	Let $i_0$ be a buyer for whom $x'$ is incompatible, and $i_0$ prefer item $j_1$ (with $\sum_{i} x'_{i, j_1} < 1$) over item $j_0$ (with $x'_{i_0, j_0} > 0$).
	
	By lemma \ref{lemma:edmonds_karp_monotonicity}, $\sum_{i} x_{i, j_1} \leq \sum_{i} x'_{i, j_1} < 1$.
	Then by compatibility of $x$, $x_{i_0, j_0} = 0$ and thus $(i_0, j_0) \in p$.
	First, $(j_0, t)$ cannot be the last edge of $p$, for otherwise $i_0$ should have chosen an item it prefers to $j_0$ at line \ref{alg:stable_edmonds_karp:most_preferred}.
	However, $(i_0, j_1), (j_1, t) \in E^f$ since $j_1$ is not fully allocated yet.
	Hence we find a strictly shorter $s$-$t$ path by deviating from $p$ at $i_0$, a contradiction.
\end{proof}

Recall that maximum (cardinality) matching can be modeled as a maximum flow problem.
To prove Lemma \ref{lemma:minimum_maximal_matching}, we show that the revenue-minimizing stable allocation can be used to perfectly encode a minimum maximal matching.

\begin{proof}[Proof of Lemma \ref{lemma:minimum_maximal_matching}]
	We give a reduction to our problem from the minimum maximal matching problem, which is NP-complete even in the case of a bipartite graph.
	Given a bipartite graph $(I, J, E)$ with vertex sets $I, J$ and $E \subseteq I \times J$, a maximal matching $M$ is a subset of $E$ such that no two edges in $M$ share a vertex, and every edge in $E$ shares a vertex with at least one edge in $M$.
	The (decision) problem of minimum maximal matching is to ask, given a number $T$, whether there exists a maximal matching with size at most $T$.
	
	Given a bipartite graph $(I, J, E)$, construct a market with buyer set $I$ and item set $J$ (we will refer to $i \in I$ either as a vertex or a buyer, similarly for $j \in J$).
	Let $p_j = 1$ for all items and $B_i = 1$ for all buyers.
	If $(i, j) \in E$, let $v_{i, j} = 1$. Otherwise $v_{i, j} = 0$.
	
	Note that buyers have no preference over their interested items, 
	then it is straightforward to see that maximality of an integral matching is equivalent to stability of its corresponding integral allocation (in the rest of the proof we will use matching and allocation interchangeably).
	We only need to show that, given a stable allocation $x$ (possibly fractional) with a revenue of $T$, there exists an integral stable allocation with a revenue no larger than $T$.
	
	Given a stable allocation $x$, let $I' \subseteq I$ be the set of buyers depleting budgets in $x$, and $J' \subseteq J$ be the set of items fully allocated in $x$.
	Let $y$ be a maximum integral matching in the bipartite graph $(I', J', E')$ where $(i, j) \in E'$ if $v_{i, j} > 0$.
	Note that if we restrict $x$ to $(I', J')$, it is a legitimate fractional matching (flow).
	By the integral flow theorem, we have
	\begin{displaymath}
		\sum_{i \in I', j \in J'} x_{i, j} \leq \sum_{i \in I', j \in J'} y_{i, j}.
	\end{displaymath}
	Let $I''$ be the set of buyers depleting budgets in $y$, and $J''$ be the set of items fully allocated in $y$.
	Extend $y$ to construct an allocation $z$ by adding a maximum matching between $I \setminus I''$ and $J \setminus J''$.
	Since $y$ is a perfect matching between $I''$ and $J''$, $z$ must be a maximal (stable) matching.
	For the revenue of $z$, consider buyers in $I'$ and $I \setminus I'$: (1) buyers in $I'$ deplete budgets in $x$, so they could not spend more in $z$; (2) buyers in $I \setminus I'$ are only interested in items in $J'$ by stability of $x$, therefore they spend at most
	\begin{align*}
		& |J'| - \sum_{i \in I', j \in J'} z_{i, j} 
		= 
		|J'| - \sum_{i \in I', j \in J'} y_{i, j} 
		\leq 
		|J'| - \sum_{i \in I', j \in J'} x_{i, j}
		= 
		\sum_{i \in I \setminus I', j \in J} x_{i, j},
	\end{align*}
	which concludes the reduction.
	
	Since it is a strict reduction, we inherit APX-hardness of the minimum maximal matching problem for bipartite graphs \cite{chlebik2006approximation}.
\end{proof}

\begin{corollary} \label{corollary:np_hardness_seller_revenue}
	Given a subset $J' \subset J$, it is APX-hard to either minimizes or maximizes $\sum_{i \in I, j \in J'} p_j x_{i, j}$ over all the revenue-optimal stable allocations.
\end{corollary}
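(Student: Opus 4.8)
The plan is to adapt the reduction from minimum maximal matching in Lemma \ref{lemma:minimum_maximal_matching}, but to inflate it so that the \emph{total} revenue becomes a fixed constant; then every stable allocation is automatically revenue-optimal, and the revenue collected on a designated subset $J'$ can be read off as either a direct or a complementary copy of the maximal-matching size. Given a bipartite instance $(I_0,J_0,E)$ of minimum maximal matching, I build the same market as in Lemma \ref{lemma:minimum_maximal_matching} (unit prices, unit budgets, $v_{i,j}=1$ iff $(i,j)\in E$) on the item set $J_0$, and \emph{pad} it with a private \textbf{overflow} item $\hat{j}_i$ for each buyer $i$, valued only by $i$ with $v_{i,\hat{j}_i}=\epsilon<1$ and price $1$. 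I would first establish two structural facts. (i) In any stable allocation every buyer depletes its entire budget: its own overflow item is always available, and a buyer with leftover budget facing a non-full item it values satisfies neither clause of the stability condition. (ii) A buyer spends on $\hat{j}_i$ only after all its interested items in $J_0$ are fully sold, since the overflow has strictly smaller bang-per-buck than any original item. Fact (i) forces the total revenue to equal $|I_0|$ in every stable allocation, so every stable allocation is revenue-optimal; fact (ii) shows that the restriction of a stable allocation to $J_0$ is exactly a (fractional) stable/maximal allocation of the original market of Lemma \ref{lemma:minimum_maximal_matching}.

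For the minimization claim I take $J'=J_0$. By fact (ii) the $J'$-revenue equals the revenue of the induced maximal allocation, and conversely every maximal matching extends, via the overflow items, to a stable allocation of the padded market. Reusing the integral-flow rounding step of Lemma \ref{lemma:minimum_maximal_matching}, the minimum $J'$-revenue over all (revenue-optimal) stable allocations equals the size of a minimum maximal matching. Since the padded construction is again a strict reduction, it inherits the APX-hardness of minimum maximal matching on bipartite graphs.

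For the maximization claim I take $J'$ to be the set of overflow items. Because the total revenue is the constant $|I_0|$, the $J'$-revenue equals $|I_0|$ minus the revenue collected on $J_0$, so maximizing it is exactly minimizing the $J_0$-revenue, whose optimum is $|I_0|-\mathrm{mmm}$ where $\mathrm{mmm}$ is the minimum maximal matching. The main obstacle is that this correspondence is \emph{affine}: an additive shift by $|I_0|$ need not preserve the multiplicative APX gap, since if $\mathrm{mmm}\ll|I_0|$ the two optima $|I_0|-a$ and $|I_0|-b$ are nearly equal. To overcome this I would invoke the APX-hardness of minimum maximal matching on \emph{bounded-degree} bipartite graphs \citep{chlebik2006approximation}, for which every maximal matching has size $\Theta(|I_0|)$ and at most $|I_0|/2$; then both the optimum $|I_0|-\mathrm{mmm}$ and the hardness gap $b-a=\Theta(a)$ are $\Theta(|I_0|)$, and a short computation shows the ratio $(|I_0|-a)/(|I_0|-b)=1+(b-a)/(|I_0|-b)$ stays bounded away from $1$. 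This balance step is the crux of the maximization direction, whereas the minimization direction follows essentially verbatim from Lemma \ref{lemma:minimum_maximal_matching}.
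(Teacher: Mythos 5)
Your proposal is correct and follows essentially the same route as the paper: the paper pads the market with a single shared item $j_0$ of price $2|I|$ and value $|I|$ for every buyer (bang-per-buck $1/2$, strictly below that of the original items), so the total revenue is pinned at $|I|$ and every stable allocation is revenue-optimal, then takes $J'$ to be the original item set for minimization and obtains maximization ``by taking the complement''; your per-buyer overflow items $\hat{j}_i$ play exactly the same role. If anything, your explicit check that complementation preserves the approximation gap (using that on bounded-degree instances the minimum maximal matching is $\Theta(|I_0|)$ and bounded away from $|I_0|$) is more careful than the paper's one-line remark on this step.
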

\begin{proof}
	We use the same reduction as in the previous proof, except that we add an item $j_0$ to the market with $p_{j_0} = 2|I|$ and $v_{i, j_0} = |I|$ for all $i \in I$.
	The total revenue of the market is always $|I|$ (since all buyers can deplete their remaining budgets on $j_0$), and since $j_0$ has a strictly lower bang-per-buck than items in $J$, a stable allocation in this market remains stable when the item set is restricted to $J$.
	The rest of the reduction is essentially the same.
	
	Hardness of the maximization problem can be deduced by taking the complement.
\end{proof}


\subsection{Budget Depletion Problem}
\label{app:budget_depletion}

It is not clear whether buyers will deplete their budgets at any maximin equilibrium.
Nonetheless, we show that, if some buyer does not deplete its budget, some seller could improve its \textit{worst case} revenue.
Recall that $R_k(p)$ is the minimum revenue received by seller $k$ over all (market-wise) revenue-optimal stable allocations with pricing $p$.
Let $R(p)$ be the market-wise optimal revenue over all stable allocations.

\begin{lemma} \label{lemma:linear_increase}
	If $p'$ differs from $p$ only for a single item $j$ with $p_j' = p_j + \epsilon$ for some $\epsilon > 0$, then $R(p') \leq R(p) + \epsilon$.
\end{lemma}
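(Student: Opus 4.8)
The plan is to reduce the claim to a statement about a single fixed allocation, using the characterization of $R(p)$ established earlier. By Theorem \ref{thm:one_sided_tractability} together with Lemma \ref{lemma:stable_max_revenue_non_personalized_lp}, the maximum revenue $R(p)$ over all stable allocations coincides with the maximum revenue over all \emph{budget-feasible} allocations, i.e., the optimal value of the first linear program in Algorithm \ref{alg:optimal_non_personalized}. This equivalence is the crucial tool: it lets us compare revenues at $p$ and $p'$ by exhibiting feasible points, without worrying directly about stability.

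First I would let $x^*$ be a revenue-maximizing stable allocation under the perturbed pricing $p'$, so that its revenue under $p'$ equals $R(p')$. The key observation is that since $p'$ dominates $p$ coordinate-wise (the two pricings agree everywhere except at item $j$, where $p'_j = p_j + \epsilon > p_j$), budget-feasibility is preserved when we re-evaluate $x^*$ under $p$: for every buyer $i$,
\begin{displaymath}
	\sum_{j'} p_{j'}\, x^*_{i, j'} \;\le\; \sum_{j'} p'_{j'}\, x^*_{i, j'} \;\le\; B_i .
\end{displaymath}
Hence $x^*$ is a budget-feasible allocation under $p$, and in particular a feasible point of the first linear program for pricing $p$.

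Next I would bound the revenue gap of this single allocation between the two pricings. Since $p$ and $p'$ differ only at item $j$, the revenues satisfy
\begin{displaymath}
	\sum_{i, j'} p'_{j'}\, x^*_{i, j'} - \sum_{i, j'} p_{j'}\, x^*_{i, j'} \;=\; \epsilon \sum_i x^*_{i, j} \;\le\; \epsilon,
\end{displaymath}
where the last inequality uses the supply constraint $\sum_i x^*_{i, j} \le 1$. Combining the two displays, the revenue of $x^*$ under $p$ is at least $R(p') - \epsilon$. Because $R(p)$ is the optimal revenue over \emph{all} budget-feasible allocations under $p$, and $x^*$ is one such allocation, we conclude $R(p) \ge R(p') - \epsilon$, which is exactly the claim.

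There is no genuine obstacle here beyond correctly invoking the LP characterization of $R(p)$: the argument is a one-allocation comparison, and the only point requiring care is that the perturbation increases a price (so $p \le p'$), which is what makes re-evaluating the optimal allocation at the lower prices automatically budget-feasible. Were the price decreased instead, feasibility would no longer be immediate and a different argument would be needed; but that direction is not required for this lemma.
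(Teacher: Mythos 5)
Your proof is correct and is essentially the paper's argument spelled out in detail: the paper's one-line proof likewise observes that every feasible allocation under $p'$ remains feasible under $p$ and that the objective changes by at most $\epsilon$, relying on the same LP characterization of $R$ as the optimum over budget-feasible allocations.
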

\begin{proof}
	Every feasible allocation with $p'$ is also a feasible allocation with $p$, and the objective changes by at most $\epsilon$.
\end{proof}


\begin{proposition}
	If there exists some buyer $i_0$ such that $\sum_{j : v_{i_0, j} > 0} p_j < B_{i_0}$,
	then seller $k_0$ with $v_{i_0, j_0} > 0, j_0 \in J_{k_0}$ can strictly increase $R_{k_0}$ by increasing $p_{j_0}$ to $(1 + \epsilon) p_{j_0}$ for some $\epsilon > 0$.
\end{proposition}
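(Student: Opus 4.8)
The plan is to reduce the statement to two clean inequalities about aggregate quantities. Write $R(q)$ for the market-wide optimal stable revenue at a pricing $q$, and let $M(q)$ denote the maximum, over all revenue-optimal stable allocations at $q$, of the total revenue collected by the sellers \emph{other} than $k_0$. Since every revenue-optimal stable allocation has the same total revenue $R(q)$, minimizing seller $k_0$'s revenue is equivalent to maximizing the others', so $R_{k_0}(q) = R(q) - M(q)$. Writing $p'$ for the deviation ($p'_{j_0} = (1+\epsilon)p_{j_0}$, all other prices left unchanged), it then suffices to establish $R(p') \geq R(p) + \epsilon p_{j_0}$ and $M(p') \leq M(p)$, because together they give $R_{k_0}(p') = R(p') - M(p') \geq R(p) + \epsilon p_{j_0} - M(p) = R_{k_0}(p) + \epsilon p_{j_0}$, a strict improvement whenever $p_{j_0} > 0$ (the only meaningful case for a multiplicative raise; a zero price can be lifted to a small positive level by the same argument).

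First I would fix $\epsilon > 0$ small enough that $\sum_{j : v_{i_0,j} > 0} p'_j < B_{i_0}$ still holds, which is possible since the hypothesis is a strict inequality. The key structural fact, used throughout, is that buyer $i_0$ can never exhaust its budget, even if it buys every item it values in full, at either $p$ or $p'$. By the instability characterization for fully non-personalized markets (a buyer with leftover budget and an undersold item it values witnesses instability), this forces every item $j$ with $v_{i_0,j} > 0$, and in particular $j_0$, to be \emph{fully sold in every stable allocation} at both $p$ and $p'$.

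For the revenue bound $R(p') \geq R(p) + \epsilon p_{j_0}$, I would take a revenue-optimal stable allocation $x$ at $p$ and transfer the whole of $j_0$ to $i_0$ (legitimate since $j_0$ is fully sold in $x$), leaving every other entry unchanged. The result is budget-feasible at $p'$: buyer $i_0$ now spends at most $\sum_{j : v_{i_0,j}>0} p'_j < B_{i_0}$, and every other buyer spends no more than before. Its revenue at $p'$ is exactly $R(p) - p_{j_0} + p'_{j_0} = R(p) + \epsilon p_{j_0}$, and since a revenue-maximizing stable allocation attains the best budget-feasible revenue (Lemma \ref{lemma:stable_max_revenue_non_personalized_lp} and Theorem \ref{thm:one_sided_tractability}), we get $R(p') \geq R(p) + \epsilon p_{j_0}$; combined with Lemma \ref{lemma:linear_increase} this is in fact an equality, which I will use below.

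The main obstacle is the inequality $M(p') \leq M(p)$, asserting that raising $p_{j_0}$ cannot improve the other sellers' best case. I would prove it by a ``price-lowering preserves stability'' argument: take a revenue-optimal stable allocation $z$ at $p'$ attaining $M(p')$, and reinterpret the same allocation at prices $p$. Because $j_0$ is fully sold in $z$, its total revenue drops by exactly $\epsilon p_{j_0}$, landing at $R(p') - \epsilon p_{j_0} = R(p)$, so $z$ is revenue-optimal at $p$; the delicate point is that it remains stable. Lowering only $p_{j_0}$ raises only the bang-per-buck of $j_0$ and changes no allocation, so any \emph{new} instability must involve $j_0$. A fresh case-(1) instability (leftover budget plus an undersold valued item) would let a buyer spend more and exceed the optimal revenue $R(p)$, which is impossible; a fresh case-(2) instability (preferring an undersold item to a purchased one) cannot have $j_0$ as the undersold item (it is fully sold), and cannot newly prefer some item \emph{over} $j_0$ since $j_0$'s bang-per-buck only increased. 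Hence $z$ is revenue-optimal and stable at $p$ while leaving the other sellers' revenue untouched (their items differ from $j_0$ and keep the same prices and shares), so $M(p) \geq M(p')$. Assembling the two inequalities yields $R_{k_0}(p') \geq R_{k_0}(p) + \epsilon p_{j_0} > R_{k_0}(p)$, as claimed.
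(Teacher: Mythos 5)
Your proof is correct, but it takes a genuinely different route from the paper's. The paper argues by an explicit allocation map: for \emph{any} stable allocation $x'$ at the deviated prices $p'$, it constructs a stable allocation $x$ at $p$ by rescaling item $j_0$'s shares ($x_{i,j_0} = x'_{i,j_0}/(1+\epsilon)$ for $i \neq i_0$ and $x_{i_0,j_0} = (x'_{i_0,j_0}+\epsilon)/(1+\epsilon)$), verifies stability of the image directly, and reads off that seller $k_0$'s revenue drops by exactly $\epsilon p_{j_0}$ under this map, which yields the claim in one shot. You instead decompose $R_{k_0} = R - M$, pin down $R(p') = R(p) + \epsilon p_{j_0}$ via the LP/max-budget-feasible-revenue characterization of $R$ together with Lemma \ref{lemma:linear_increase}, and establish $M(p') \leq M(p)$ by reinterpreting a revenue-optimal stable allocation at $p'$ \emph{verbatim} at $p$ and checking it stays revenue-optimal and stable (the fresh case-(1) instability being killed by revenue-optimality, the case-(2) one by monotonicity of $j_0$'s bang-per-buck). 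Both arguments pivot on the same structural fact --- every item $i_0$ values is fully sold in every stable allocation at both price vectors --- but your decomposition handles the bookkeeping that $R_{k}$ quantifies only over \emph{revenue-optimal} stable allocations more explicitly than the paper's proof, which maps arbitrary stable allocations and leaves the optimality-preservation implicit; the price you pay is reliance on the LP characterization (Lemma \ref{lemma:stable_max_revenue_non_personalized_lp}), which the paper's construction avoids. The only loose ends, which you correctly flag, are the degenerate cases $p_{j_0}=0$ (where a multiplicative raise is vacuous) and zero-priced undersold items in the case-(1) analysis; the latter cannot actually arise in a stable allocation and the former is equally unaddressed by the statement itself.
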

\begin{proof}
	Let the new price vector be $p'$ and take $\epsilon$ such that $\sum_{j : v_{i_0, j} > 0} p'_j < B_{i_0}$.
	For any stable allocation $x'$ with $p'$, construct $x$ to be identical to $x'$ except for item $j_0$ with $x_{i, j_0} = \frac{x'_{i, j_0}}{1 + \epsilon}$ for all $i \neq i_0$ and $x_{i_0, j_0} = \frac{x_{i_0, j_0}' + \epsilon}{1 + \epsilon}$.
	$x$ is clearly budget-feasible and, by stability of $x'$, all items in which $i_0$ is interested are fully allocated in both $x'$ and $x$.
	
	Suppose that $x$ is not stable, then there exist a buyer $i_1$ ($\neq i_0$) and an item $j_1$ (with $v_{i_0, j_1} = 0$) such that $v_{i_1, j_1} > 0$ and $\sum_i x_{i, j_1} < 1$, and it holds that either (1) buyer $i_1$'s remaining budget is not zero or (2) buyer $i_1$ strictly prefers $j_1$ to some other item $j_2$ with $x_{i_1, j_2} > 0$.
	Since $x$ and $x'$ share the same set of buyers whose budgets are depleted and the same set of fully allocated items, if condition (1) holds for $x$, it also holds for $x'$ and leads to contradiction.
	If condition (2) holds, then $j_2 \neq j_0$ since the prices and allocations of $j_1$ and $j_2$ are identical in $x$ and $x'$.
	If $j_2 = j_0$, its price is increased in $p'$ so $i_1$ strictly prefers it less with $p'$ than with $p$.
	Hence $i_1$ also strictly prefers $j_0$ to $j_1$ in $p$, a contradiction.
	
	Through the above construction, for any stable allocation with $p'$, we can find a stable allocation with $p$ where the revenue of seller $k_0$ is smaller by $\epsilon$.
	Therefore by deviating from $p$ to $p'$, $R_{k_0}$ increases by at least $\epsilon$.
\end{proof}

\begin{lemma} \label{lemma:min_cut}
	There exists a minimum cut $(S, T)$ of $G(p)$, the flow network corresponding to prices $p$, that satisfies: (1) there is no cross edge between $S \cap I$ and $T \cap J$; (2) in any revenue optimal budget-feasible allocation, buyers in $T$ will always deplete their budgets on items in $T$, and items in $S$ will always be fully allocated among buyers in $S$.
\end{lemma}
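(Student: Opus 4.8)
The plan is to exploit the max-flow/min-cut correspondence already set up for $G(p)$: a revenue-optimal budget-feasible allocation is exactly a maximum $s$-$t$ flow, and by max-flow/min-cut its value equals the capacity of any minimum cut $(S, T)$ (with $s \in S$, $t \in T$). Writing $I_S = S \cap I$, $J_S = S \cap J$ and analogously $I_T, J_T$, the capacity $c(S, T)$ splits into three groups of crossing (forward) edges: the source edges $(s, i)$ for $i \in I_T$ of capacity $B_i$, the sink edges $(j, t)$ for $j \in J_S$ of capacity $p_j$, and the buyer-item edges $(i, j)$ with $i \in I_S$, $j \in J_T$, $v_{i,j} > 0$, each of capacity $p_j$. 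Condition (1) is precisely the requirement that this third group be empty.

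First I would establish condition (1) by a closure/exchange argument on the item side. Start from an arbitrary minimum cut $(S, T)$ and set $J_{\text{bad}} = \{ j \in J_T : v_{i,j} > 0 \text{ for some } i \in I_S \}$, the items responsible for the offending crossing edges. Move all of $J_{\text{bad}}$ from $T$ into $S$ simultaneously, obtaining $(S', T')$ with $S' = S \cup J_{\text{bad}}$ (the buyer sides are untouched, so $s \in S'$, $t \in T'$). I would then account for the change in capacity: for each $j \in J_{\text{bad}}$, every crossing edge $(i, j)$ with $i \in I_S$ becomes internal (removing at least $p_j$, since there is at least one interested $i \in I_S$), while the single edge $(j, t)$ newly crosses (adding exactly $p_j$); edges from $I_T$ into $J_{\text{bad}}$ now point from $T'$ to $S'$ and do not cross. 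Hence the removed capacity is at least the added capacity, so $c(S', T') \leq c(S, T)$, and minimality forces equality: $(S', T')$ is again a minimum cut. Because only items moved, the neighborhood $N(I_S)$ is unchanged, and by construction $N(I_S) \cap J_{T'} = \emptyset$, i.e. $(S', T')$ satisfies condition (1). I would rename $(S', T')$ back to $(S, T)$.

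Next I would derive condition (2) from the standard saturation property of minimum cuts: for any maximum flow $f$ and any minimum cut $(S, T)$, every forward crossing edge is saturated and every backward crossing edge carries zero flow, since the net flow across the cut equals both $|f|$ and the total forward capacity. Applied here, the source edges $(s, i)$ with $i \in I_T$ are saturated, so every buyer in $T$ depletes its budget, and the sink edges $(j, t)$ with $j \in J_S$ are saturated, so every item in $S$ is fully sold. The only backward crossing buyer-item edges are $(i, j)$ with $i \in I_T$, $j \in J_S$ (those with $i \in I_S$, $j \in J_T$ no longer exist after condition (1)), and these carry zero flow; thus buyers in $T$ buy nothing from $S$ and therefore deplete their budgets entirely on items in $T$, while all flow into an item $j \in J_S$ originates from buyers in $I_S$, so items in $S$ are fully allocated among buyers in $S$.

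The main obstacle is the exchange step for condition (1): one must verify both that relocating the items does not increase the cut value and, crucially, that performing the relocation for all bad items at once (rather than iteratively) closes $S$ without reopening new crossing edges. Since only items move, $N(I_S)$ stays fixed and a single pass suffices; the capacity bookkeeping then reduces to the observation that, for each relocated $j$, replacing its (at least one) incoming crossing edge of capacity $p_j$ by the single outgoing edge $(j,t)$ of capacity $p_j$ cannot increase the total. Everything after that is the textbook saturation lemma.
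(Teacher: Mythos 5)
Your proof is correct and follows essentially the same approach as the paper: the same exchange argument moving offending items from $T$ to $S$ (trading at least one crossing edge of capacity $p_j$ for the single edge $(j,t)$ of capacity $p_j$) to get condition (1), and the max-flow/min-cut theorem for condition (2). The only cosmetic difference is that for (2) you invoke the zero-flow-on-backward-edges form of the saturation lemma directly, whereas the paper reaches the same conclusion by a short revenue-counting argument; both are standard and equivalent here.
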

\begin{proof}
	If there is an edge $(i, j)$ between $S \cap I$ and $T \cap J$, $j$ can be moved from $T$ to $S$ without increasing the total capacity (the capacity decreases by at least $c(i, j) = p_j$ and increases by at most $c(j, t) = p_j$).
	
	By max-flow min-cut theorem, in any revenue optimal budget-feasible allocation, buyers in $T$ will always deplete their budgets and items in $S$ will always be fully allocated.
	Note that $R(p) = \sum_{i \in T \cap I} B_i + \sum_{j \in S} p_j$.
	Buyers in $T$ spends $\sum_{i \in T \cap I} B_i$ in total, so buyers in $S$ must spend $\sum_{j \in S} p_j$.
	Since there is no cross edges between $S \cap I$ and $T \cap J$, buyers in $S$ can only spend money on items in $S$.
	So items in $S$ must be fully allocated among buyers in $S$, and thus buyers in $T$ can only spend money on items in $T$.
\end{proof}

\begin{proposition} \label{thm:budget_depletion}
	Given pricing $p$, if some buyer does not deplete its budget, there exists a profitable deviation for some seller and it can be computed in poly-time.
\end{proposition}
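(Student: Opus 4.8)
The plan is to exhibit an explicit price deviation that strictly raises the worst-case revenue $R_{k_0}$ of a suitable seller $k_0$, and to analyze it through the min-cut structure of Lemma \ref{lemma:min_cut}. First I would fix a revenue-optimal stable allocation in which buyer $i_0$ fails to deplete its budget, and take a min-cut $(S,T)$ of the flow network $G(p)$ as guaranteed by Lemma \ref{lemma:min_cut}. Since every buyer in $T$ depletes its budget in any revenue-optimal allocation, $i_0$ must lie in $S$; and because there is no cross edge from $S\cap I$ to $T\cap J$, every item $i_0$ values lies in $S$, with each such item fully allocated. I would pick an item $j_0$ with $v_{i_0,j_0}>0$ (assuming $i_0$ values something; otherwise replace $i_0$ by another non-depleting buyer that does), let $k_0$ be its owner, and let the deviation $p'$ raise $p_{j_0}$ to $p_{j_0}+\epsilon$ for a small rational $\epsilon>0$, leaving all other prices fixed.

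The first step is to show the market-wide optimal revenue strictly increases, $R(p')>R(p)$. Viewing the flow $f$ corresponding to the chosen allocation inside $G(p')$, I would exhibit the augmenting path $s\to i_0\to j_0\to t$: edge $(s,i_0)$ has residual capacity because $i_0$ has leftover budget, edge $(i_0,j_0)$ because its capacity has just grown to $p_{j_0}+\epsilon$, and edge $(j_0,t)$ because $j_0$ was fully sold at price $p_{j_0}$ while its capacity is now $p_{j_0}+\epsilon$. Hence $f$ is not maximum in $G(p')$ and $R(p')>R(p)$. Combining this with the upper bound $R(p')\le R(p)+\epsilon$ from Lemma \ref{lemma:linear_increase}, and choosing $\epsilon$ below the (strictly positive, polynomially bounded) gap between $R(p)$ and the next-smallest cut value of $G(p)$, I would argue that the cut $(S,T)$—whose value at $p'$ is exactly $R(p)+\epsilon$ since $j_0\in S$—remains a minimum cut of $G(p')$, so that $R(p')=R(p)+\epsilon$. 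The augmenting path is precisely what rules out a ``stuck'' min-cut that would place $j_0$ and all its interested buyers on the sink side and keep $R(p')=R(p)$.

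The decisive step is to pass from total revenue to the deviator's guaranteed revenue. Applying Lemma \ref{lemma:min_cut} to the min-cut $(S,T)$ now at $p'$, in every revenue-optimal allocation the items of $S$ are fully sold and the buyers of $T$ deplete their budgets on items of $T$, so the two sides decouple: each seller's revenue from its $S$-items equals the fixed quantity $\sum_{j\in S\cap J_k}p'_j$ in every revenue-optimal allocation, while all variability sits in the $T$-side, whose data is untouched by the deviation. Thus $R_k(p)$ splits as a fixed $S$-revenue plus the minimum attainable $T$-revenue, the $T$-minimum is identical at $p$ and $p'$, and only $k_0$'s fixed $S$-revenue changes—increasing by exactly $\epsilon$ because the price bump lands on $j_0\in J_{k_0}$. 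Therefore $R_{k_0}(p')=R_{k_0}(p)+\epsilon>R_{k_0}(p)$, which is the desired profitable deviation. A revenue-optimal stable allocation, the non-depleting buyer, the min-cut, and an admissible $\epsilon$ are all computable in polynomial time via Lemma \ref{lemma:edmonds_karp} and a single max-flow computation.

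I expect the main obstacle to be exactly this transfer from market revenue to the deviating seller's worst-case revenue: a careless price increase could have its extra revenue siphoned off by other sellers or eroded in $k_0$'s worst case. The min-cut decomposition is what tames this—forcing every $S$-item to be fully sold in all revenue-optimal allocations pins down each seller's $S$-revenue, and leaving the $T$-subproblem literally unchanged guarantees $k_0$'s worst-case $T$-revenue is unaffected—so the entire $\epsilon$ increment is provably captured by $k_0$. The one point demanding care is verifying that the set of achievable $T$-side allocations (and hence $k_0$'s minimum $T$-revenue) genuinely coincides at $p$ and $p'$, which follows from the absence of cross edges forcing $S$-buyers onto $S$-items and $T$-buyers onto $T$-items.
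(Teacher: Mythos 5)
Your proposal is correct and follows essentially the same route as the paper's proof: it uses the same deviation (a small price increase on an item valued by the non-depleting buyer, which Lemma \ref{lemma:min_cut} places on the source side $S$ of the min cut and forces to be fully sold), the same two key lemmas, and the same structural insight that the cut decouples the $S$-side revenue (pinned down) from the $T$-side subproblem (unchanged by the deviation). The only differences are presentational: you establish $R(p')=R(p)+\epsilon$ via an augmenting path plus a cut-gap argument where the paper directly constructs the augmented allocation with $\epsilon$ equal to half the leftover budget, and you phrase the transfer to $R_{k_0}$ as a product decomposition of the revenue where the paper shows every revenue-optimal stable allocation at $p'$ is also one at $p$.
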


\begin{proof}[Proof of Proposition \ref{thm:budget_depletion}]
	Let $(S, T)$ be a minimum-cut of the flow network with prices $p$ without cross edges between $S \cap I$ and $T \cap J$.
	Suppose that buyer $i_0$ has a remaining budget of $\epsilon > 0$ at some revenue-optimal stable allocation $x$.
	Then $i_0 \in S$ (by lemma \ref{lemma:min_cut}) and all items that $i_0$ is interested in are also in $S$ (by the choice of minimum cut).
	Let $k_0$ be one of the sellers that own an item $j_0$ that is interested in by $i_0$.
	
	Consider a deviation of $k_0$ by changing $p_{j_0}$ to $p'_{j_0} = p_{j_0} + \epsilon / 2$.
	By lemma \ref{lemma:linear_increase}, $R(p') \leq R(p) + \epsilon / 2$.
	We can also construct a budget-feasible allocation with $p'$ from $x$ with revenue $R(p) + \epsilon / 2$ by letting $i_0$ pays more on $j_0$ by $\epsilon / 2$.
	Therefore $R(p') = R(p) + \epsilon / 2$.
	Note that the capacity of cut $(S, T)$ with prices $p'$ is exactly $R(p')$, so it is still a minimum cut.
	
	Consider any revenue-optimal stable allocation $y$ with $p'$.
	Then $y$ is a revenue-optimal allocation with $p$.
	For buyers in $S$, all their interested items are in also $S$, which are fully allocated. Hence $y$ is stable for them with $p$.
	For buyers in $T$, they only spend money on items in $T$, so their payments are identical in $p$ and $p'$.
	Hence $y$ is stable for them with $p$ as well.
	
	Since any revenue-optimal stable allocation $y$ with $p'$ is also revenue-optimal and stable with $p$ and seller $k_0$'s revenue in $y$ is increased by $\epsilon / 2$ from $p$ to $p'$, we have that $R_{k_0}(p') \geq R_{k_0}(p) + \epsilon / 2$.
\end{proof}

\section{Missing Proofs in Section \ref{sec:non_personalized_arbitrary}}
\label{app:non_personalized_arbitrary}

The proof of Theorem \ref{thm:apx_non_personalized_arbitrary} shares the same key construction of the proof of Theorem \ref{thm:apx_complete}.
See the proof and remarks in Appendix \ref{app:proof_apx_hard} on how the construction therein can be adapted to establish Theorem \ref{thm:apx_non_personalized_arbitrary}.

\section{Missing Results and Proofs in Section \ref{sec:personalized_pricing_game}}
\label{app:personalized_pricing_game}

The following lemmas and definitions will be used later.

\begin{lemma} \label{lemma:delete}
	Suppose that $x$ is a stable allocation w.r.t. pricing $p$, priorities $\geq$, valuations $v$, budgets $B$.
	Given some buyer $i$ and item $j$, let $B'$ be identical to $B$ except for $B'_i = B_i - p_{i, j} x_{i, j}$, and construct $(x', p', \geq', v')$ as follows:
	\begin{enumerate}
		\item if $x_{i, j} = 1$, construct $(x', p', \geq', v')$ by removing (entries of) item $j$ from $(x, p, \geq, v)$;
		\item if $x_{i, j} < 1$, construct $(x', p', \geq', v')$ to be identical to $(x, p, \geq, v)$ except for
		\begin{align*}
			x'_{i, j} & = 0, \\
			x'_{i', j} & = \frac{x_{i', j}}{1 - x_{i, j}}, \forall i' \neq i, \\
			p'_{i', j} & = (1 - x_{i, j}) p_{i', j}, \forall i' \text{ (including $i$)}, \\
			v'_{i', j} & = (1 - x_{i, j}) v_{i', j}, \forall i'  \text{ (including $i$)}; \\
		\end{align*}
	\end{enumerate}
	Then $x'$ is also stable allocation w.r.t. $(p', \geq', v', B')$.
\end{lemma}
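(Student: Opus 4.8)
The plan is to verify directly that the transformed allocation $x'$ satisfies each clause of the stability definition with respect to the transformed data $(p', \geq', v', B')$, reasoning case-by-case on the two constructions. The key observation driving both cases is that for any buyer $i'$ and any item $j'$, the bang-per-buck is \emph{preserved} under the rescaling: in case (2) we have $\frac{v'_{i', j}}{p'_{i', j}} = \frac{(1 - x_{i, j}) v_{i', j}}{(1 - x_{i, j}) p_{i', j}} = \frac{v_{i', j}}{p_{i', j}}$, and for $j' \neq j$ nothing changes at all. Hence every buyer's preference order $\geq_{i'}$ and every item's natural-or-arbitrary priority $\geq_j$ is identical in the primed instance, so ``prefers'' and ``prioritizes'' relationships transfer verbatim. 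This is the structural heart of the argument: deleting or partially removing item $j$ from buyer $i$ does not perturb the qualitative comparisons that stability is built from.

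\textbf{Case 1 (}$x_{i,j}=1$\textbf{).} Here item $j$ is removed entirely and $B'_i = B_i - p_{i,j}$. Since buyer $i$ bought all of $j$, after removal its spending drops by exactly $p_{i,j} x_{i,j} = p_{i,j}$, matching the budget reduction, so budget-feasibility (clause 1) is immediate for $i$ and unchanged for all others. No other buyer held a positive fraction of $j$ (as $x_{i,j}=1$ forces $x_{i',j}=0$), so removing $j$ affects no other buyer's bundle. For each remaining buyer $i'$ and remaining item $j'$ with $x'_{i',j'} < 1$, I would reuse the original clause (3.a)/(3.b) witness: either $i'$ had depleted budget and $j'$'s bang-per-buck was $\le r_{i'}$ (still true, since no bang-per-bucks moved and $i'$'s minimum-bang item survives the deletion because it was not $j$ — or if it was $j$, then $i'$ is only $i$, whose remaining bundle I must re-examine), or $j'$ was fully allocated with $i'' \geq_{j'} i'$ for competitors (unchanged). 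The one point needing care is recomputing $r_i$ after deleting $j$ from $i$'s bundle, which I address below.

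\textbf{Case 2 (}$x_{i,j}<1$\textbf{).} Here the remaining fraction $1 - x_{i,j}$ of item $j$ is renormalized to a full unit of a \emph{rescaled} item, with prices and values scaled by $1 - x_{i,j}$, and the other buyers' holdings rescaled by $x'_{i',j} = x_{i',j}/(1-x_{i,j})$. The plan is to check: (i) $\sum_{i'} x'_{i',j} = \sum_{i' \neq i} x_{i',j}/(1-x_{i,j}) = (1 - x_{i,j})/(1 - x_{i,j}) = 1$, so the rescaled item is exactly fully allocated, consistent with supply $1$; (ii) each buyer $i'$'s payment on $j$ becomes $p'_{i',j} x'_{i',j} = (1-x_{i,j}) p_{i',j} \cdot \frac{x_{i',j}}{1-x_{i,j}} = p_{i',j} x_{i',j}$, \emph{unchanged}, so budgets stay feasible for $i' \neq i$ and the reduction $B'_i = B_i - p_{i,j}x_{i,j}$ exactly cancels $i$'s lost spending on $j$; (iii) since bang-per-bucks and priorities are all preserved, whatever witness clause (3.a) or (3.b) certified stability of $x$ at any $(i', j')$ still certifies it for $x'$.

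\textbf{The main obstacle} is the bookkeeping around buyer $i$ itself and the quantity $r_i = \min_{j': x_{i,j'}>0} \frac{v_{i,j'}}{p_{i,j'}}$. After removing $j$ from $i$'s bundle, $r_i$ may increase if $j$ was the minimum-bang item of $i$, which could threaten a clause-(3.a) witness that relied on some surviving item $j'$ having bang-per-buck $\le r_i$. I would handle this by noting that $i$'s budget is reduced by exactly the amount it spent on $j$, so $i$ still spends all its (reduced) budget if it did before; and any item $j'$ with $x_{i,j'} < 1$ that $i$ wanted more of must, by the original stability of $x$, have already been fully allocated to higher-or-equal-priority buyers (clause 3.b), a condition untouched by the transformation. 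The subtle direction is ensuring no buyer who was blocked only by $i$'s presence on the now-removed or rescaled $j$ suddenly acquires a profitable deviation; but since every other buyer's allocation, payment, and the fullness of item $j$ are all preserved, no new slack is created. I expect the cleanest writeup to treat clauses (1), (2), (3.a), (3.b) as a short checklist, invoking the bang-per-buck invariance once and then dispatching each clause in one or two lines.
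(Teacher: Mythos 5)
The paper states Lemma~\ref{lemma:delete} without proof, so there is no ``paper route'' to compare against; your direct clause-by-clause verification is the natural intended argument, and its architecture is sound: bang-per-buck ratios, per-buyer payments on item $j$, budget-depletion status, and the priority relation $\geq_j$ are all invariant under the rescaling, so whichever witness (3.a) or (3.b) certified stability at a pair $(i',j')$ in $x$ transfers to $x'$. Two local points should be corrected in a final writeup. First, your step (i) is false as stated: $\sum_{i'\neq i} x_{i',j}/(1-x_{i,j})=1$ only if item $j$ was fully allocated in $x$, whereas the definition of an allocation only requires $\sum_i x_{i,j}\leq 1$. What you actually need (and what the witness-transfer argument uses) is that the transformation preserves \emph{feasibility} ($\sum_{i'} x'_{i',j}\leq 1$, which follows from $\sum_{i'\neq i} x_{i',j}\leq 1-x_{i,j}$) and preserves the \emph{fully-allocated status} of item $j$ in both directions; when $j$ is not fully allocated, every buyer's witness at $j$ must have been (3.a), whose ingredients are preserved. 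Second, your ``main obstacle'' points in the benign direction: removing $j$ from $i$'s bundle can only weakly \emph{increase} $r_i$, and clause (3.a) asks that unpurchased items have bang-per-buck \emph{at most} $r_i$, so an increase in $r_i$ only makes (3.a) easier to satisfy, never threatens it. The only genuinely delicate pair is $(i,j)$ itself in case~2 when $\mathrm{bpb}(j)>r_i$: there the original witness must have been (3.b), and its fullness and priority components survive the deletion of $i$ from the holder set. With these repairs the proof goes through.
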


\begin{definition}
	Given an allocation $x$, a \textbf{suballocation}  $y$ of $x$ is an allocation satisfying
	\begin{displaymath}
		y_{i, j} \leq x_{i, j}, \forall i, j.
	\end{displaymath}
\end{definition}

\subsection{Proof of Theorem \ref{thm:unique}}
\begin{proof}
	Suppose by contradiction that $x$ and $y$ are two distinct stable allocations.
	By lemma \ref{lemma:delete}, we can delete the common part of $x$ and $y$ (i.e., component-wise minimum of $x$ and $y$) from both to get two stable allocations in a new game.
	Therefore we can focus on $x$ and $y$ that satisfy $x_{i, j} y_{i, j} = 0$.
	
	Consider any buyer $i_0$ and item $j_0$ with $x_{i_0, j_0} > 0$.
	Then $y_{i_0, j_0} = 0$, which can only happen in two cases:
	\begin{enumerate}
		\item $\sum_{i} y_{i, j_0} = 1$ and $p_{i, j_0} > p_{i_0, j_0}$ for every item $j$ with $y_{i, j_0} > 0$.
		In this case, let $z^1$ be any suballocation of $y$ satisfying
		\begin{align*}
			\sum_{i} z^1_{i, j_0} & = x_{i_0, j_0}, \\
			\sum_i z^1_{i, j} & = 0, \forall j \neq j_0.
		\end{align*}
		\item $\sum_{j} p_{i_0, j} y_{i_0, j} = B_{i_0}$ and $j >_{i_0} j_0$ for every item $j$ with $y_{i_0, j} > 0$.
		Let $j$ be any item with $y_{i_0, j} > 0$, then $x_{i_0, j} = 0$ and $i_0$ strictly prefers $j$ to $j_0$.
		Recall that $i_0$ spends some money on $j_0$ in $x$, which means that $j$ is fully allocated in $x$ and for each buyer $i$ with $x_{i, j} > 0$, we have $p_{i, j} > p_{i_0, j}$.
		Let $z^1$ be any suballocation of $x$ satisfying
		\begin{align*}
			\sum_i z^1_{i, j} = y_{i_0, j}, \forall j.
		\end{align*}
	\end{enumerate}
	
	Depending on the situation, $z^1$ may be a suballocation of either $x$ or $y$.
	For the convenience of presentation, we rename $\{x, y\}$ to $\{x^0, y^0\}$ such that $z^1$ is a suballocation of $x^0$.
	In either case, if $z^1_{i, j} > 0$, a fraction of $y^0_{i_0, j}$ of item $j$ is available to $i$ in $y^0$.
	Let $w^1$ be the suballocation of $y^0$ where for each buyer $i$,
	\begin{align*}
		w^1_{i, j} & = y^0_{i, j}, \forall j, \text{ if there exists $j'$ such that $z^1_{i, j'} > 0$}, \\
		w^1_{i, j} & = 0,  \forall j, \text{ if for all $j'$, $z^1_{i, j'} = 0$}.
	\end{align*}
	In words, 
	$w^1$ is a suballocation of $y^0$ that includes the allocation of every buyer $i$ who spends some money in $z^1$.
	For all these buyers, there is some item that is available but not allocated to them in $y^0$, which implies that all their budgets are depleted.
	Therefore we have
	\begin{displaymath}
		\sum_{i, j} p_{i, j} w^1_{i, j} \geq \sum_{i, j} p_{i, j} z^1_{i, j}.
	\end{displaymath}
	
	Suppose $i$ and $j$ satisfy $z^1_{i, j} > 0$, then as analyzed above,
	buyer $i$ spends all its budget on items that it strictly prefers to $j$ in $y^0$.
	However, in $x^0$, buyer $i$ spends no money on any item $j'$ with $y^0_{i, j'} > 0$, which implies that $j'$ is fully allocated in $x^0$ and $p_{i, j'} < p_{i', j'}$ for any buyer $i'$ with $x^0_{i', j'} > 0$.
	Therefore there exists $z^2$ that is a suballocation of $x^0$ satisfying
	\begin{align*}
		\sum_i z^2_{i, j} &= \sum_i w^1_{i, j}, \forall j,
	\end{align*}
	and we have
	\begin{align*}
		\sum_{i, j} p_{i, j} z^2_{i, j} 
		=  \sum_j \sum_{i: x^0_{i, j} > 0} p_{i, j} z^2_{i, j} 
		>   \sum_j \sum_{i: w^1_{i, j} > 0} p_{i, j} w^1_{i, j} 
		= 
		\sum_{i, j} p_{i, j} w^1_{i, j}.
	\end{align*}
	
	By construction, for each $i$ and $j$ with $z^2_{i, j} > 0$, there exists $i'$ such that $y^0_{i, j} = 0$, $y^0_{i', j} > 0$ and $p_{i, j} > p_{i', j}$ (for $z^1$, $i' = i_0$).
	So we can repeat the above analysis to get an infinite sequence $z^3, z^4, \dots$ of suballocations of $x^0$ with strictly increasing total payments.
	The payment increases at a rate lowered bounded by the strictly positive constant
	\begin{displaymath}
		\min_{i, i', j: p_{i, j} > p_{i', j}} \frac{p_{i, j}}{p_{i', j}},
	\end{displaymath}
	which is impossible since the total revenue must be finitely bounded.
	
	%
	%
\end{proof}

\subsection{Proof of Theorem \ref{thm:apx_complete}}
\label{app:proof_apx_hard}

\begin{proof}
	We will give an L-reduction from the problem MAX-3SAT-3, a restricted version of the well-known 3SAT problem.
	An instance of 3SAT consists of $n$ variables $\{v_i\}$\footnote{Do not be confused with valuation $v_{i, j}$. We will not refer to valuation in this proof (we only need preferences, which can be easily constructed with suitably chosen valuations), so we stick to the traditional notation $v$ for variables and there should be no ambiguity.} and $m$ clause $\{c_j\}$ of the form $c_j = (l_1 \lor l_2 \lor l_3)$ where $l_k \in \{\pm v_i\}, k = 1, 2, 3$ is a (positive/negative) literal of some variable.
	The optimization problem MAX-3SAT is to find the assignment maximizing the number of satisfied clauses.
	MAX-3SAT-3 is a subset of MAX-3SAT where each variable appears at most 3 times.
	MAX-3SAT-3 is known to be APX-complete \cite{ausiello1999complexity}.
	
	\begin{table*}[t]
		\centering
		\begin{tabular}{c c cc cc cc}
			\toprule
			&&  \multicolumn{2}{c}{choice items} & \multicolumn{2}{c}{clause-literal items} & \multicolumn{2}{c}{buffer items} \\
			&& $g^{+v_i}$ & $g^{-v_i}$ & $g^{c_k(+v_i)}, k = 1, 2, 3$ & $g^{c_k(-v_i)}, k = 1, 2, 3$ &  $g_{\text{buffer}}^{+v_i}$ & $g_{\text{buffer}}^{-v_i}$ \\
			\midrule
			choice	 buyer &   $b^{v_i}$ & 13 & 13  \\
			literal buyer & $b^{+v_i}$ & 12 & & 2  & & 6   \\
			literal buyer &  $b^{-v_i}$ &  & 12 & & 2 & & 6 \\
			\bottomrule
		\end{tabular}
		\caption{Prices in the variable gadget dedicated to $v_i$.}
		\label{tab:variable_gadget_apx}
	\end{table*}
	
	For an arbitrary MAX-3SAT-3 instance, construct a market as follows.
	For each variable $v_i$, if it appears at least once, create a gadget corresponding to it as in Table \ref{tab:variable_gadget_apx} which involves 3 buyers and 10 items.
	Here $g^{c_k(l)} (k = 1, 2, 3, l = \pm v_i)$ is associated with the clauses where $l$ appears (the numbering is arbitrary but fixed).
	If $l$ appears in less than 3 clauses, simply associate the rest items to nothing (they are \textit{not} deleted).
	Note that this notation is solely used to name an item: $g^{c_{k_1}(l_1)}$ and $g^{c_{k_2}(l_2)}$ with $l_1 \neq l_2$ or $k_1 \neq k_2$ are two distinct items, even if they are both associated with the same clause.
	For each clause $c_j$, create a clause buyer $b^{c_j}$.
	If literal $l$ appears in $c_j$ and the item $g^{c_k(l)}$ is associated with $c_j$, the price of item $g^{c_k(l)}$ for buyer $b^{c_j}$ is set to $1$.
	Other prices are set to zero.
	A proper valuation can be chosen to satisfy: (1) each choice buyer is indifferent between two choice items; (2) each literal buyer strictly prefers the choice item and clause-literal items to buffer items; (3) clause buyers have no preferences; (4) buyers are not interested in items with a personalized price zero.
	Each choice buyer has a budget of $13$, each literal buyer has $12$ and each clause buyer has $1$.
	The total budget of all buyers is $37n + m$ (here $n$ is the number of variables appearing at least once).
	In this proof the allocation of item $j$ for buyer $i$ is denoted as $x(i, j)$.
	
	Given a stable allocation of the market, construct an assignment by setting to TRUE those literals whose literal buyer spends strictly more than 6 on the corresponding choice item.
	If two literal buyers both spend 6 on their choice items, set the variable to arbitrary value.
	The assignment is feasible since choice buyer $b^{v_i}$ will always spend all its budget on its choice items, and literal buyers $b^{+v_i}$ and $b^{-v_i}$ combined can only spend at most 12 on these choice items.
	
	Let OPT(A) and OPT(B) be the optimal objective value of the MAX-3SAT-3 instance and the optimal revenue of the constructed market respectively.
	Also suppose that the stable allocation has a total revenue $r$, and the constructed assignment satisfies $m'$ clauses.
	
	The maximum revenue of a stable allocation is at most the total budget $37n + m \leq 38n$.
	For the MAX-3SAT-3 instance, at least $m/2 \geq n/6$ clauses can be satisfied in the optimal solution.
	So $\text{OPT(B)} \leq (38 \times 6) \cdot  \text{OPT(A)}$.
	
	In any stable allocation, the budgets of all choice buyers and literal buyers are always depleted.
	To see this, simply note that, in any stable allocation, choice items are always available to their corresponding choice buyers, and clause-literal items and buffer items are always available to their literal buyers.
	In any stable allocation, if clause buyer $b^{c_j}$ spends a non-zero amount of money, we have $x(b^{c_j}, g^{c_k(l)}) > 0$ for some clause-literal item $g^{c_k(l)}$ that is associated with $c_j$.
	Then literal buyer $b^l$ must spend strictly more than 6 on choice item $g^l$ (otherwise it should include $g^l$ entirely in its optimal bundle), which means $l$ evaluates TRUE and so does $c_j$.
	Hence we have $r - 37n \leq m'$, and thus
	\begin{displaymath}
		\text{OPT(A)} - m' \leq 
		\text{OPT(B)} - r,
	\end{displaymath}
	which concludes the L-reduction.
\end{proof}

\paragraph{Remarks.}
The construction in the proof shows that the problem is APX-hard even for the restricted case where (1) item’s priorities are natural and strict among buyers with non-zero valuations; (2) items can be grouped into two sets, and every buyer is indifferent among items in the same group (in the reduction, choice and clause-literal items constitute one group, buffer items constitute the other). Also note that market-wise hardness result naturally transfers to the problem for each seller.

If we change the prices of all choice items to 12 for all buyers, the pricing of the market becomes non-personalized.
If the priorities of all choice items are set to prioritize their corresponding choice buyers the most, and the budget of each choice buyer is changed to 12, the proof proceeds almost exactly.
This shows that it is also APX-hard to find the maximum revenue over all stable allocations in non-personalized pricing markets with arbitrary priorities (the first half of Theorem \ref{thm:apx_non_personalized_arbitrary}).
By prioritizing the ``TRUE'' literal buyer and de-prioritize the ``FALSE'' one for each pair of choice items, any assignment of the MAX-3SAT-3 problem can be implemented precisely in the constructed market as the essentially unique stable allocation.
Therefore it is APX-hard to find the maximum revenue over all stable allocations and all priorities in non-personalized pricing markets (the second half of Theorem \ref{thm:apx_non_personalized_arbitrary}).

%

%

\subsection{2-approximation}
\label{app:2approximation}

\begin{theorem} \label{thm:2_approximation}
	Let $x$ and $y$ be any two stable allocations. We have
	\begin{displaymath}
		\sum_{i, j} p_{i, j} x_{i, j} \leq 2 \sum_{i, j} p_{i, j} y_{i, j}.
	\end{displaymath}
\end{theorem}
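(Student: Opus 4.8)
The plan is to bound the revenue of $x$ by charging each dollar of revenue in $x$ to the dollars of revenue in $y$ so that every $y$-dollar is charged at most twice, which is the exact analogue of the factor-$2$ gap between two maximal matchings underlying Lemma~\ref{lemma:minimum_maximal_matching}. Writing $s_i(x)=\sum_j p_{i,j}x_{i,j}$ for the money buyer $i$ spends, the revenue is $\sum_{i,j}p_{i,j}x_{i,j}=\sum_i s_i(x)=:R(x)$, and likewise $R(y)$.

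First I would isolate the structural fact that plays the role of maximality. I claim that for every pair $(i,j)$ with $x_{i,j}>0$ (hence $v_{i,j}>0$ by stability), the allocation $y$ satisfies at least one of: (a) buyer $i$ depletes its budget in $y$, i.e.\ $s_i(y)=B_i$; or (b) item $j$ is fully sold in $y$, i.e.\ $\sum_{i'}y_{i',j}=1$. Indeed, if neither held, then a positive unsold fraction of $j$ would be available to $i$ in $y$ while $i$ still has spare budget, contradicting stability of $y$ for buyer $i$. This dichotomy lets me split the payments constituting $R(x)$ into a \emph{budget-blocked} part $P=\{(i,j):x_{i,j}>0,\ s_i(y)=B_i\}$ and an \emph{item-blocked} part $Q=\{(i,j):x_{i,j}>0,\ s_i(y)<B_i\}$, where for the latter item $j$ is necessarily fully sold in $y$.

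The budget-blocked part is immediate: grouping by buyer,
\[
\sum_{(i,j)\in P}p_{i,j}x_{i,j}\ \le\ \sum_{i:\,s_i(y)=B_i}s_i(x)\ \le\ \sum_{i:\,s_i(y)=B_i}B_i\ =\ \sum_{i:\,s_i(y)=B_i}s_i(y)\ \le\ R(y),
\]
and each $y$-dollar, viewed as the spending of its buyer, absorbs charge at most once. The remaining task, and the main obstacle, is to show that the item-blocked part $\sum_{(i,j)\in Q}p_{i,j}x_{i,j}$ is also at most $R(y)$, charging it onto the \emph{item} side of $y$'s revenue (the same $R(y)$ decomposed by item rather than by buyer), so that each $y$-dollar is charged at most once more and the factor $2$ results. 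The difficulty is that prices are personalized: an item $j$ fully sold in $y$ may collect there much less than it does in $x$, because in $x$ it is bought by a high-price buyer $i$ while in $y$ the same fraction is held by higher-priority buyers who were \emph{displaced} from $j$ in $x$ and pay less. Hence a naive per-item comparison $\sum_{i:(i,j)\in Q}p_{i,j}x_{i,j}\le \sum_{i'}p_{i',j}y_{i',j}$ can fail.

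To overcome this I would argue globally rather than item by item. For $(i,j)\in Q$ the buyer $i$ has spare budget in $y$ yet cannot obtain more of $j$, so the fraction of $j$ it holds in $x$ must be held in $y$ by weakly-higher-priority buyers who in turn hold strictly less of $j$ in $x$; tracing where those displaced buyers spend in $x$ versus $y$ produces an alternating chain that must terminate, by the maximality dichotomy above, at a buyer saturated in $y$ or an item fully sold in $y$. I would formalize this as a feasibility question for a transportation/flow network routing the item-blocked $x$-payments onto $y$-payments with total capacity $R(y)$, and verify the corresponding Hall-type condition using the dichotomy; equivalently, one decomposes $x-y$ into cycles and alternating paths and bounds the $x$-revenue against the $y$-revenue on each piece, with the factor $2$ arising only from paths whose endpoints are pinned by maximality. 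Reconciling the personalized prices inside these chains—ensuring that the money released by displaced higher-priority buyers dominates the payment being charged—is the step I expect to require the most care, and is precisely where the tightness of the constant $2$ originates.
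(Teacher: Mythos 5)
You correctly isolate the stability dichotomy (for every pair $(i,j)$ with $x_{i,j}>0$, either buyer $i$ depletes its budget in $y$ or item $j$ is fully sold in $y$), and your bound on the budget-blocked part is fine. But the item-blocked part --- which you yourself flag as ``the main obstacle'' and ``the step I expect to require the most care'' --- is never actually proved, and that is exactly where the content of the theorem lies, so the proposal has a genuine gap. Moreover, the alternating-chain/flow/Hall-type machinery you propose for that part is unnecessary, because the worry motivating it is misplaced. Suppose buyer $i$ has spare budget in $y$ and $x_{i,j}>0$ (hence $v_{i,j}>0$), and some buyer $i'$ holds a positive fraction of $j$ in $y$ with $p_{i',j}<p_{i,j}$. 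With priorities natural w.r.t.\ prices this gives $i>_j i'$, so that fraction of $j$ is \emph{available} to $i$ in $y$; since $i$ has leftover money and positive value for $j$, this contradicts stability of $y$ for $i$. Hence every unit of $j$ held in $y$ is at least as expensive as every unit bought in $x$ by a buyer with spare budget in $y$, and the ``naive'' per-item comparison does go through: the total quantity of $j$ such buyers hold in $x$ is at most $1=\sum_{i'}y_{i',j}$, so choosing a suballocation $z$ of $y$ matching that quantity yields $\sum_i p_{i,j}x_{i,j}\le\sum_{i'}p_{i',j}z_{i',j}\le\sum_{i'}p_{i',j}y_{i',j}$ for each such $j$, and summing over items gives the second $R(y)$. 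This availability argument is the missing idea; without it your transportation-network feasibility condition is unverified and the proof does not close.

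For reference, the paper's proof partitions by \emph{buyers} rather than by $(i,j)$ pairs: after deleting the common part of $x$ and $y$ via Lemma~\ref{lemma:delete}, it charges the $x$-spending of buyers with $s_i(x)>s_i(y)$ (who necessarily have spare budget in $y$) to the item side of $R(y)$ by exactly the suballocation argument above, and the $x$-spending of the remaining buyers to the buyer side of $R(y)$ via $s_i(x)\le s_i(y)$. Your partition (buyers with $s_i(y)=B_i$ versus $s_i(y)<B_i$) would also work once the item-side charge is justified, but as written the proposal establishes only half of the bound.
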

\begin{proof}
	By Lemma \ref{lemma:delete}, 
	we can delete the common part of $x$ and $y$ to get two stable allocations $x^0$ and $y^0$ in a new market where the corresponding items are removed.
	Note that $\sum_{i, j} p_{i, j} x^0_{i, j} \leq 2 \sum_{i, j} p_{i, j} y^0_{i, j}$ implies $\sum_{i, j} p_{i, j} x_{i, j} \leq 2 \sum_{i, j} p_{i, j} y_{i, j}$.
	Therefore we only need to focus on stable allocations $x$ and $y$ with the property that $x_{i, j} y_{i, j} = 0, \forall i, j$.
	
	Let $I_{x > y}$ be the set of buyers who spend strictly more in $x$ than in $y$, and $I_{x \leq y} = I \setminus I_{x > y}$.
	If $x_{i_0, j} > 0$ for some buyer $i_0 \in I_{x > y}$ and item $j$, we must have $\sum_i y_{i, j} = 1$, otherwise item $j$ is available for $i_0$ in $y$ but $i_0$ does not include it in its optimal bundle, which means $i_0$ depletes its budget in $y$, contradicting the fact that $i_0 \in I_{x > y}$.
	
	Due to the above property, there exists a suballocation $z$ of $y$ satisfying
	\begin{displaymath}
		\sum_{i} z_{i, j} = \sum_{i \in I_{x > y}} x_{i, j}, \forall j.
	\end{displaymath}
	Fix any item $j$, consider any buyer $i_1$ with $z_{i_1, j} > 0$ and any buyer $i_2 \in I_{x > y}$ with $x_{i_2, j} > 0$.
	If $p_{i_2, j} > p_{i_1, j}$, then item $j$ is available for $i_2$ in $y$ and $i_2$ does not include it in its optimal bundle.
	As argued above, this contradicts the fact that $i_2 \in I_{x > y}$.
	Hence we have the following inequality:
	\begin{align*}
		 \sum_{i, j} p_{i, j} z_{i, j} 
		= 
		\sum_{i} \sum_{j: z_{i, j} > 0} p_{i, j} z_{i, j} 
		\geq 
		\sum_{i \in I_{x > y}} \sum_{j: x_{i, j} > 0} p_{i, j} x_{i, j} 
		= 
		\sum_{i \in I_{x > y}, j} p_{i, j} x_{i, j}.
	\end{align*}
	
	On the other hand, by the definition of $I_{y \geq x}$,
	\begin{displaymath}
		\sum_{i \in I_{y \geq x}, j} p_{i, j} y_{i, j}
		\geq
		\sum_{i \in I_{y \geq x}, j} p_{i, j} x_{i, j}.
	\end{displaymath}
	
	Combining the above two inequalities gives the result we want.
\end{proof}

As a corollary, any algorithm to compute a stable allocation gives a 2-approximation to both the revenue-maximizing and minimizing problem.
The bound is tight as shown in the following example.

\begin{example}
	Consider a market with 2 buyers and 2 items, where $p_{1, 1} = p_{1, 2} = p_{2, 1} = 1$, $p_{2, 2} = 0$, $v_{1, 1} = v_{1, 2} = v_{2, 1} = 1$ and $v_{2, 2} = 0$.
	Then buyer 1 is indifferent between 2 items, but buyer 2 is only interested in item 1.
	Suppose both buyers have a budget of 1.
	Allocation $x$ with $x_{1, 1} = 1, x_{1, 2} = x_{2, 1} = x_{2, 2} = 0$ and $y$ with $y_{1, 2} = y_{2, 1} = 1, y_{1, 1} = y_{2, 2} = 0$ are both stable and satisfy
	\begin{displaymath}
		\sum_{i, j} p_{i, j} y_{i, j} = 2 \sum_{i, j} p_{i, j} x_{i, j}.
	\end{displaymath}
\end{example}

\subsection{Bang-per-buck Pricing}
\label{app:bang_per_buck_pricing}

In online advertising markets where the first price auction is used, if buyers (advertisers) are restricted to adopt auto-bidders using the multiplicative pacing strategy, they are effectively bang-per-buck priced \cite{conitzer2022pacing}.
This constitutes an example of bang-per-buck pricing in large real-world markets.
Theoretically, we have the following observation.

\begin{proposition}
	Suppose that buyer $i$ spends a non-zero amount of money $s$ on seller $k$ with an average bang-per-buck $r = \frac{1}{s} \sum_{j \in J_k} v_{i, j} x_{i, j}$ in a stable allocation $x$.
	We change the game a little bit by dividing each item $j \in J_k$ sold partially to buyer $i$ into two items, one sold fully to $i$ and one sold to other buyers according to $x_{i, j}$.
	Denote the set of items sold (now fully) to buyer $i$ as $J'_k \subseteq J_k$.
	
	If seller $k$ changes its pricing such that $p_{i, j} = \frac{v_{i, j}}{r}$ for all items in $J'_k$, and raises prices of items in $J_k \setminus J'_k$ high enough such that they are not available for buyer $i$, then buyer $i$ still gets an optimal bundle.
	If we also push prices of items in $J'_k$ high enough for buyers other than $i$, $x$ would still be stable.
\end{proposition}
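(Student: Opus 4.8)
The plan is to verify the two stability claims by a direct appeal to the definition of stability, organised around the single quantity $r$, the average bang-per-buck of buyer $i$ on seller $k$. First I would make the splitting precise: for each partially-sold item $j \in J_k$ (with $0 < x_{i,j} < 1$) the new ``$i$-item'' represents the fraction $x_{i,j}$ of the old item, so in the split market its value to $i$ is $v_{i,j} x_{i,j}$, exactly the renormalization used in case 2 of Lemma \ref{lemma:delete}. Summing over $J'_k$ gives total value $\sum_{j \in J'_k} v_{i,j} = \sum_{j \in J_k} v_{i,j} x_{i,j} = r s$, and since the new price of every $j \in J'_k$ is $v_{i,j}/r$, buyer $i$'s total payment on seller $k$ is $(rs)/r = s$. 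The re-pricing therefore leaves both $i$'s spending and $i$'s quantities on seller $k$ unchanged, so the budgets and the allocation itself are untouched and I only need to re-examine stability under the new prices.

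The key numerical fact I would isolate is $r \ge r_i$, where $r_i = \min_{j: x_{i,j}>0} v_{i,j}/p_{i,j}$ is the minimum bang-per-buck of $i$'s original bundle: every item bought from seller $k$ has bang-per-buck at least $r_i$, hence so does their money-weighted average $r$. After the re-pricing every item of $J'_k$ has bang-per-buck exactly $r$, while $i$'s purchases from the other seller are unchanged, so the new minimum $r_i'$ of $i$'s bundle still satisfies $r_i' \ge r_i$. For the first claim I would then run through condition (3) of stability for $i$: for any available item $j'$ with $x_{i,j'}<1$, original stability gives either (3.a) $v_{i,j'}/p_{i,j'} \le r_i$ with $i$'s budget depleted, or (3.b) $j'$ fully allocated with $i$ of weakly lowest priority. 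Case (3.a) survives because $r_i' \ge r_i$ keeps the bang-per-buck bound and $i$'s total spending (hence its depletion status) is unchanged; case (3.b) survives for other-seller items because their allocation and priorities are untouched, so $j'$ stays unavailable to $i$. The raised prices on $J_k \setminus J'_k$ only lower their bang-per-buck, which is harmless: when $i$'s budget is depleted they fall under (3.a), and when $i$ has slack the original $x$-analysis already places their unbought fractions with weakly higher-priority buyers, so they remain unavailable. This certifies that $i$'s bundle is still optimal in hindsight.

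For the second claim I would argue stability for the remaining buyers. Their purchases, prices, and reachable items are unchanged except on seller $k$, where for these buyers prices have only increased (on $J'_k$) or stayed fixed (on $J_k \setminus J'_k$); a price increase can never turn an unattractive item attractive, so no other buyer finds a new incentivized trade within its own bundle. The genuinely delicate point — and the step I expect to be the main obstacle — is showing that pushing the prices of $J'_k$ high for the other buyers does not let some $i'$ seize the portion $j^{(i)}$ that $i$ now holds in full: under natural priorities a higher price flips the priority and makes $j^{(i)}$ \emph{available} to $i'$, whereupon the ``trade whenever possible'' rule would force a purchase if $i'$ has slack. I would resolve this by returning to the original allocation: since $x$ was stable for $i'$ and $i'$ held none of the $i$-portion, either $i'$ had already depleted its budget (so it cannot be forced to buy, and the tiny bang-per-buck places $j^{(i)}$ under (3.a)), or $i$ had weakly higher priority on $j$ (case (3.b)), a relation I would preserve so that $j^{(i)}$ stays unavailable to $i'$. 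Every remaining buyer then still satisfies (3.a) or (3.b), and $x$ remains stable. The crux of the argument is thus not the bang-per-buck bookkeeping but reconciling the price increases against the availability/priority structure and the forced-trade rule for buyers other than $i$.
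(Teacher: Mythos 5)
Your proposal is correct, and it supplies detail the paper deliberately omits: the paper's ``proof'' is the single remark that ``the proposition can be proved using a standard exchange argument,'' whereas you verify the stability conditions directly. The two routes rest on the same key fact, which you isolate cleanly: the money-weighted average $r$ of the bang-per-bucks of $i$'s purchases from seller $k$ satisfies $r \ge r_i$, so equalizing all of $J'_k$ at bang-per-buck exactly $r$ keeps every item of $i$'s bundle at or above the (weakly increased) threshold $r_i'$, while items whose prices are pushed up fall under condition (3.a) when $i$'s budget is depleted and under (3.b) otherwise. Your dichotomy for the other buyers --- either $i'$ already depleted its budget, or original stability gave $i \ge_j i'$ via (3.b), a relation you preserve --- is exactly the right case split. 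One small simplification: the worry that natural priorities would \emph{force} item $j$ to become available to $i'$ after the price increase is unfounded, since the paper's naturalness condition is one-directional ($p_{i,j} \ge p_{i',j}$ implies $i \ge_j i'$, not conversely) and therefore permits ties; the seller can raise $p_{i',j}$ while keeping $i \ge_j i'$, so (3.b) survives without any delicate reconciliation. With that observation your ``main obstacle'' dissolves, and the rest of your bookkeeping (spending on seller $k$ stays equal to $s$, quantities unchanged, other buyers' bundles and thresholds $r_{i'}$ untouched) is exactly what the paper's one-line appeal to an exchange argument is standing in for.
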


The proposition can be proved using a standard exchange argument.
This is not a deviation analysis, and actually it is a rather weak observation.
But it at least provides some confidence for us to restrict each seller's pricing space to equalizing bang-per-buck ratios of all items for each buyer.
Note that bang-per-buck pricing cannot bring us the existence of pure Nash equilibrium: in Table \ref{tab:no_pne} each seller owns a single item (thus naturally bang-per-buck priced).

\subsection{Analysis of Example \ref{example:duopoly}}

\begin{proof}[Proof of Proposition \ref{proposition:duopoly_example_ce}]
	The most dangerous deviation for seller 1 is to lower $a_2$ (to any number) and raise $a_1$ to $p > 1$.
	If item 3 or item 4 is not fully allocated, buyer 1 will spend all money on seller 2, and seller 1 can only sell to buyer 2 to earn at most $2s + t < 2s + 2t$.
	This also implies item 3 will be fully sold to buyer 1.
	Item 1 and 2 must be fully allocated since they are the most preferred items of buyer 2 but the total price is no larger than 2.
	At stable allocation, buyer 1 and 2 will both deplete budgets, since the items for which they have top priorities have a total price no less than 2.
	
	As a result, the stable allocation must satisfy
	\begin{align*}
		2 t p x_{1, 2} + 2 (1 - t) p + (1 - s) p x_{1, 4}  & = 2
		\\
		2s + t(1 - x_{1, 2}) + 2(1 - s)(1 - x_{1, 4}) & = 2.
	\end{align*}
	And the revenue of seller 1 is
	\begin{displaymath}
		R_1 =
		2s + 4 - \bigparen{\frac{8}{3} - 2t} p - \frac{4}{3} \cdot \frac{1}{p}.
	\end{displaymath}
	With $t < \frac{2}{3}$, $R_1$ is maximized at $p = 1$.
	By symmetry, other combinations of deviation are not profitable, as well.
\end{proof}

\begin{proof}[Proof sketch of Proposition \ref{proposition:non_pareto_ne}]
	The proof is similar to the previous one by solving the system of equations regarding market clearance and budget depletion.
\end{proof}

We hypothesize the following result also holds:
if $s > 2/3$ and $t < 1/3$, the following bang-per-buck profile forms a Nash equilibrium:
\begin{align*}
	& a_2 = 
	\frac{2}{3} \cdot \frac{(1 + t)^2 + (2 - s)^2}{(2 - s)(1 + t)^2}
	, \\
	& a_1 =  b_2 = 1, \\
	& b_1 = 
	\frac{2}{3} \cdot \frac{(1 + t)^2 + (2 - s)^2}{(2 - s)^2 (1 + t)}.
\end{align*}
The numbers are derived from \textit{assuming} that the market clears and budgets are depleted, but we do not verify whether this presumption always holds.





\end{document}